\newif\iffinal
\newif\ifmarek
\else\usepackage[notref,notcite]{showkeys}\fi
\newcommand{\mathscr}{\mathcal}}
\DeclareFontFamily{OT1}{eusb}{} \DeclareFontShape{OT1}{eusb}{m}{n}
{<5> <6> <7> <8> <9> <10> <11> <12> <14.4> eusb10}{}
\DeclareMathAlphabet{\eusb}{OT1}{eusb}{m}{n}
\DeclareFontFamily{OT1}{eusm}{} \DeclareFontShape{OT1}{eusm}{m}{n}
{<5> <6> <7> <8> <9> <10> <11> <12> <14.4> eusm10}{}
\DeclareMathAlphabet{\eusm}{OT1}{eusm}{m}{n}
\DeclareFontFamily{OT1}{eufm}{} \DeclareFontShape{OT1}{eufm}{m}{n}
{<5> <6> <7> <8> <9> <10> <11> <12> <14.4> eufm10}{}
\DeclareMathAlphabet{\mathfrak}{OT1}{eufm}{m}{n}
\DeclareFontFamily{OT1}{fraktura}{}
\DeclareFontShape{OT1}{fraktura}{m}{n} {<5> <6> <7> <8> <9> <10>
<11> <12> <13> <14.4> [1.1] eufm10}{}
\DeclareMathAlphabet{\fraktura}{OT1}{fraktura}{m}{n}
\DeclareFontFamily{OT1}{cmfi}{} \DeclareFontShape{OT1}{cmfi}{m}{n}
{<5> <6> <7> <8> <9> <10> <11> <12> <13> <14.4> [0.9] cmfi10}{}
\DeclareMathAlphabet{\cmfi}{OT1}{cmfi}{b}{n}
\DeclareFontFamily{OT1}{cmss}{} \DeclareFontShape{OT1}{cmss}{m}{n}
{<5> <6> <7> <8> <9> <10> <11> <12> <13> <14.4> cmss10}{}
\DeclareMathAlphabet{\cmss}{OT1}{cmss}{m}{n}
\newtheoremstyle{thm}{1.5ex}{1.5ex}{\itshape\rmfamily}{}
{\bfseries\rmfamily}{}{2ex}{}
\newtheoremstyle{def}{1.5ex}{1.5ex}{\slshape\rmfamily}{}
{\bfseries\rmfamily}{}{2ex}{}
\newtheoremstyle{rem}{1.3ex}{1.3ex}{\rmfamily}{}
{\itshape}
{} {1.5ex}{}
\theoremstyle{thm}
\newtheorem{theorem}{Theorem}[section]
\newtheorem{lemma}[theorem]{Lemma}
\newtheorem{corollary}[theorem]{Corollary}
\theoremstyle{def}
\theoremstyle{rem}
\numberwithin{equation}{section}
\renewcommand{\subsection}{\secdef\subsct\sbsect}
\newcommand{\subsct}[2][default]{\refstepcounter{subsection}
\addcontentsline{toc}{subsection}
{{\tocsection{\!\!}{\hspace{1.2em}\thesubsection}{\!\!\!\!#1\dotfill}}{}}
\nopagebreak\vspace{0.45\baselineskip} {\flushleft\bf
\thesection.\arabic{subsection}~\bf #1.~}
\\*[3mm]\noindent
\nopagebreak}
\newcommand{\sbsect}[1]{\vspace{0.1cm}\noindent
\textbf{#1.~}\vspace{0.1cm}}
\renewcommand{\subsubsection}{%
\secdef \subsubsect\sbsbsect}
\newcommand{\subsubsect}[2][default]{%
\refstepcounter{subsubsection} 
\addcontentsline{toc}{subsubsection}{{\tocsection{\!\!}
{\hspace{3.05em}\thesubsubsection}{\!\!\!\!#1\dotfill}}{}}
\nopagebreak
\vspace{0.15\baselineskip} \nopagebreak {\flushleft\rmfamily
\itshape\arabic{section}.\arabic{subsection}.\arabic{subsubsection}
\ \rmfamily #1\/.}\ }
\newcommand{\sbsbsect}[1]{\vspace{0.1cm}\noindent
\rmfamily \itshape
\arabic{section}.\arabic{subsection}.\arabic{subsubsection} \
\sffamily #1\/.\ }
\newcommand{\scrC}{\mathscr{C}}
\newcommand{\scrF}{\mathscr{F}}
\title[]
{Pricing  European Options under Stochastic Volatility Models: \\Case of five-Parameter Variance-Gamma Process}
\author [] {A.H.Nzokem$^1$}
\thanks {$^1$ hilaire77@gmail.com}
\begin{document}

\maketitle
\begin{abstract}

\noindent
The paper builds a Variance-Gamma (VG) model with five parameters: location ($\mu$), symmetry ($\delta$), volatility ($\sigma$), shape ($\alpha$), and scale ($\theta$); and studies its application to the pricing of European options. The results of our analysis show that the five-parameter VG model is a stochastic volatility model with a $\Gamma(\alpha, \theta)$ Ornstein--Uhlenbeck type process; the associated L\'evy density of the VG model is a KoBoL family of order $\nu=0$, intensity $\alpha$, and steepness parameters $\frac{\delta}{\sigma^2} - \sqrt{\frac{\delta^2}{\sigma^4}+\frac{2}{\theta \sigma^2}}$ and $\frac{\delta}{\sigma^2}+ \sqrt{\frac{\delta^2}{\sigma^4}+\frac{2}{\theta \sigma^2}}$; and the VG process converges asymptotically in distribution to a L\'evy process driven by a normal distribution with mean $(\mu + \alpha \theta \delta)$ and variance $\alpha (\theta^2\delta^2 + \sigma^2\theta)$. The data used for empirical analysis were obtained by fitting the five-parameter Variance-Gamma (VG) model to the underlying distribution of the daily SPY ETF data. 
 Regarding the application of the five-parameter VG model, the twelve-point rule Composite Newton--Cotes Quadrature and Fractional Fast Fourier (FRFT) algorithms were implemented to compute the European option price. Compared to the Black--Scholes (BS) model, empirical evidence shows that the VG option price is underpriced for out-of-the-money (OTM) options and overpriced for in-the-money (ITM) options. Both models produce almost the same option pricing results for deep out-of-the-money (OTM) and deep-in-the-money (ITM) options..\\

\noindent
\keywords {stochastic volatility, L\'evy process, Ornstein-Uhlenbeck process, infinitely divisible distributions, Variance-Gamma (VG) model,   function characteristic,  Esscher transform}
\end{abstract}

 \section {Introduction}
\noindent
Black-Scholes (BS) model \cite{black1973} is considered the cornerstone of option pricing theory. The model relies on the fundamental assumption that the asset returns have a normal distribution with a known mean and variance. However, based on empirical studies, the Black-Scholes (BS) model is inconsistent with a set of well-established stylized features \cite{Cont_2001}. Due to the subsequent development of the option pricing theory, a new class of models has emerged in the literature to address the stylized characteristics of the markets. The probabilistic property of infinity divisibility is the main characteristic of these new models, and they belong to the family of Levy processes \cite{kyprianou2014fluctuations}. \\
The new class of models can be divided into two subclasses of Jump-Diffusion models and Stochastic Volatility models. The Jump-Diffusion process is modeled as an independent  Brownian motion plus a Compound Poisson Process. The popular models in the literature are Merton's jump-diffusion model \cite{matsuda2004introduction} and Kou's jump-diffusion model \cite{kou2002jump}, where logarithmic jump size follows a normal distribution and an asymmetric double exponential distribution respectively. The stochastic volatility (SV) model is another extension of the standard geometric Brownian motion (GBM) model, where the observed volatility is modeled as a stochastic process. In a stochastic volatility framework \cite{alhagyan2020discussions}, the constant volatility ($\sigma$) in a standard geometric Brownian motion (GBM) model is replaced by a deterministic function of a stochastic process ($\sigma(Y_{t})$) where $Y_{t} $ represents the solution of the stochastic differential equation (SDE). We have two main types of SV models in the literature:  Diffusion based SV models and non-Gaussian Ornstein-Uhlenbeck-based SV models.  In the popular diffusion-based SV models, $Y_{t}$ follows a Cox-Ingersoll-Ross(CIR) process \cite{heston1993closed}  or a Log-normal process \cite{hull1987pricing} and the deterministic function is a squared root of the stochastic process ($\sigma(Y_{t})=\sqrt{Y_{t}}$). The non-Gaussian Ornstein-Uhlenbeck-based SV models have been introduced and thoroughly studied in \cite{barndorff2002financial,barndorff2003integrated,barndorff1998some,barndorff1999non}. The SV model with the Ornstein-Uhlenbeck type process is mathematically tractable and has many appealing features.\\
 From the perspective of derivative asset analysis, we will build a five-parameter VG model as a Stochastic volatility model with $\Gamma(\alpha, \theta)$ Ornstein-Uhlenbeck type process. While there is a great number of studies on option pricing under the VG Model, most of the VG Model in the literature has three parameters \cite{li2020pricing, mozumder2015revisiting, madan1998variance,adeosun2016variance}, certainly due to technical issues of fitting a high parametric model to the marginal distribution of asset returns.  The amount of literature considering the VG Model with five parameters is rather limited. Using the five-parameter Variance-Gamma model as an underlying distribution of the European option will allow controlling both the excess kurtosis and the skewness of the underlying data. In the option pricing theory, the challenge is often the existence of the Equivalent Martingale Measure (EMM) and whether it preserves the structure of the Variance-Gamma measure. The Variance-Gamma (VG) process is not a Gaussian process, and the market is incomplete; therefore, the Equivalent Martingale Measure is not unique. The Esscher transform of the historical measure is considered optimal with respect to some optimization criterion \cite{boyarchenko2002non}. The Esscher martingale measure was shown \cite{andrii2020}  to coincide with the minimal entropy martingale measure for L\'evy processes.\\
The remainder of this paper is organized as follows. Section 2 is devoted to building a five-parameter VG process and presenting parameter estimations and simulations of the VG process. Section 3 investigates the L\'evy density and the asymptotic distribution of the VG process. And section 4 extends the Black-Scholes framework, provides the integral representation for the option price, and computes the VG option price numerically.

\section{Variance - Gamma Process: Stochastic Volatility Model}
\subsection{L\'evy Framework and Asset Pricing }
\noindent
Let $(\Omega, \scrF, \left\{ \scrF_{t} \right \}_{t \geq 0},  \mathbf{P})$ be a filtered probability space, with $\scrF =\vee_{t>0}\scrF_{t}$ and $\left\{ \scrF_{t} \right \}_{t \geq 0}$ is a filtration.  $\scrF_{t}$ is a $\sigma$-algebra included in $\scrF$ and for $\tau<t$, $\scrF_{\tau}\subseteq \scrF_{t}$.\\
\noindent
A stochastic process $Y=\{Y_{t}\}_{t\geq0}$ is a L\'evy process, if it has the following properties\\
(L1): $Y_{0} = 0$ a.s;\\
(L2): $Y_{t }$ has independent increments, that is,  for $0<t_{1} < t_{2} <\dots< t_{n}$, the random variables $Y_{t_{1} }$, $Y_{t_{1} } - Y_{t_{2} }$, \dots, $Y_{t_{n} } - Y_{t_{n-1} }$  are independent;\\
(L3): $Y_{t }$ has stationary increments, that is, 
 for any $t_{1} < t_{2} < +\infty$, the probability distribution  of $Y_{t_{1} } - Y_{t_{2} }$   depends only on $t_{1}  - t_{2} $ ;\\
(L4): $Y_{t }$ is stochastically continuous: for any $t$ and $\epsilon>0$, $\lim_{s \to t} P\left(|Y_{s} - Y_{t}|>\epsilon \right)=0$;\\
(L5): $c\grave{a}dl\grave{a}g$ paths, that is, $t \mapsto Y_{t}$ is a.s right continuous with left limits\\

\noindent
Given a L\'evy process $Y=\{Y_{t}\}_{t\geq0}$  on the filtered probability space $(\Omega, \scrF, \left\{ \scrF_{t} \right \}_{t \geq 0}, \mathbf{P})$, we define the asset value process $S=\{S_{t}\}_{t\geq0}$ such as $S_{t}=S_{0}e^{Y_{t}}$.

 \begin{theorem}\label{lem1} (L\'evy-Khintchine representation) \ \\
 Let $Y=\{Y_{t}\}_{t\geq0}$ be a L\'evy process on $\mathbb{R}$. Then the characteristic exponent admits the following representation.
 \begin{align}
\varphi(\xi)= -Log\left(Ee^{i Y_{1}\xi}\right)= - i\gamma \xi + \frac{1}{2} \sigma^{2}\xi^{2} + \int_{0}^{t}\left(e^{i \xi y} -1 - y\xi 1_{|y|\leq 1}\right)\Pi(dy)
\label {eq:l01}
  \end{align}
where $\gamma \in \mathbb{R}$,  $\sigma \geq 0$ and $\Pi$  is a $\sigma$-finite measure called the Lévy measure of Y, satisfying the property
$$\int_{-\infty}^{+\infty}Min(1,|y|^{2})\Pi(dy) <+\infty$$
\end{theorem}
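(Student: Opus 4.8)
The plan is to reduce the statement to the classical L\'evy--Khintchine formula for infinitely divisible laws and then to establish that formula by a compactness argument on the measures built from the $n$-th roots of the characteristic function. First, by (L1)--(L3), for every $n\in\N$ one may write $Y_1=\sum_{k=1}^n\bigl(Y_{k/n}-Y_{(k-1)/n}\bigr)$ as a sum of $n$ i.i.d.\ summands, so the law of $Y_1$ is infinitely divisible. Writing $\phi_t(\xi)=Ee^{i\xi Y_t}$, independence and stationarity of the increments give $\phi_{s+t}(\xi)=\phi_s(\xi)\phi_t(\xi)$, while stochastic continuity (L4) makes $t\mapsto\phi_t(\xi)$ continuous; a short argument using infinite divisibility shows $\phi_t(\xi)$ never vanishes, so $\phi_t(\xi)=e^{-t\varphi(\xi)}$ with $\varphi$ continuous and $\varphi(0)=0$. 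It therefore suffices to show that the characteristic exponent $\varphi$ of an arbitrary infinitely divisible law on $\R$ has the asserted form with some $\gamma\in\R$, $\sigma\ge0$ and a measure $\Pi$ satisfying $\int_{\R}\min(1,|y|^2)\,\Pi(\textd y)<\infty$.

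\emph{Approximating measures.} Write $\phi=\phi_1=(\phi_{1/n})^n$ and let $\mu_n$ be the probability law with characteristic function $\phi_{1/n}$. Since $\phi_{1/n}\to1$ uniformly on compacts and $\phi$ never vanishes, $-\varphi(\xi)=\log\phi(\xi)=\lim_{n\to\infty}n(\phi_{1/n}(\xi)-1)=\lim_{n\to\infty}\int_{\R}(e^{i\xi y}-1)\,n\,\mu_n(\textd y)$. I would then introduce the finite measures $\rho_n(\textd y)=\frac{y^2}{1+y^2}\,n\,\mu_n(\textd y)$ and show, using $|\phi|\le1$ and the continuity of $\phi$ at $0$, that $\sup_n\rho_n(\R)<\infty$ and that $\{\rho_n\}_n$ is tight, hence relatively weakly compact.

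\emph{Passage to the limit and identification of the triple.} Along a subsequence $\rho_n\Rightarrow\rho$ for a finite measure $\rho$ on $\R$; decomposing $\rho=\sigma^2\delta_0+\frac{y^2}{1+y^2}\,\Pi(\textd y)$ with $\sigma\ge0$ and $\Pi(\{0\})=0$ makes the integrability condition on $\Pi$ automatic. The analytic key is that
\[
g_\xi(y):=\frac{1+y^2}{y^2}\Bigl(e^{i\xi y}-1-i\xi y\,\1_{|y|\le1}\Bigr),\qquad g_\xi(0):=-\tfrac12\xi^2,
\]
extends to a bounded continuous function on the one-point-at-infinity compactification of $\R$; integrating $g_\xi$ against $\rho_n$ and letting $n\to\infty$ then produces the Gaussian term $\tfrac12\sigma^2\xi^2$ together with $\int_{\R}(e^{i\xi y}-1-i\xi y\,\1_{|y|\le1})\,\Pi(\textd y)$, and the residual, which is linear in $\xi$, defines $\gamma$. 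Finally I would prove uniqueness of $(\gamma,\sigma^2,\Pi)$ --- recovering $\sigma^2$ from the behaviour of $\xi^{-2}\varphi(\xi)$ as $\xi\to\infty$, then $\Pi$ by Fourier inversion, then $\gamma$ --- which in particular forces the whole sequence $\rho_n$ to converge.

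\emph{Main obstacle.} The delicate part is the uniform control in the two preceding steps: establishing $\sup_n\rho_n(\R)<\infty$ and tightness of $\{\rho_n\}$, and justifying the interchange of limit and integral. Tightness is where the hypotheses genuinely enter --- one bounds $\int_{|y|>R}n\,\mu_n(\textd y)$ by an average of $1-\RE\phi_{1/n}$ over a small neighbourhood of $\xi=0$ (the standard truncation inequality) and then passes to the limit using $n(1-\phi_{1/n})\to-\log\phi$ and continuity of $\phi$ at the origin; making all these estimates uniform in $n$ (and stable under the subsequential limit) is the technical heart of the argument.
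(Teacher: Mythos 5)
The paper does not actually prove Theorem~\ref{lem1}: it states the L\'evy--Khintchine representation as a classical result and refers the reader to the cited monographs (Applebaum, Sato, Cont--Tankov), so there is no internal argument to measure yours against. Your outline is, in substance, the standard proof found in those references: reduce to infinite divisibility of the law of $Y_1$ via the i.i.d.\ increment decomposition, get $\phi_t=e^{-t\varphi}$ from the semigroup property, non-vanishing and stochastic continuity, write $-\varphi(\xi)=\lim_n n\bigl(\phi_{1/n}(\xi)-1\bigr)$, and extract the triplet from the family $\rho_n(\textd y)=\tfrac{y^2}{1+y^2}\,n\,\mu_n(\textd y)$; the truncation inequality bounding $\int_{|y|>R}n\,\mu_n(\textd y)$ by an average of $1-\RE\phi_{1/n}$ near the origin is indeed the technical crux, and you have located it correctly. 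One slip to repair: $g_\xi$ does \emph{not} extend continuously to the one-point compactification of $\R$ --- for $|y|>1$ it equals $\tfrac{1+y^2}{y^2}\bigl(e^{i\xi y}-1\bigr)$, which oscillates as $|y|\to\infty$ --- so you cannot literally test the compactified weak limit against it. This is harmless given the rest of your plan: once $\sup_n\rho_n(\R)<\infty$ and tightness are established, Prokhorov gives weak convergence of a subsequence in the space of finite measures on $\R$ itself, and boundedness plus continuity of $g_\xi$ on $\R$ is all that is needed to pass to the limit; if you do insist on compactifying, you must separately rule out mass escaping to infinity, which is again exactly the tightness estimate. Note finally that the statement as printed in the paper contains typographical slips (the integral should run over all of $\R$ rather than from $0$ to $t$, and the compensator term should carry the factor $i$, i.e.\ read $e^{i\xi y}-1-i\xi y\,1_{|y|\le1}$, with the sign conventions made consistent); your proposal targets and proves the correct classical formula.
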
 

\noindent
For the Theorem-proof, see \cite{applebaum2009levy,ken1999levy,tankov2003financial}\\
\noindent
Each L\'evy process is uniquely determined by the L\'evy–Khintchine triplet  $(\gamma, \sigma^{2},\Pi )$. The terms of this triplet suggest that a L\'evy process can be seen as having three independent components: a linear drift, a Brownian motion, and a L\'evy jump process. When the diffusion term $\sigma=0$, we have a  L\'evy jump process; in addition, if $\gamma=0$, we have a pure jump process.

\subsection{$\Gamma(\alpha, \theta)$ Ornstein-Uhlenbeck proces} 
\noindent
The Ornstein-Uhlenbeck process is a diffusion process introduced by Ornstein and Uhlenbeck \cite{uhlenback1930theory} to model the stochastic behavior of the velocity of a particle undergoing Brownian motion. The Ornstein-Uhlenbeck diffusion $\sigma^{2}=\{\sigma^{2}(t), t\geq 0\}$ is the solution of the Langevin Stochastic Differential Equation (SDE) (\ref{eq:l01})
\begin{align}
d\sigma^{2}(t)=-\lambda \sigma^{2}(t)dt + dB(\lambda t) .
\label {eq:l01}
  \end{align}
where $\lambda >0$ and $B=\{B_{t},t\geq 0\}$ is a Brownian motion.
\noindent
In recent years, the Ornstein-Uhlenbeck process has been used in finance to capture important distributional deviations from Gaussianity and to model dependence structures. The extension of the Ornstein-Uhlenbeck processes was obtained by replacing the Brownian motion in (\ref{eq:l01}) by z(t), a background driving L\'evy process (BDLP) \cite{barndorff2003integrated,barndorff2001modelling,barndorff2002financial}. The SDE (\ref{eq:l01}) becomes
\begin{align} 
d\sigma^{2}(t)=-\lambda \sigma^{2}(t)dt + dz(\lambda t)  \quad  \lambda >0.
\label {eq:l02}
  \end{align}
where the process  $z(t)=\{z(t), t \geq 0, z(0)=0 \}$ is subordinator; a process with non-negative, independent and stationary increments, which implies  $\sigma^{2}(t) \geq 0$. Correspondingly z(t) moves up entirely
by jumps and then tails off exponentially \cite{barndorff1999non}.
\begin{lemma} \label{lem2} \ \\
The general form of the stationary process $\sigma^{2}(t)$ , solution of (\ref {eq:l02}) is given by :
\begin{align}
\sigma^{2}(t)=-\int_{0}^{+\infty}e^{-s} dz(\lambda t -s)  \quad  \lambda >0. 
\label {eq:l03}
  \end{align}
\end{lemma}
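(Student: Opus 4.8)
The plan is to treat \eqref{eq:l02} as a linear (Ornstein-Uhlenbeck type) stochastic differential equation, solve it by the classical integrating-factor method, and then isolate the stationary solution by letting the initial time tend to $-\infty$. Since the background driving process $z$ is a subordinator, it has finite variation and pure-jump paths, so the stochastic calculus needed is elementary: the product rule for $e^{\lambda t}\sigma^{2}(t)$ holds pathwise, with no quadratic-variation correction. First I would compute, using \eqref{eq:l02},
\begin{align}
d\bigl(e^{\lambda t}\sigma^{2}(t)\bigr)=\lambda e^{\lambda t}\sigma^{2}(t)\,dt+e^{\lambda t}\,d\sigma^{2}(t)=e^{\lambda t}\,dz(\lambda t),
\label{eq:prop-ibp}
\end{align}
and then integrate \eqref{eq:prop-ibp} over $[t_{0},t]$ to obtain the variation-of-constants representation
\begin{align}
\sigma^{2}(t)=e^{-\lambda(t-t_{0})}\sigma^{2}(t_{0})+\int_{t_{0}}^{t}e^{-\lambda(t-u)}\,dz(\lambda u).
\label{eq:prop-voc}
\end{align}

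Next I would pass to the limit $t_{0}\to-\infty$ in \eqref{eq:prop-voc}. The memory term $e^{-\lambda(t-t_{0})}\sigma^{2}(t_{0})$ vanishes (in probability, and almost surely along a subsequence) because $e^{-\lambda(t-t_{0})}\to 0$ while the law of $\sigma^{2}(t_{0})$ stays fixed under the stationarity hypothesis, and the remaining integral converges because the kernel $e^{-\lambda(t-u)}$ is integrable on $(-\infty,t]$ while $z$ has stationary, non-negative increments whose L\'evy measure obeys the integrability condition of Theorem~\ref{lem1}. This yields $\sigma^{2}(t)=\int_{-\infty}^{t}e^{-\lambda(t-u)}\,dz(\lambda u)$. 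Substituting $s=\lambda(t-u)$, and noting that $s\mapsto z(\lambda t-s)$ is non-increasing so that its Stieltjes differential is the negative of the pushforward of $dz$, produces exactly \eqref{eq:l03} --- the minus sign there being precisely this orientation reversal.

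It then remains to check that \eqref{eq:l03} genuinely is a stationary solution of \eqref{eq:l02}. Plugging it into the right-hand side of \eqref{eq:l02} and differentiating the exponential kernel recovers the left-hand side, and the finite-dimensional distributions of \eqref{eq:l03} are $t$-independent because a shift of $t$ merely relabels the integration variable while $z$ has stationary increments. Uniqueness among stationary solutions is then immediate: the difference of any two solutions of \eqref{eq:l02} solves the homogeneous equation $dx(t)=-\lambda x(t)\,dt$, hence equals $x(t_{0})e^{-\lambda(t-t_{0})}$, which cannot be stationary unless it vanishes identically.

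I expect the only genuine obstacle to be making rigorous the two limiting statements in the second step: (i) that the improper stochastic integral $\int_{-\infty}^{t}e^{-\lambda(t-u)}\,dz(\lambda u)$ is well defined and almost surely finite --- which really needs a mild log-moment condition on the BDLP, typically $E[\log^{+}z(1)]<\infty$, slightly beyond what Theorem~\ref{lem1} alone provides --- and (ii) that the boundary term converges to $0$ in the appropriate sense. The integrating-factor identity, the change of variables, and the verification are otherwise routine.
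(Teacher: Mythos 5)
Your proposal is correct, and it reaches the formula by a more complete route than the paper does. The paper's own proof runs in the opposite (and shorter) direction: it takes \eqref{eq:l03} as given, rewrites it by the change of variables $s\mapsto\lambda s$, $s\mapsto t-s$ into $\sigma^{2}(t)=\int_{-\infty}^{t}e^{-\lambda(t-s)}\,dz(\lambda s)$ (this is \eqref{eq:l04}), and then simply asserts, as in \eqref{eq:l05}, that differentiating this expression returns the SDE \eqref{eq:l02}; it is a pure verification with no derivation, no discussion of why every stationary solution must have this form, and no attention to the convergence of the improper integral. Your argument contains that verification as its last step, but in addition derives the representation from \eqref{eq:l02} by the integrating-factor identity $d\bigl(e^{\lambda t}\sigma^{2}(t)\bigr)=e^{\lambda t}\,dz(\lambda t)$ (legitimate pathwise, since $z$ is a finite-variation subordinator and $e^{\lambda t}$ is continuous, so no bracket term appears), passes $t_{0}\to-\infty$ using stationarity to kill the memory term, and settles uniqueness among stationary solutions via the homogeneous equation --- which is precisely what the word ``general form'' in the lemma requires and what the paper leaves implicit. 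Your handling of the minus sign in \eqref{eq:l03} as an orientation reversal of the Stieltjes integrator $s\mapsto z(\lambda t-s)$ matches the paper's change-of-variables step but explains it more carefully. The caveat you flag about well-posedness of $\int_{-\infty}^{t}e^{-\lambda(t-u)}\,dz(\lambda u)$, namely the log-moment condition $E[\log^{+}z(1)]<\infty$, is the right technical point in general, and it is harmless in this paper's setting because the BDLP used later (Theorem~\ref{lem3}) is a compound Poisson process with exponential jumps, which even has a finite mean; also note that monotonicity of $z$ makes the limit exist a.s.\ automatically, so the only issue is its finiteness. In short: same key computations (change of variables, differentiation check), but your derivation-plus-uniqueness scaffolding is a genuine strengthening of the paper's verification-only proof.
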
 
\begin{proof} 
\begin{align}
\sigma^{2}(t)=-\int_{0}^{+\infty}e^{-s} dz(\lambda t -s)= -\int_{0}^{+\infty}e^{-\lambda s} dz(\lambda (t -s))=\int_{-\infty}^{t}e^{-\lambda(t-s)} dz(\lambda s) 
\label {eq:l04}
  \end{align}
By using the variable  changing method, we can have different expressions of (\ref {eq:l03}).
\begin{align}
\sigma^{2}(t)=\int_{-\infty}^{t}e^{-\lambda(t-s)} dz(\lambda s) \quad \Longrightarrow  \quad d\sigma^{2}(t)=-\lambda \sigma^{2}(t)dt + dz(\lambda t)
\label {eq:l05}
  \end{align}
\end{proof}

Expression (\ref{eq:l04}) can be written as follows:
\begin{align}
\sigma^{2}(t)=e^{-\lambda t}\sigma^{2}(0)+\int_{0}^{t}e^{-\lambda(t-s)} dz(\lambda s) \quad \quad \sigma^{2}(0)=\int_{-\infty}^{0}e^{\lambda s} dz(\lambda s) \label {eq:l06}
  \end{align}

 \begin{theorem}\label{lem3} \ \\
Assume $z(t)=\sum_{k=1}^{N(t)}\xi_{k}$ is a compound poison process, that is,  N(t)  is Poisson process with the instantaneous rate $\alpha$, and  $\xi_{k}$ follows an exponential distribution with the rate $\theta$. \\
\noindent
The stationary marginal distribution of $\sigma^{2}(t)$ is  Gamma distribution $\Gamma(\alpha, \theta)$
\end{theorem}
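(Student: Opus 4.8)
The plan is to identify the stationary law of $\sigma^{2}(t)$ through its characteristic function, starting from the moving-average representation of Lemma~\ref{lem2}.

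First I would take $\sigma^{2}(t)=\int_{-\infty}^{t}e^{-\lambda(t-s)}\,dz(\lambda s)$ and compute $E[e^{i\zeta\sigma^{2}(t)}]$ by a Riemann-sum approximation of the stochastic integral. Partitioning $(-\infty,t]$ into cells of width $\Delta$, the increments of the time-changed subordinator $s\mapsto z(\lambda s)$ over distinct cells are independent and infinitely divisible, each with cumulant $\lambda\Delta\,k(\cdot)$, where $k(\zeta)=\log E[e^{i\zeta z(1)}]$ is the characteristic exponent of the BDLP. Hence the characteristic function of the approximating sum factorizes over cells, and passing to the limit $\Delta\to0$, followed by the substitution $r=\lambda(t-s)$, yields
\begin{align}
\log E\!\left[e^{i\zeta\sigma^{2}(t)}\right]=\int_{-\infty}^{t}\lambda\,k\!\left(\zeta e^{-\lambda(t-s)}\right)ds=\int_{0}^{\infty}k\!\left(\zeta e^{-r}\right)dr .
\label{eq:cumform}
\end{align}
In particular the right-hand side depends on neither $t$ nor $\lambda$, and the improper integral converges since $k$ is smooth with $k(0)=0$, so $k(\zeta e^{-r})=O(e^{-r})$ as $r\to\infty$. (Alternatively, one may simply invoke the Barndorff-Nielsen--Shephard cumulant identity for Ornstein--Uhlenbeck type processes in place of redoing this limiting argument.)

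Next I would make $k$ explicit for the compound Poisson BDLP. Conditioning $z(1)=\sum_{k=1}^{N(1)}\xi_{k}$ on $N(1)\sim\mathrm{Poisson}(\alpha)$ and using the independence of the exponential jumps gives $E[e^{i\zeta z(1)}]=\exp\!\big(\alpha(\phi_{\xi}(\zeta)-1)\big)$, with $\phi_{\xi}$ the characteristic function of one exponential jump (equivalently this is the L\'evy--Khintchine formula of Theorem~\ref{lem1} specialized to a finite jump measure); thus $k(\zeta)=\alpha\big(\phi_{\xi}(\zeta)-1\big)$ is a rational function of $\zeta$ with a single simple pole. Substituting this into \eqref{eq:cumform} and changing variables $w=\zeta e^{-r}$ turns the right-hand side into an elementary integral of a rational function in $w$, whose primitive is logarithmic; one obtains $\log E[e^{i\zeta\sigma^{2}(t)}]=-\alpha\log(1-i\theta\zeta)$ (up to which exponential parametrization is intended). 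Recognizing $(1-i\theta\zeta)^{-\alpha}$ as the characteristic function of the Gamma law $\Gamma(\alpha,\theta)$ and invoking uniqueness of characteristic functions then finishes the proof.

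I expect the only genuine obstacle to be the rigorous justification of \eqref{eq:cumform}: exchanging the limit of the Riemann sums with the characteristic function and controlling the tail of the integral over $(-\infty,t]$. For a compound Poisson BDLP this is especially mild, because one may instead expand $\sigma^{2}(t)=\sum_{j}e^{-\lambda(t-\tau_{j})}\xi_{j}$ over the jump times $\tau_{j}\le t$ of $z(\lambda\cdot)$ (a.s.\ finitely many in each bounded window) and condition directly on the Poisson structure, which sidesteps the stochastic-integration technicalities altogether; the remaining computations are then routine.
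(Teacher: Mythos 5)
Your argument is correct and is essentially the paper's own: the paper likewise identifies the stationary law through its characteristic function, invoking the Barndorff-Nielsen--Shephard identity $\Phi(u,\xi)=\exp\bigl\{\int_{\xi e^{-\lambda u}}^{\xi}\rho(w)\,w^{-1}\,dw\bigr\}$ for the compound Poisson exponent $\rho(\xi)=i\alpha\xi/(\theta-i\xi)$, only packaged via the stationarity relation $\vartheta(\xi)=\vartheta(\xi e^{-\lambda u})\Phi(u,\xi)$ and the limit $u\to\infty$ rather than your direct infinite-horizon cumulant integral, and both routes reduce to the same elementary integral of $\rho(w)/w$. One small point: with exponential jumps of rate $\theta$ the integral evaluates to $-\alpha\log\left(1-i\zeta/\theta\right)$, i.e.\ the characteristic function $(1-i\zeta/\theta)^{-\alpha}$ as in (\ref{eq:l12}), so the parametrization you left open resolves to scale $1/\theta$.
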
 
\begin{proof}

\begin{align}
\sigma^{2}(t+u)=\int_{-\infty}^{t+u}e^{-\lambda(t+u-s)} dz(\lambda s)=e^{-\lambda u}\sigma^{2}(t)+\int_{0}^{u}e^{-\lambda(u-s)} dz(\lambda s)  \quad u\geq0 \label {eq:l07}
\end{align}
The stationary solution $\sigma^{2}(t)$ of (\ref {eq:l02}) can be written as in (\ref {eq:l07}). Because of the stationarity, we have 
\begin{align}
\vartheta(\xi)&=\vartheta(\xi e^{-\lambda u})\Phi(u,\xi) \label{eq:l08}
\end{align}
\noindent
$\vartheta(\xi)$ is the characteristic function of the stationary distribution of $\sigma^{2}(t)$ and $\Phi(u,\xi)$ is the characteristic function of $\int_{0}^{u}e^{-\lambda(u-s)} dz(\lambda s)$. We have $0 \leq e^{-\lambda u} \leq 1$ for $ u \geq 0 $, and the relation (\ref{eq:l07}) shows that $\sigma^{2}(t)$ is self-decomposable. \\
$z(t)$ is a compound poison process with the function characteristic. 
\begin{align}
g(\xi)=E(e^{i \xi z(1)})=exp\left\{\int_{0}^{\infty}(e^{i\xi x} -1)\alpha f(x)dx\right\}=exp({\rho(\xi)})  \quad \quad \rho(\xi)=\frac{i \xi\alpha}{\theta - i\xi}\label{eq:l09}
\end{align}
\noindent
It was shown in \cite{barndorff1998some} that $\Phi(u,\xi)$ can be expressed as follows 
\begin{align}
\Phi(u,\xi)=exp\left\{\lambda\int_{0}^{u}\rho(\xi e^{-\lambda(u-s)})ds\right\}=exp\left\{\int_{\xi e^{-\lambda u}}^{\xi}\frac{\rho(w)}{w}dw\right\}  \label{eq:l10}
\end{align}
By replacing, $\frac{\rho(w)}{w}=\frac{i\alpha}{\theta-iw}$, we have 
\begin{align}
\Phi(u,\xi)=\left(\frac{\theta - i\xi e^{-\lambda u}}{\theta -i \xi}\right)^{\alpha} \label{eq:l11}
\end{align}
$\vartheta(\xi)$ is continuous at zero, and we have: 
\begin{align}
\vartheta(\xi)=\lim_{u\to\infty}\vartheta(\xi e^{-\lambda u})\Phi(u,\xi)=\left(\frac{1}{1-i \frac{1}{\theta}\xi}\right)^{\alpha}=\left(1-i{\theta}^{-1} \xi\right)^{-\alpha} \label {eq:l12}
\end{align}
From (\ref{eq:l12}), $\vartheta(\xi)$ is the function characteristics of the gamma distribution; and the stationary marginal distribution of $\sigma^{2}(t)$ is the $\Gamma(\alpha, \theta)$ Gamma distribution. \\ Another method developed in \cite{barndorff1999non,barndorff2001non,barndorff2003integrated,barndorff2001modelling,barndorff2002financial} uses the relationship between the $z(t)$ L\'evy density $w(x)$ and the L\'evy density $u(x)$ of $\sigma^{2}(t)$. 
\begin{align}
u(x)=\int_{1}^{\infty}w(xr) dr \label {eq:l13}
  \end{align} 
From (\ref {eq:l09}), we have the L\'evy density $W(x)=\alpha f(x)=\alpha\theta e^{-\theta x}$ and the L\'evy density $u(x)$ of $\sigma^{2}(t)$ can be deduced as follows.
\begin{align}
u(x)=\int_{1}^{\infty}\alpha\theta e^{-\theta xr}dr=\frac{\alpha}{ x}e^{-\theta x} \quad  x>0 \label {eq:l14}
\end{align} 
u(x) is the L\'evy density of Gamma distribution $\Gamma(\alpha, \theta)$.
\end{proof}
\noindent
We can integrate the stationary non-negative process $\sigma^{2}(t)$.
\begin{align}
\sigma^{2*}(t)=\int_{0}^{t}\sigma^{2}(s)ds \label {eq:l6}
\end{align}  
\noindent
By  integration by part method, (\ref {eq:l6}) becomes
\begin{align}
\sigma^{2*}(t)&=\lambda^{-1}\sigma^{2}(0)(1-e^{-\lambda t}) + \lambda^{-1}\int_{0}^{t}\left(1- e^{-\lambda (t-s)}\right)dz(\lambda s)\\
&=\lambda^{-1}\left( -\sigma^{2}(t) + z(\lambda t) + \sigma^{2}(0)\right)\label {eq:l7}
\end{align}  
It results from  (\ref {eq:l7}) that the process $\sigma^{2*}(t)$ is continuous as $z(\lambda t)$ and $\sigma^{2}(t)$ co-break \cite{barndorff2002financial, barndorff2001non}. In addition,  the shape of $\sigma^{2*}(t)$ is determined by $z(\lambda t)$. In fact, $\sigma^{2*}(t)$ and $z(\lambda t)$ co-integrate. The co-integration can be shown by transforming the equation (\ref {eq:l7}) into (\ref {eq:l7a}). $\lambda\sigma^{2*}(t) - z(\lambda t)$ is a stationary process such that.
\begin{align}
\lambda \sigma^{2*}(t) - z(\lambda t) = -\sigma^{2}(t) + \sigma^{2}(0) \label {eq:l7a}
\end{align} 
\noindent
For $\lambda=1$ and $\sigma^{2}(0)=0$, the compound poison process ($z(t)$), the $\Gamma(\alpha, \theta)$ Ornstein-Uhlenbeck process in ($\sigma^{2}(t)$), and $\sigma^{2*}(t)$ in (\ref {eq:l8a}) were simulated and the results are in  Fig \ref{fig01}, Fig \ref{fig02}, and Fig \ref{fig03} respectively. 
\begin{align}
z(t)=\sum_{k=1}^{N(t)}\xi_{k}  \quad \quad \sigma^{2}(t)=\sigma^{2}(0)e^{\lambda t} + \sum_{k=1}^{N(t)}exp(-\lambda (t-a_{k}))\xi_{k} \quad \quad \sigma^{2*}(t)=\int_{0}^{t}\sigma^{2}(s)ds \label {eq:l8a} 
\end{align}
\begin{figure}[ht]
    \centering
\hspace{-1.2cm}
  \begin{subfigure}[b]{0.34\linewidth}
    \includegraphics[width=\linewidth]{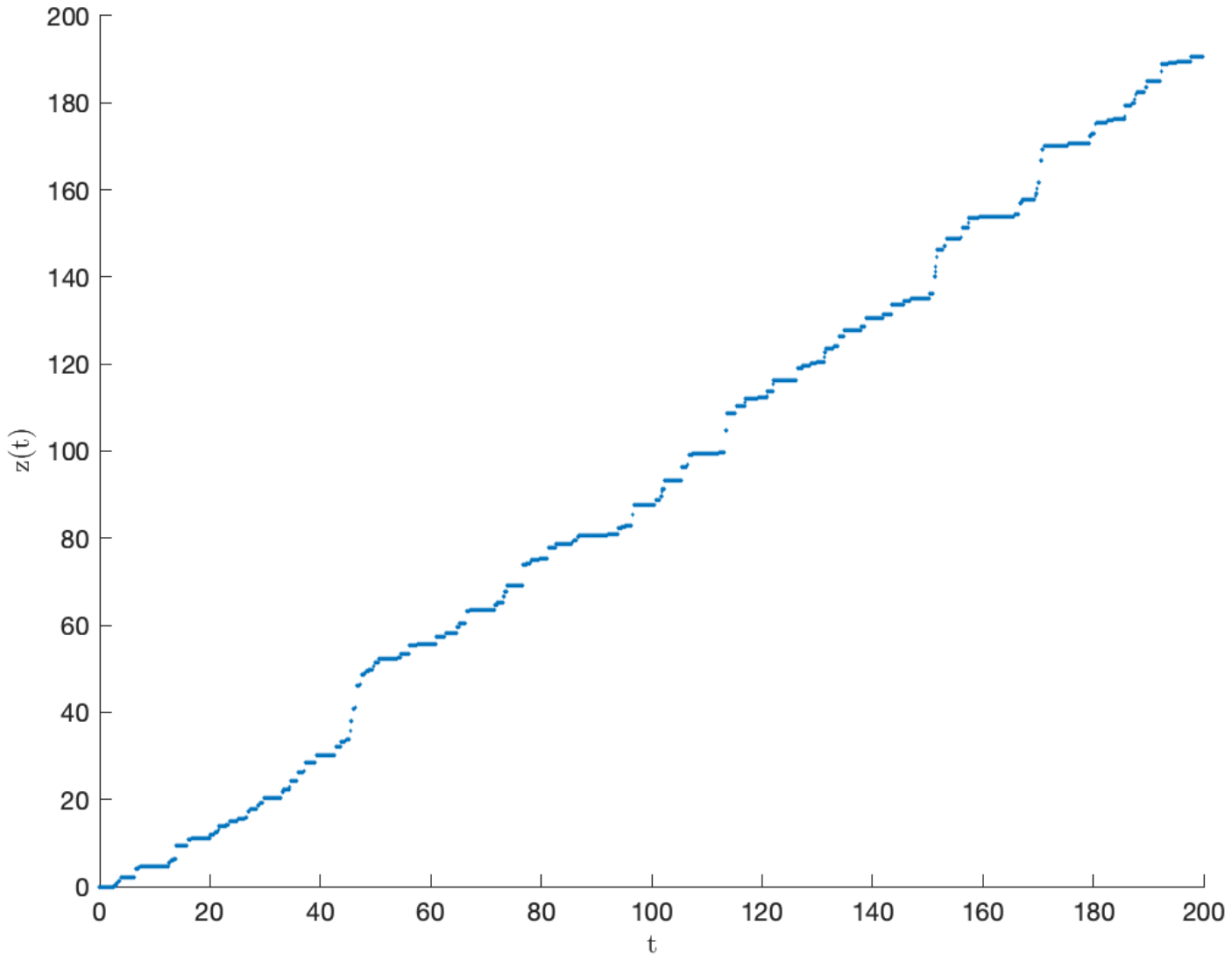}
\vspace{-0.5cm}
     \caption{ Compound Poison process: $\hat{z}(t)$}
         \label{fig01}
  \end{subfigure}
\hspace{-0.5cm}
  \begin{subfigure}[b]{0.34\linewidth}
    \includegraphics[width=\linewidth]{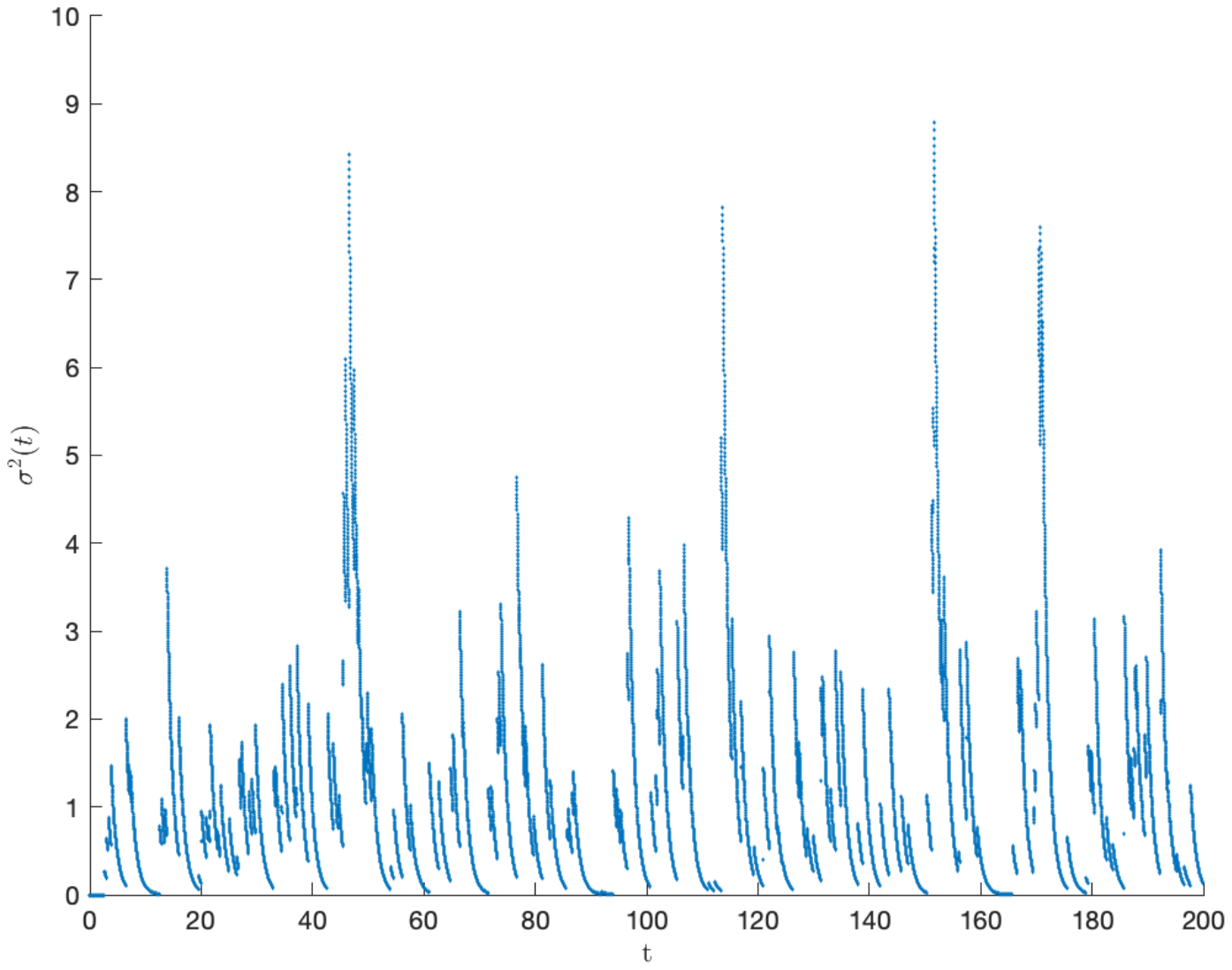}
\vspace{-0.5cm}
     \caption{Gamma process: $\hat{\sigma}^{2}(t)$}
         \label{fig02}
          \end{subfigure}
\hspace{-0.5cm}
  \begin{subfigure}[b]{0.34\linewidth}
    \includegraphics[width=\linewidth]{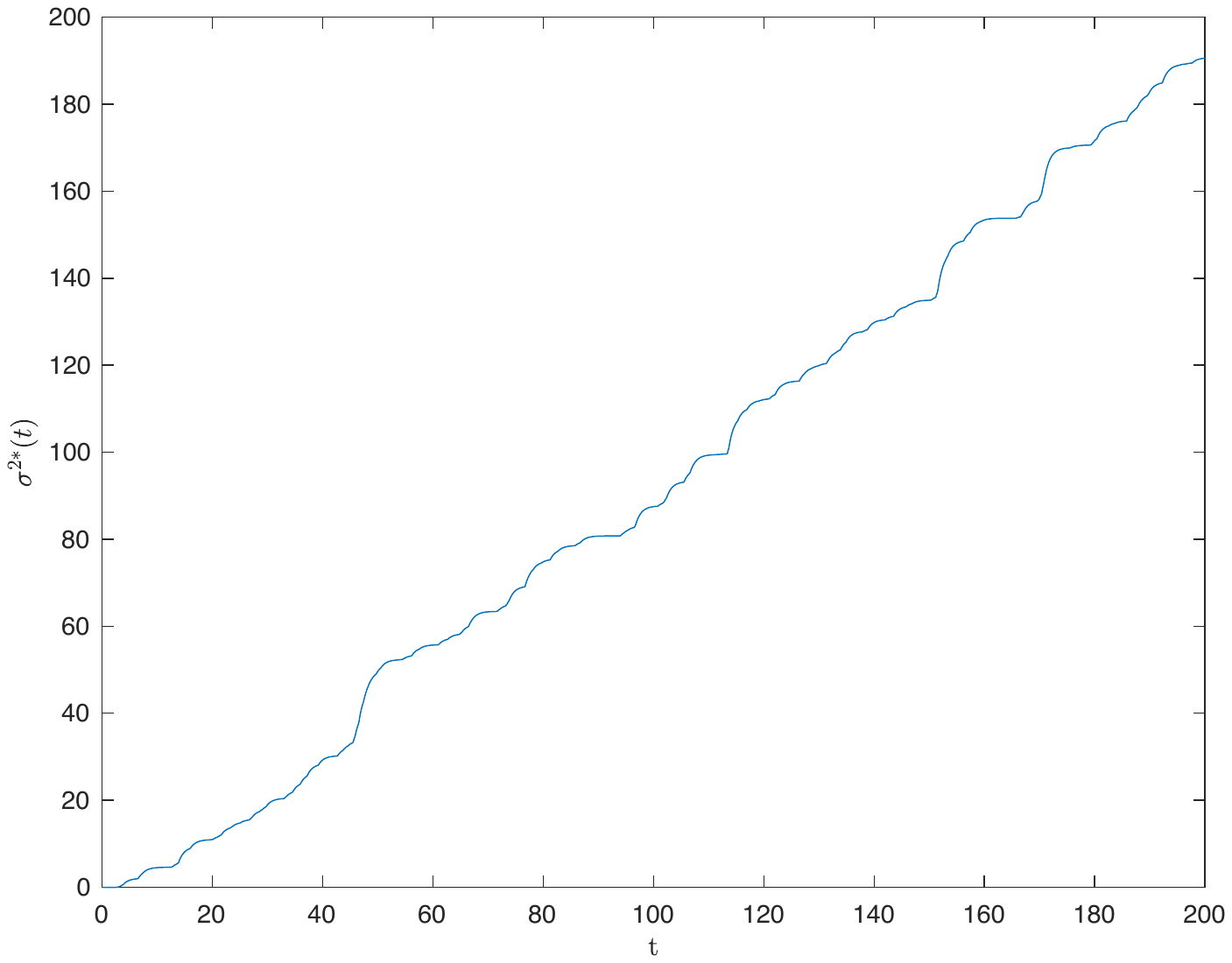}
\vspace{-0.5cm}
     \caption{Subordinator: $\hat{\sigma}^{2*}(t)$}
         \label{fig03}
          \end{subfigure}
\vspace{-0.9cm}
  \caption{Simulations: $\hat{\alpha}=0.8845$, $\hat{\theta}= 0.9378$}
  \label{fig0}
\vspace{-0.5cm}
\end{figure}

\noindent
The estimations of the gamma distribution parameter $\Gamma(\alpha, \theta)$ were performed by the FRFT Maximum likelihood on the daily SPY prices \cite{nzokem2021fitting}.
\subsection{Variance - Gamma Process: Semi-Martingale}

\noindent
Let  $Y^{*}= \{Y_{t}^{*}\}$, a stochastic process used to model the log of an asset price.
 \begin{align}
Y_{t}^{*}= A_{t} + M_{t}  \quad  & \quad  A_{t}=\beta t + \delta \sigma^{2*} (t)  \label {eq:l911} \\
M_{t}&= \sigma \int_{0}^{t}\sigma(s) dW(s) \label {eq:l913}
\end{align} 
where $\beta$ and $\delta$ are the drift parameters, $t$ represents the continuous time clock, and $W(t)$ is the standard Brownian motion and independent of  $\sigma^{2}(t)$.
\begin{align}
\sigma(t)=\sqrt{\sigma^{2} (t)} \quad \quad
\sigma^{2*}(t)&=\int_{0}^{t}\sigma^{2}(s)ds \label {eq:l914}
\end{align}
 $\sigma (t)$ is the spot or instantaneous volatility, and $\sigma^{2*}(t)$ is the chronometer or the integrated variance of the process. As shown in Fig \ref{fig03}, the Gamma process  ($\sigma^{2*}(t)$) is a strictly increasing process of the stationary process ($\sigma^{2}(t)$). \\
\noindent
The mean process $A_{t}$ is a predictable process with locally bounded variation. In fact, $ A_{t}$ is continuous  and differentiable because of $\sigma^{2*}(t)$.\\
\noindent
$M_{t}$ is a local martingale. The derivative of $M_{t}$ in (\ref {eq:l913}) can be written as a Stochastic Differential Equation (SDE)  (\ref {eq:l915})
 \begin{align}
dM_{t}&= \sigma \sigma(t) dW(t) \label {eq:l915}
\end{align} 
$Y_{t}^{*}$ is a special semi-martingale \cite{protter2005, barndorff2002financial} and the decomposition $Y_{t}^{*}=A_{t} + M_{t}$  is unique.
\begin{figure}[ht]
    \centering
  \begin{subfigure}[b]{0.45\linewidth}
    \includegraphics[width=\linewidth]{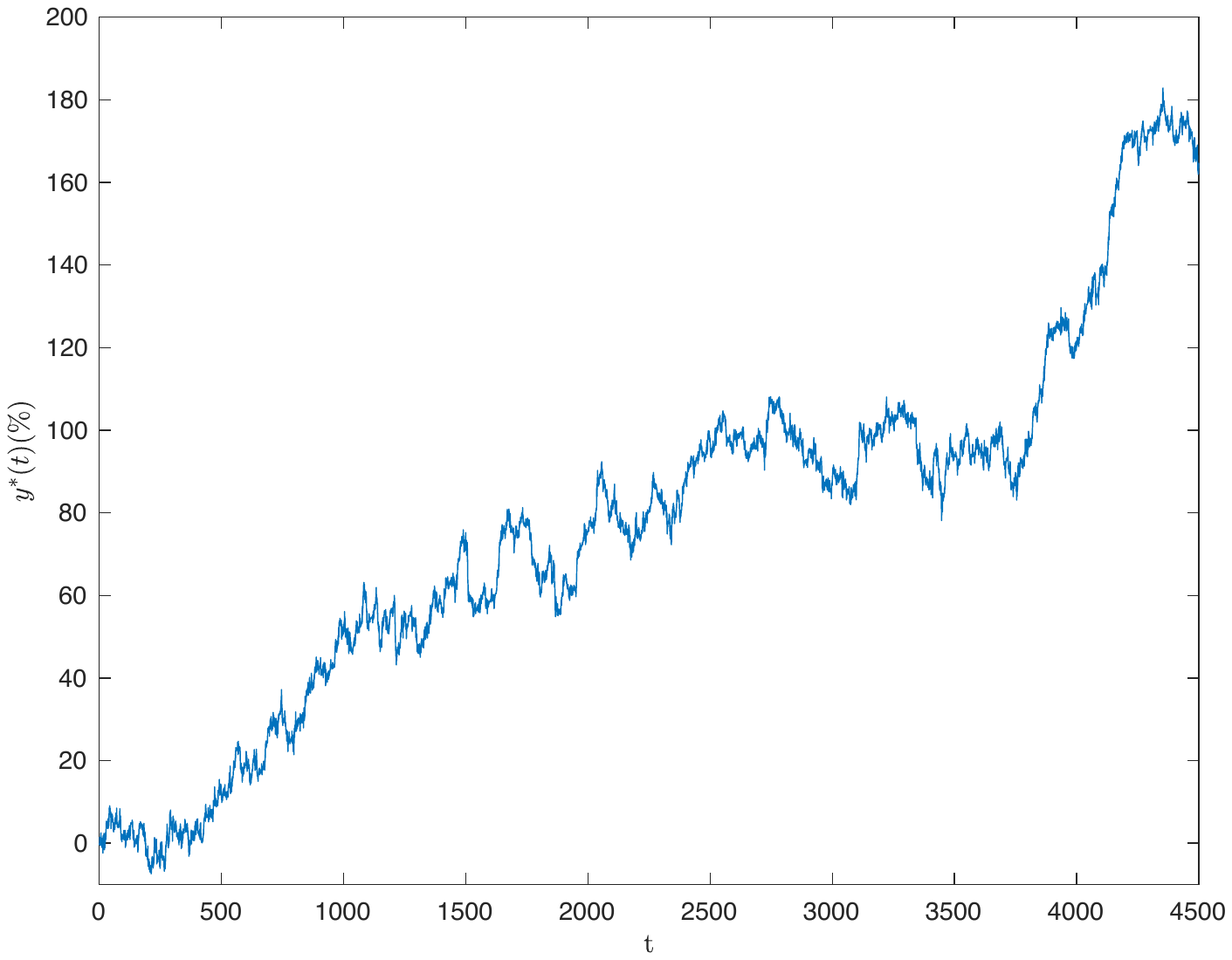}
\vspace{-0.5cm}
     \caption{ Simulations: $Y^{*}= \{Y_{t}^{*}\}$}
         \label{fig05}
  \end{subfigure}
  \begin{subfigure}[b]{0.45\linewidth}
    \includegraphics[width=\linewidth]{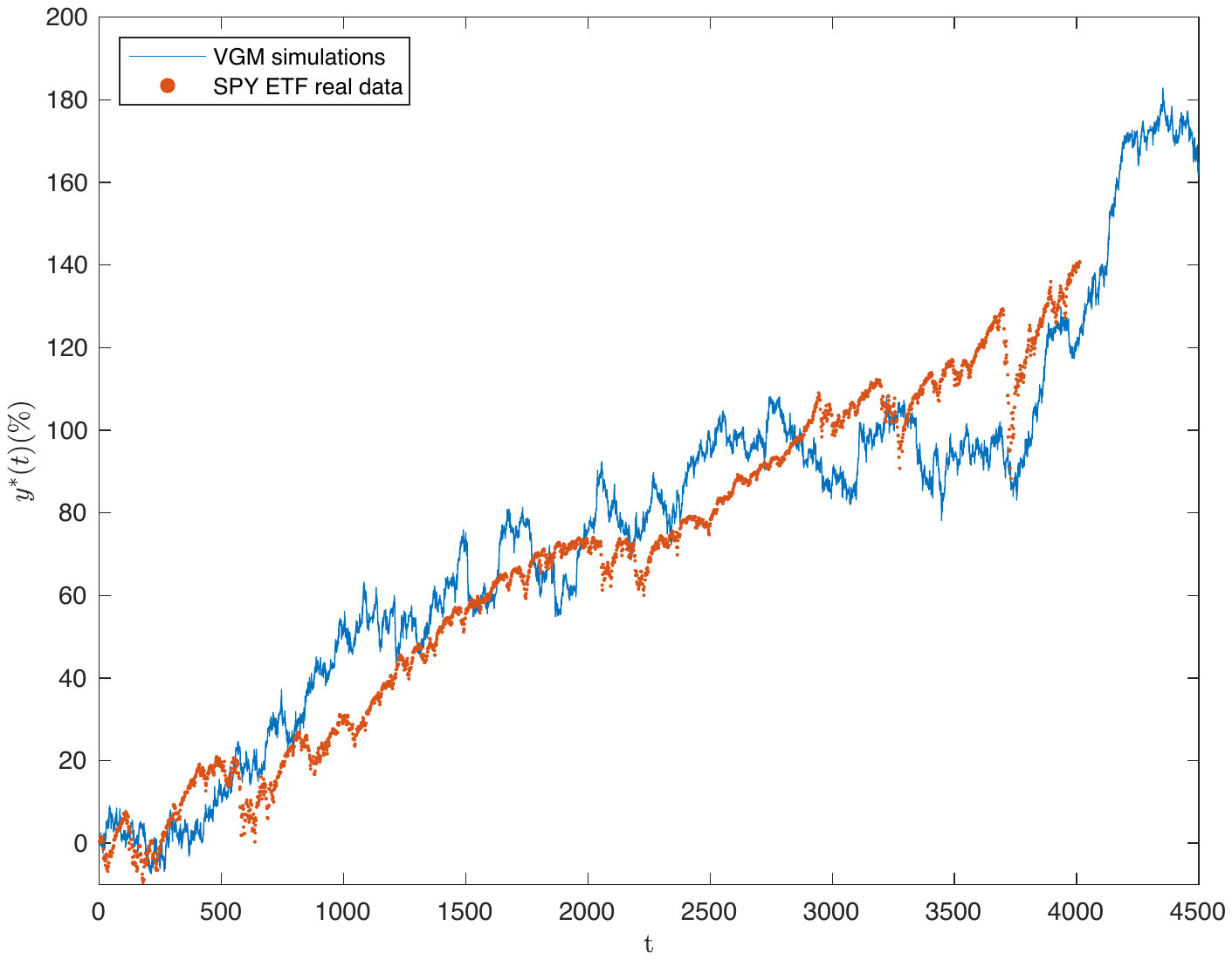}
\vspace{-0.5cm}
     \caption{Simulations versus SPY ETF data}
         \label{fig06}
          \end{subfigure}
\vspace{-0.5cm}
  \caption{VG Model: $\hat{\mu}=0.0848$, $\hat{\delta}=-0.0577$, $\hat{\sigma}=1.0295$, $\hat{\alpha}=0.8845$, $\hat{\theta}= 0.9378$}
  \label{fig07}
\vspace{-0.5cm}
\end{figure}

\noindent
Fig \ref{fig05} and Fig \ref{fig06} display, in blue color, the simulation of the logarithmic of the asset price ($Y^{*}$) in (\ref {eq:l911}). The simulation is compared with the daily SPY ETF historical return data from January 4, 2010, to December 30, 2020. The daily SPY ETF data is displayed in red color in Fig \ref{fig05}.

\subsection{Variance - Gamma Process: Parameter Estimations} 
\noindent
The stochastic process in (\ref {eq:l911})  is the solution of the following Stochastic Differential Equation (SDE):
 \begin{align}
dY_{t}^{*}= (\beta + \delta \sigma^{2} (t))dt + \sigma\sigma(t) dW(t)
\label {eq:l96}
\end{align}

\noindent
Given an interval of length $\Delta$,  we define $\sigma^{2}_{n}$ and $Y_{n}$ over the interval [$(n-1)\Delta$; $n\Delta$].
 \begin{align}
 \sigma^{2}_{n}=\int_{(n-1)\Delta}^{n\Delta}d\sigma^{2*}(s)= \sigma_{n\Delta}^{2*} -  \sigma_{(n-1)\Delta}^{2*} \quad 
Y_{n}=  \int_{(n-1)\Delta}^{n\Delta}dY_{s}^{*}=Y_{n\Delta}^{*} - Y_{(n-1)\Delta}^{*} \label {eq:l98}
\end{align}
The volatility component can be transformed into a normally distributed variable $X(1)$ as follows:
\begin{equation}
 \begin{aligned}
\int_{(n-1)\Delta}^{n\Delta}\sigma(t) dW(t) & \overset{\text{d}}{=} N\left(0, \int_{(n-1)\Delta}^{n\Delta}\sigma^{2}(s)ds \right)  = N\left(0, \sigma_{n\Delta}^{2*} -  \sigma_{(n-1)\Delta}^{2*}\right) = N\left(0, \sigma^{2}_{n}\right) \\
  & \overset{\text{d}}{=} {\sigma_{n}}N\left(0,1\right)  \overset{\text{d}}{=} {\sigma_{n}} X(1). \label {eq:l99}
   \end{aligned}
   \end{equation}
where $X(1)\overset{\text{d}}{=}N(0,1)$ and  $N(0,1)$ denotes a standard normal distribution.\\

\noindent
By integrating the instantaneous return rate  (\ref {eq:l96}) per component, we have:
\begin{align*}
\int_{(n-1)\Delta}^{n\Delta}dY_{s}^{*}=\beta \Delta + \delta\int_{(n-1)\Delta}^{n\Delta}d\sigma^{2*}(s) +  \sigma\int_{(n-1)\Delta}^{n\Delta}\sigma(t) dW(t)
 \end{align*}
\noindent
Based on (\ref{eq:l98}) and (\ref{eq:l99}), we have the following equation over the interval [$(n-1)\Delta$; $n\Delta$]
 \begin{align}
Y_{n} = \mu +  \delta\sigma^{2}_{n} + \sigma{\sigma_{n}} X(1)  \quad  \quad \quad  \mu=\beta\Delta \quad \quad  \sigma^{2}_{n}\overset{\text{d}}{=}\Gamma(\alpha, \theta) \label {eq:l10}
\end{align}

\noindent
In case $\Delta$ is a daily length, $Y_{n}$ becomes the daily return rate. The equation (\ref{eq:l10}) was analyzed in \cite{nzokem2021fitting,nzokem_2021b} as a daily return rate, and the parameters were estimated. The data came from the daily SPY ETF historical data and the period spans from January 4, 2010, to December 30, 2020. See \cite{nzokem2021fitting,nzokem_2021b,nzokem2021sis} for more details on the methodology and the results. \\

\noindent
Table \ref {tab1} presents the estimation results of the five parameters $(\mu, \delta,\alpha,\theta,\sigma)$ of $Y_{n}$ in (\ref{eq:l10}) along with some statistical indicators.
\begin{table}[ht]
\caption{FRFT Maximum Likelihood VG Parameters Estimations}
 \label{tab1}
\centering
\begin{tabular}{@{}lll@{}}
\toprule
\textbf{Model} &
 \textbf{Parameters} &
  \textbf{Statistics} \\ \midrule
 \multirow{2}{*}{} &
  \multirow{2}{*}{$\hat{\mu}=0.0848$} &
  \multirow{2}{*}{$\hat{E(Y)}=0.0369 $} \\
 \multirow{2}{*}{} &
  \multirow{2}{*}{$\hat{\delta}=-0.0577$} &
  \multirow{2}{*}{$\hat{Var(Y)}=0.8817$} \\
  \multirow{2}{*}{\textbf{VG}} &
  \multirow{2}{*}{$\hat{\sigma}=1.0295$} &
  \multirow{2}{*}{$\hat{Skew(Y)}=-0.173 $} \\ 
  \multirow{2}{*}{} &
  \multirow{2}{*}{$\hat{\alpha}=0.8845$} &
  \multirow{2}{*}{$\hat{Kurt(Y)}=6.412$} \\  
  \multirow{2}{*}{} &
  \multirow{2}{*}{$\hat{\theta}=0.9378$} &
  \multirow{2}{*}{} \\  
\multirow{2}{*}{} &
  \multirow{2}{*}{} &
  \multirow{2}{*}{} \\
  \bottomrule
  \footnotesize Source: Nzokem(2021) \cite{nzokem2021fitting,nzokem_2021b}
 \end{tabular}
\end{table}

\noindent
As shown in Table \ref {tab2}, with initial parameter values ($\sigma = \alpha = \theta= 1$, $\delta=\mu=0$),  the maximization procedure convergences after 21 iterations. $\log(ML)$ is the function to maximize and $||\frac{d\log(ML)}{dV}||$ is the norm of the partial derivative function ($\log(ML)$). During the maximization process, both quantities converge respectively to $-3549.692$ and $0$; where the parameter vector is stable. The location parameter $\mu$ is positive, the symmetric parameter $\delta$ is negative, and other parameters have the expected sign.
\begin{table}[ht]
\vspace{-0.3cm}
\caption{Results of VG Model Parameter Estimations}
 \label{tab2}
\centering
\centering
\resizebox{13cm}{!}{%
\begin{tabular}{cccccccc}
\textbf{Iterations} &
  \textbf{$\mu$} &
  \textbf{$\delta$} &
  \textbf{$\sigma$} &
  \textbf{$\alpha$} &
  \textbf{$\theta$} &
  \textbf{$Log(ML)$} &
  \textbf{$||\frac{dLog(ML)}{dV}||$} \\
1  & 0          & 0          & 1          & 1          & 1          & -3582.8388 & 598.743231 \\
2  & 0.05905599 & -0.0009445 & 1.03195903 & 0.9130208  & 1.03208412 & -3561.5099 & 833.530396 \\
3  & 0.06949925 & 0.00400035 & 1.04101444 & 0.88478895 & 1.05131996 & -3559.5656 & 447.807305 \\
4  & 0.07514039 & 0.00055771 & 1.17577397 & 0.67326429 & 1.17778666 & -3569.6221 & 211.365781 \\
5  & 0.08928373 & -0.0263716 & 1.03756321 & 0.83842661 & 0.94304967 & -3554.4434 & 498.289445 \\
6  & 0.08676498 & -0.0521887 & 1.03337015 & 0.85591875 & 0.95066351 & -3550.6419 & 204.467192 \\
7  & 0.086995   & -0.0608517 & 1.02788937 & 0.87382621 & 0.95054954 & -3549.8465 & 66.8039738 \\
8  & 0.08542912 & -0.058547  & 1.02705241 & 0.88258411 & 0.94321299 & -3549.7023 & 15.3209117 \\
9  & 0.08478622 & -0.0576654 & 1.02995166 & 0.88447791 & 0.93670036 & -3549.6921 & 1.14764198 \\
10 & 0.08477798 & -0.0577736 & 1.02922308 & 0.88449072 & 0.93831041 & -3549.692  & 0.17287708 \\
11 & 0.08476475 & -0.0577271 & 1.02960343 & 0.88450434 & 0.93755549 & -3549.692  & 0.07850459 \\
12 & 0.08477094 & -0.0577488 & 1.02942608 & 0.8844984  & 0.93790784 & -3549.692  & 0.03723941 \\
13 & 0.08476804 & -0.0577386 & 1.02950937 & 0.88450117 & 0.93774266 & -3549.692  & 0.01732146 \\
14 & 0.0847694  & -0.0577434 & 1.02947043 & 0.88449987 & 0.93781995 & -3549.692  & 0.00813465 \\
15 & 0.08476876 & -0.0577411 & 1.02948868 & 0.88450048 & 0.93778375 & -3549.692  & 0.00380345 \\
16 & 0.08476906 & -0.0577422 & 1.02948014 & 0.88450019 & 0.9378007  & -3549.692  & 0.00178206 \\
17 & 0.08476892 & -0.0577417 & 1.02948414 & 0.88450033 & 0.93779276 & -3549.692  & 0.00083415 \\
18 & 0.08476898 & -0.0577419 & 1.02948226 & 0.88450026 & 0.93779648 & -3549.692  & 0.00039063 \\
19 & 0.08476895 & -0.0577418 & 1.02948314 & 0.88450029 & 0.93779474 & -3549.692  & 0.00018289 \\
20 & 0.08476897 & -0.0577419 & 1.02948273 & 0.88450028 & 0.93779555 & -3549.692  & 8.56E-05   \\
21 & 0.08476896 & -0.0577418 & 1.02948292 & 0.88450029 & 0.93779517 & -3549.692  & 4.01E-05  
\end{tabular}%
}
\end{table}

\section{Variance - Gamma Process: Probability  versus L\'evy Density} \ \\
\noindent
Based on (\ref{eq:l911}) and (\ref{eq:l913}), the  VG Process $Y=\{Y_{t}\}_{t\geq0}$  with five parameters $(\mu, \delta, \sigma, \alpha,\theta)$ can be written as follows:
 \begin{align}
Y_{t}= \mu t + \delta \sigma^{2*} (t) + \sigma \int_{0}^{t}\sigma(s) dW(s). \label {eq:l939}
\end{align} 
where $\mu, \delta \in R$ , $\sigma>0$, $\alpha>0$, $\theta>0$, $t$ represents the continuous time clock, and $W(t)$ is the standard Brownian motion and independent of  $\sigma^{2}(t)$. 
\begin{align}
\sigma(t)=\sqrt{\sigma^{2} (t)} \quad \quad
\sigma^{2*}(t)&=\int_{0}^{t}\sigma^{2}(s)ds. \label {eq:l949}
\end{align}
where $\sigma (t)$ is the spot or instantaneous volatility, $\sigma^{2} (t)$ is the spot or instantaneous variance, and $\sigma^{2*}(t)$ is the chronometer or the integrated variance of the process.\\

\noindent
We consider the characteristic function of the VG process $Y= \{Y_{t}\}$
\begin{align}
E\left[e^{i \xi Y_{t}}\right]= E\left[e^{i \xi (\mu t + \delta \sigma^{2*} (t) + \sigma \int_{0}^{t}\sigma(s) dW(s))}\right]= e^{i t\mu \xi}E\left[e^{i \xi (\delta \sigma^{2*} (t) + \sigma \int_{0}^{t}\sigma(s) dW(s))}\right].\label {eq:l9399}  \end{align}
$\int_{0}^{t}\sigma(s) dW(s)$ is the It$\hat{o}$ integral with respect to the Brownian motion, and we have :
\begin{align}
\int_{0}^{t}\sigma(s)dW(s) &\overset{\text{d}}{=} N\left(0, \int_{0}^{t}\sigma^{2}(s)ds \right) = N\left(0, \sigma^{2*}(t) \right).\label{eq:l93992}  \end{align}
where $N(0,1)$ is a standard normal distribution.\\

\noindent
From expressions (\ref{eq:l9399}) and (\ref{eq:l93992}), we have 
\begin{equation}
\begin{aligned}
E\left[e^{i \xi (\delta \sigma^{2*} (t) + \sigma \int_{0}^{t}\sigma(s) dW(s))}\right]&=E\left[e^{i \xi N\left(\delta \sigma^{2*} (t), \sigma^2\sigma^{2*}(t)\right)}\right] =E\left[E\left[e^{i \xi N\left(\delta \sigma^{2*} (t), \sigma^2\sigma^{2*}(t)\right)}|\sigma^{2*}(t)\right]\right] \\
&=E\left[e^{(i\delta\xi - \frac{1}{2}\sigma^2\xi^2)\sigma^{2*}(t)}\right]  \label{eq:l93995} 
  \end{aligned}
\end{equation}
$\sigma^{2*}(t)$ is a L\'evy process generated by the Gamma distribution $\Gamma(\alpha, \theta)$ and we have 
\begin{equation}
\begin{aligned}
E\left[e^{(i\delta\xi - \frac{1}{2}\sigma^2\xi^2)\sigma^{2*}(t)}\right]&=\frac{1}{(1 + \frac{1}{2}\theta\sigma^2\xi^2)^{t\alpha}}E\left[e^{i\delta\xi W }\right] \quad  \hbox{$ \sigma^{2*}(t)\overset{\text{d}}{=} \Gamma(t \alpha, \theta)$} \\
&=\frac{1}{\left( 1 - i\delta\theta\xi + \frac{1}{2}\sigma^2\theta\xi^2\right)^{t\alpha}} \quad  \hbox{$ W\overset{\text{d}}{=} \Gamma(t \alpha,  \frac{1}{2}\sigma^2\xi^2 +\frac{1}{\theta})$} \label{eq:l93998} 
\end{aligned}
\end{equation}
From expressions (\ref{eq:l9399}), (\ref{eq:l93995}) and (\ref{eq:l93998}), we have 
\begin{align}
E\left[e^{i Y_{t}\xi}\right]&= \frac{e^{i t\mu \xi}}{\left( 1 - i\delta\theta\xi + \frac{1}{2}\sigma^2\theta\xi^2\right)^{t\alpha}} \label{eq:l93981}
 \end{align}
We define two related functions $\phi(\xi)$ and $\varphi(\xi,t)$ such that 
\begin{align} 
\phi(\xi)= \frac{e^{i\mu \xi}}{\left( 1 - i\delta\theta\xi +\frac{1}{2} \sigma^2\theta\xi^2\right)^{\alpha}} \quad  & \quad  \quad  \varphi(\xi,t)= -Log(E\left[e^{i Y_{t}\xi}\right])= - t Log(\phi(\xi))\label{eq:l93982}
  \end{align}
  The characteristic function can be written as follows.
  \begin{align} 
    E\left[e^{i Y_{t}\xi}\right]=\left( \phi(\xi)\right)^{t}=E[e^{- t Log(\phi(\xi))}] \label{eq:l93983}  
  \end{align}
    
      \bigskip
  \subsubsection{L\'evy measure and the structure of the jumps} 
\begin{lemma} \ \\
    \label{lem4}
 (Frullani integral) $\forall \alpha, \beta>0$ and $\forall z \in \scrC$ with $\Re (z) \leq 0$.\\
we have
\begin{align*}
 \frac{1}{\left(1-\frac{z}{\alpha}\right)^{\beta}}=e^{-\int_{0}^{\infty}(1-e^{zx})\beta x^{-1}e^{-\alpha x}dx}
  \end{align*}
\end{lemma}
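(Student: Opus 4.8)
The plan is to verify the identity by showing that the exponent on the right-hand side equals $-\beta \operatorname{Log}\!\left(1 - \frac{z}{\alpha}\right)$, i.e. that
\begin{align*}
\int_{0}^{\infty}\bigl(1 - e^{zx}\bigr)\,\beta\, x^{-1} e^{-\alpha x}\,dx = \beta \operatorname{Log}\!\left(1 - \frac{z}{\alpha}\right),
\end{align*}
where $\operatorname{Log}$ is the principal branch (well defined since $\Re(z)\le 0$ forces $1 - \frac{z}{\alpha}$ to lie in the right half-plane, away from the branch cut). After dividing by $\beta$, this is the classical Frullani-type evaluation, and the cleanest route is differentiation under the integral sign with respect to a parameter. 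First I would introduce $F(z) = \int_{0}^{\infty}\bigl(1 - e^{zx}\bigr) x^{-1} e^{-\alpha x}\,dx$ and note $F(0) = 0$. The integrand near $x = 0$ behaves like $(1 - e^{zx})/x \sim -z$, so it is integrable at the origin, and the factor $e^{-\alpha x}$ with $\alpha > 0$ guarantees absolute convergence at infinity (using $\Re(z)\le 0$, so $|1 - e^{zx}|$ stays bounded); hence $F$ is well defined and the hypotheses on differentiation under the integral are met on the region $\Re(z)\le 0$.

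Next I would differentiate: $F'(z) = \int_{0}^{\infty}(-x)\,e^{zx}\,x^{-1} e^{-\alpha x}\,dx = -\int_{0}^{\infty} e^{-(\alpha - z)x}\,dx = -\dfrac{1}{\alpha - z}$, valid since $\Re(\alpha - z) = \alpha - \Re(z) \ge \alpha > 0$. Integrating this from $0$ to $z$ along any path in the closed left half-plane (the region is convex, $F$ is analytic there) gives
\begin{align*}
F(z) = F(0) - \int_{0}^{z}\frac{dw}{\alpha - w} = \operatorname{Log}(\alpha - z) - \operatorname{Log}(\alpha) = \operatorname{Log}\!\left(1 - \frac{z}{\alpha}\right),
\end{align*}
where the antiderivative is single-valued because $\alpha - w$ never crosses the negative real axis for $w$ in the closed left half-plane. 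Multiplying by $\beta$ and exponentiating yields
\begin{align*}
\frac{1}{\left(1 - \frac{z}{\alpha}\right)^{\beta}} = e^{-\beta \operatorname{Log}\left(1 - \frac{z}{\alpha}\right)} = e^{-\beta F(z)} = e^{-\int_{0}^{\infty}(1 - e^{zx})\,\beta\, x^{-1} e^{-\alpha x}\,dx},
\end{align*}
which is the claim.

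The only genuinely delicate point — the "hard part," though it is more bookkeeping than difficulty — is justifying the interchange of differentiation and integration uniformly in $z$ and pinning down the branch of the logarithm and the power function so that both sides are unambiguous; this is exactly where the hypothesis $\Re(z)\le 0$ is used, since it keeps $1 - \frac{z}{\alpha}$ strictly inside the domain of analyticity of the principal $\operatorname{Log}$ and makes the dominating function $x^{-1}\bigl(1 + e^{\Re(z)x}\bigr)e^{-\alpha x}$ (for the integrand) and $e^{-(\alpha - \Re(z))x}$ (for the differentiated integrand) integrable with bounds independent of $z$ on compact subsets. Everything else is a routine application of the fundamental theorem of calculus in the complex variable $z$. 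An alternative, if one prefers to avoid complex parameters entirely, is to first prove the identity for real $z \le 0$ by the above argument and then extend to $\Re(z)\le 0$ by analytic continuation, both sides being analytic in $z$ on $\Re(z) < 0$ and continuous up to the boundary.
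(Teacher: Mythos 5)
Your proof is correct, but it is worth noting that the paper itself does not prove this lemma at all: it simply defers to the reference \cite{arias1990theorem}, treating the identity as a known Frullani-type integral. Your argument therefore supplies what the paper omits, and it does so in the form actually needed later: a direct verification for complex $z$ with $\Re(z)\leq 0$, via $F(z)=\int_{0}^{\infty}(1-e^{zx})x^{-1}e^{-\alpha x}\,dx$, $F(0)=0$, $F'(z)=-\frac{1}{\alpha-z}$, and integration of the principal logarithm, which is legitimate because $\alpha-w$ stays in the open right half-plane for $w$ in the closed left half-plane. This is exactly the classical Frullani evaluation $\int_{0}^{\infty}\frac{e^{-\alpha x}-e^{-(\alpha-z)x}}{x}\,dx=\operatorname{Log}\bigl(1-\frac{z}{\alpha}\bigr)$ specialized to an exponential, so your route and the cited result are mathematically the same statement; the difference is that you prove it rather than quote it. Two small points: analyticity should be asserted on the open half-plane $\Re(z)<0$ with continuity up to the boundary (your closing remark about analytic continuation already covers this, and the boundary case $\Re(z)=0$ is the one the paper actually uses in Theorem~\ref{lem5}, where $z$ is purely imaginary), and the dominated-convergence bounds you sketch are adequate to justify differentiation under the integral. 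So: correct, self-contained, and a genuine improvement over the paper's citation-only treatment.
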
 
\noindent
For lemma proof, see  \cite{arias1990theorem}

\begin{theorem}\label{lem5}  (Variance-Gamma Model  representation) \ \\
 Let  $Y=\{Y_{t}\}_{t\geq0}$  be a L\'evy process on $\mathbb{R}$ generated by the VG model  with parameter $(\mu, \delta, \sigma, \alpha, \theta)$. The characteristic exponent of the L\'evy process has the following representation.
 \begin{align}
\varphi(\xi,1)&=- Log\left(Ee^{i Y_{1}\xi}\right)= i\mu \xi + \int_{-\infty}^{+\infty}\left(1-e^{-i\xi u}\right)\Pi(u)du \label {eq:l11}
  \end{align}
\noindent
$\Pi(u)$ is the L\'evy density  of  $Y$ :
\begin{align}
 \Pi(u)=\alpha\left( \frac{1_{\{ u>0 \} }}{u}e^{-x_{1}u} + \frac{1_{\{ u<0 \} }}{\lvert u\rvert}e^{{- x_{2}}u}\right) \label {eq:l111}
   \end{align}
\noindent
 with
   \begin{align}
x_{1}=\frac{\delta}{\sigma^2}+ \sqrt{\frac{\delta^2}{\sigma^4}+\frac{2}{\theta \sigma^2}} \quad \quad x_{2}= \frac{\delta}{\sigma^2} - \sqrt{\frac{\delta^2}{\sigma^4}+\frac{2}{\theta \sigma^2}} \label {eq:l112}
   \end{align}
\noindent
and $\Pi(u)$ satisfies the properties
\begin{align}
  \int_{-\infty}^{+\infty}\Pi(u)du =+\infty  \quad \hbox{and}  \quad  \int_{-\infty}^{+\infty}Min(1,|u|)\Pi(u)du <+\infty \label {eq:l113}
   \end{align}
\end{theorem}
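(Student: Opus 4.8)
The plan is to read off everything from the closed-form characteristic function already obtained in \eqref{eq:l93981}. At $t=1$ this reads
\[
E\!\left[e^{i\xi Y_1}\right]=\frac{e^{i\mu\xi}}{\bigl(1-i\delta\theta\xi+\tfrac12\sigma^2\theta\xi^2\bigr)^{\alpha}},
\]
so the whole problem is to turn the $\alpha$-th power of the reciprocal quadratic into an exponential of an integral of Lévy–Khintchine type. First I would factor the quadratic: seeking real $x_1,x_2$ with $1-i\delta\theta\xi+\tfrac12\sigma^2\theta\xi^2=(1+\tfrac{i\xi}{x_1})(1+\tfrac{i\xi}{x_2})$ and equating the coefficients of $\xi$ and $\xi^2$ yields the Vieta relations $x_1+x_2=\tfrac{2\delta}{\sigma^2}$, $x_1x_2=-\tfrac{2}{\sigma^2\theta}$, whose solution is exactly the pair in \eqref{eq:l112}. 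Because the radicand $\tfrac{\delta^2}{\sigma^4}+\tfrac{2}{\theta\sigma^2}$ exceeds $\tfrac{\delta^2}{\sigma^4}$, the $x_i$ are real, and since their product is negative we get $x_1>0>x_2$; writing $x_2=-|x_2|$ is convenient below.

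Next I would take the principal logarithm, which (factor by factor, so that exponentiating recovers the characteristic function) gives $\varphi(\xi,1)=i\mu\xi+\alpha\,\mathrm{Log}(1+\tfrac{i\xi}{x_1})+\alpha\,\mathrm{Log}(1-\tfrac{i\xi}{|x_2|})$, with the sign of the linear term as dictated by \eqref{eq:l93981}. Each of the two logarithmic terms is then rewritten as an integral via the Frullani identity, Lemma \ref{lem4}. Applying it with $\beta=\alpha$, $z=-i\xi$, and Frullani parameter $x_1$ turns the first term into $\int_0^{\infty}(1-e^{-i\xi u})\,\alpha u^{-1}e^{-x_1u}\,du$; applying it with $z=i\xi$ and parameter $|x_2|$, followed by the substitution $u\mapsto-u$, turns the second into $\int_{-\infty}^{0}(1-e^{-i\xi u})\,\alpha|u|^{-1}e^{-x_2u}\,du$. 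Summing, $\varphi(\xi,1)=i\mu\xi+\int_{-\infty}^{+\infty}(1-e^{-i\xi u})\Pi(u)\,du$ with $\Pi$ exactly as in \eqref{eq:l111}; comparing with the Lévy–Khintchine representation of Theorem \ref{lem1} then certifies that $\Pi$ is the Lévy density of $Y$ and that $Y$ has no Gaussian part.

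Finally I would check \eqref{eq:l113}. The identity $\int_{-\infty}^{+\infty}\Pi(u)\,du=+\infty$ follows at once from the non-integrable $u^{-1}$ singularity of $\Pi$ at the origin. For the other estimate, near $u=0$ the weight $\min(1,|u|)=|u|$ cancels that singularity and leaves a bounded integrand, while for $|u|\ge 1$ one bounds $\min(1,|u|)\Pi(u)\le\Pi(u)$, which is exponentially small on both tails precisely because $x_1>0$ and $-x_2>0$; hence $\int_{-\infty}^{+\infty}\min(1,|u|)\Pi(u)\,du<+\infty$. The one point that needs care, rather than being routine, is the application of Lemma \ref{lem4} at purely imaginary arguments $z=\pm i\xi$, i.e.\ on the boundary $\Re z=0$ of its stated domain of validity: this is legitimate because both sides of the Frullani identity are continuous up to that boundary, the defining integrals being absolutely convergent there ($1-e^{zu}=O(|\xi u|)$ near $u=0$ and the Gamma-type weight decays exponentially), and because only the exponentiated identity $e^{-\varphi(\xi,1)}=E[e^{i\xi Y_1}]$ is ultimately needed, which sidesteps any branch ambiguity of $\mathrm{Log}$.
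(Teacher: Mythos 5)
Your proposal is correct and follows essentially the same route as the paper: factor the quadratic in \eqref{eq:l93981} into the roots $x_1,x_2$ of \eqref{eq:l112} via Vieta's relations, convert each factor into an exponential integral with the Frullani identity of Lemma \ref{lem4} (whose stated hypothesis $\Re(z)\le 0$ already includes the purely imaginary arguments you take extra care over), and verify \eqref{eq:l113} from the $1/|u|$ singularity at the origin and the exponential decay of the tails, exactly as the paper does. The only discrepancy is the sign of the drift term: writing $+i\mu\xi$ while citing \eqref{eq:l93981} is inconsistent, since $-\mathrm{Log}\,\phi(\xi)$ produces $-i\mu\xi$ — but this same mismatch already exists between the theorem statement \eqref{eq:l11} and the paper's own proof, so it is an inherited typo rather than a gap in your argument.
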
 
\begin{proof} \ \\
\noindent
We consider the characteristic function $\phi(\xi)$ in (\ref{eq:l93982}) of the VG model with parameter $(\mu, \delta, \sigma, \alpha, \theta)$ developed previously
\begin{align*}
 \phi(\xi)= \frac{e^{i\mu \xi}}{\left( 1 - i\delta\theta\xi + \frac{1}{2} \sigma^2\theta\xi^2\right)^{\alpha}}
\end{align*}
\noindent
We factor the quadratic function in the denominator of $\phi(\xi)$.
\begin{align}
\left(1+\frac{1}{2}\theta \sigma^{2}x^{2} - i\delta\theta x\right)^{\alpha}=\left(\frac{1}{2}\theta \sigma^{2}\right)^{\alpha} \left(x - i x_{1}\right)^{\alpha}\left(x - i x_{2}\right)^{\alpha}. \label{eq:l114}
\end{align}
with
\begin{align*}
x_{1}=\frac{\delta}{\sigma^2}+ \sqrt{\frac{\delta^2}{\sigma^4}+\frac{2}{\theta \sigma^2}} \quad  \quad \quad 
x_{2}= \frac{\delta}{\sigma^2}- \sqrt{\frac{\delta^2}{\sigma^4}+\frac{2}{\theta \sigma^2}}
\end{align*}
\noindent
We apply the lemma \ref {lem4} on each factor of the quadratic function (\ref{eq:l114}).
\begin{align*}
\left(\frac{1}{2}\theta \sigma^{2}\right)^{\alpha} {\left(x - i x_{1}\right)^{\alpha}}{\left(x - i x_{2}\right)^{\alpha} }&={\left(1+\frac{i x}{x_{1}}\right)^{\alpha}}{\left(1+\frac{i x}{x_{2}}\right)^{\alpha}}\\
&=\left(e^{\int_{0}^{\infty}{(1-e^{-i x u})\frac{\alpha}{u}e^{-x_{1}u}du}}\right)  \left(e^{\int_{0}^{\infty}{(1-e^{ix u})\frac{\alpha}{u}e^{x_{2}u}du}}\right)\\
&=e^{\int_{0}^{\infty}{(1-e^{-i x u})\frac{\alpha}{u}e^{-x_{1}u}du} +  \int_{-\infty}^{0}{(1-e^{-i x v})\frac{\alpha}{|v|}e^{-x_{2}v}dv}}\\
&=e^{\int_{-\infty}^{+\infty}{(1-e^{-i x u})\Pi(u)du}}
\end{align*}
\noindent
we take into account  the expression (\ref {eq:l114}) and have
\begin{align}
\left(1+\frac{1}{2}\theta \sigma^{2}x^{2} - i\delta\theta x\right)^{\alpha}=e^{\int_{-\infty}^{+\infty}{(1-e^{-i x u})\Pi(u)du}}.\label {eq:l1114}
\end{align}
where  $\Pi(u)=\alpha\left( \frac{1_{\{ u>0 \}}}{u} e^{-x_{1}u} + \frac{1_{\{ u<0 \} }}{\lvert u\rvert}e^{{-x_{2}}u}\right)$ \\
 \noindent
 From expression (\ref {eq:l93982}), we have:
 \begin{align*}
 \varphi(\xi,t)= - t Log(\phi(\xi))&= -it\mu\xi +t Log\left(1+\frac{1}{2}\theta \sigma^{2}x^{2} - i\delta\theta x\right)^{\alpha}\\
 &= -it\mu\xi +t Log\left(1+\frac{1}{2}\theta \sigma^{2}x^{2} - i\delta\theta x\right)^{\alpha}\\
  &= -it\mu\xi + \int_{-\infty}^{+\infty}{(1-e^{-i x u})t\Pi(u)du}
  \end{align*}
 \noindent
 We have 
 \begin{align}
 \varphi(\xi,t)= - t Log(\phi(\xi))= -it\mu\xi + \int_{-\infty}^{+\infty}{(1-e^{-i x u})t\Pi(u)du}\label {eq:l1134}  \end{align} 
\noindent  
For $t=1$, we have the expression (\ref{eq:l11})\\
\noindent
We can check some properties of  $\Pi(u)$
 \begin{align}
 \int_{-\infty}^{+\infty}\Pi(u) du &=+\infty \quad \quad  \hbox{ in fact  \quad  $\lim_{|u| \to 0} \Pi(u) = +\infty$ } \label {eq:l115}
  \end{align}
  \begin{align*}
  \int_{-\infty}^{+\infty}Min(1,|u|)\Pi(du) &=\int_{-1}^{1}Min(1,|u|)\Pi(du) + \int_{1}^{+\infty}Min(1,|u|)\Pi(du) \\ &+\int_{-\infty}^{-1}Min(1,|u|)\Pi(du) \\
  &=\alpha \left(\frac{1-e^{-x_{1}}}{x_{1}} + \frac{1-e^{x_{2}}}{-x_{2}} + \Gamma(0,x_{1}) + \Gamma(0, -x_{2}) \right).   \end{align*}
with \ \ $\Gamma(s,u)=\int_{u}^{+\infty}{y^{s-1} }e^{-y}dy$  \\
And we have:   
    \begin{align}
 \int_{-\infty}^{+\infty}Min(1,|u|)\Pi(du) <+\infty \label {eq:l116}  
  \end{align}
 \end{proof}
 
 \noindent
The results in (\ref {eq:l115}) show that the VG process is not a finite activities process and can not be written as a Compound Poisson process \cite{barndorff2002financial}. The VG process is an infinite activity process with an infinite number of jumps in any given time interval. The arrival rate of jumps of all sizes in the VG process is defined by the L\'evy density (\ref{eq:l117})
 \begin{align}
   \Pi(u)= 
\begin{cases}
    \frac{\alpha}{\lvert u\rvert}{e^{-x_{2}u}}  & \text{if } u<0\\
    \frac{\alpha}{u} {e^{-x_{1}u}} & \text{if } u>0.  \label{eq:l117}
\end{cases}
 \end{align}
\noindent
As shown in Fig \ref{fig911}, the high arrival rates of jumps are concentrated around the origin $0$. The smaller the jump size, the higher the arrival rate for the VG model. The steepness parameters \cite{boyarchenko2002non},$-x_2$ and $x_1$, defined the rate of exponential decay of the tails on each side. As shown in Fig \ref{fig911} and (\ref {eq:l117}), the L\'evy density is asymmetric, and the left tail is heavier as $-x_2<x_1$. On the other hand, the result in (\ref {eq:l116}) proves that the VG process is a finite variation process, which is contrary to the Brownian motion process. The Gamma distribution Parameter ($\alpha$), called the process intensity \cite{boyarchenko2002non}, plays an important role in the L\'evy density. The intensity of the process $(\alpha)$ has a similar role as the variance parameter in the Brownian motion process. The L\'evy density function  (\ref{eq:l117}) is different for negative and positive jump size. The difference has led \cite{madan1998variance} to see the VG process as the difference between two increasing processes, with one process providing the upward movement and another the downward movement in the market.
\begin{figure}[ht]
\vspace{-0.3cm}
    \centering
  \begin{subfigure}[b]{0.4\linewidth}
    \includegraphics[width=\linewidth]{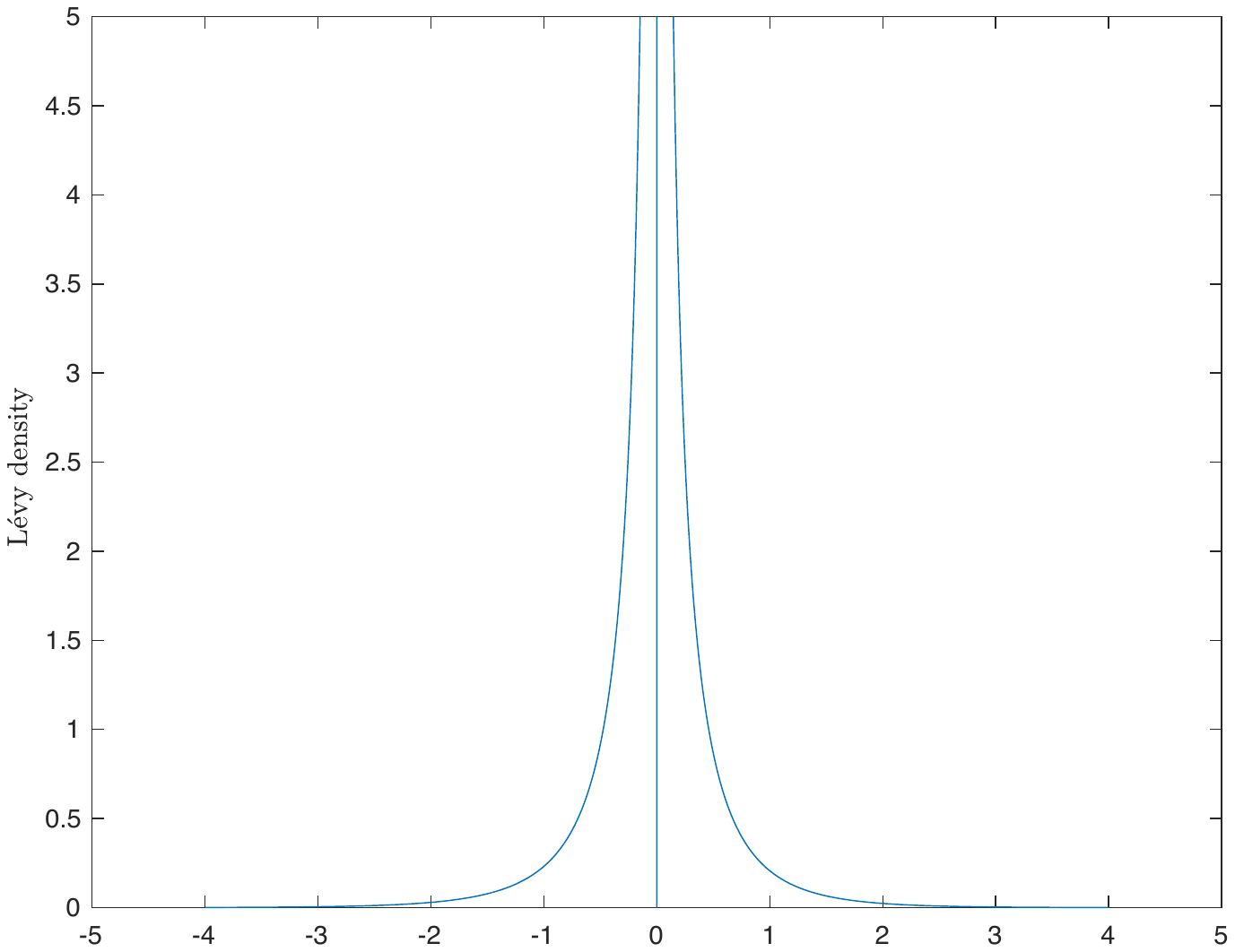}
\vspace{-0.5cm}
     \caption{L\'evy density of the VG model}
         \label{fig911}
  \end{subfigure}
  \begin{subfigure}[b]{0.4\linewidth}
    \includegraphics[width=\linewidth]{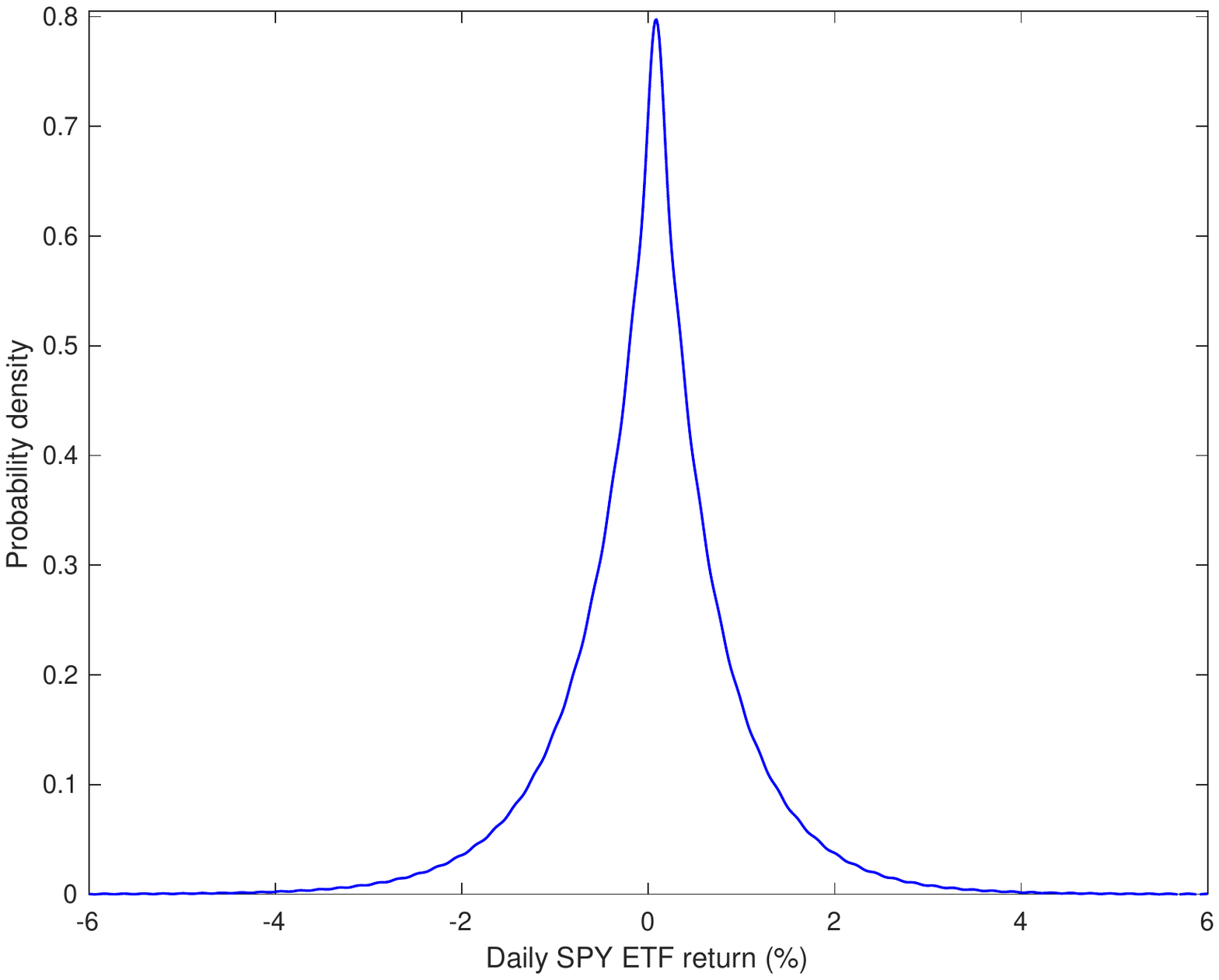}
\vspace{-0.6cm}
     \caption{Probability density of the VG model}
         \label{fig921}
          \end{subfigure}
\vspace{-0.7cm}
  \caption{VG Model: $\hat{\mu}=0.0848$, $\hat{\delta}=-0.0577$, $\hat{\sigma}=1.0295$, $\hat{\alpha}=0.8845$, $\hat{\theta}= 0.9378$}
  \label{fig912}
\vspace{-0.5cm}
\end{figure}

\noindent
Using the VG parameter estimations in table \ref{tab1}, we have  $x_{1}=1.4775$ and $x_{2}=-1.3640$. Fig \ref{fig911} and Fig \ref{fig921} display the L\'evy and the probability densities. As shown in Fig \ref{fig912}, the shape of the density functions are different; even-though, both densities are linked by the same characteristic function.\\
\noindent
Variance-Gamma (VG) Process can be described as a subfamily of the KoBoL family, which is the extension of Koponen's family by Boyarchenko and Levendorskii \cite{boyarchenko2002non}.  The KoBoL family is also called CGMY- model (named after Carr, German, Madan, and Yor) \cite{carr2003stochastic}. Under the KoBoL family, the L\'evy density has the following general form. See \cite{boyarchenko2002non} for more details
\begin{align}
   \Pi(u)= 
\begin{cases}
 C_{-}{|u|^{-\nu-1}} {e^{\lambda_{-}u}}  & \text{if } u<0 \\
 C_{+} {u^{-\nu-1}} {e^{-\lambda_{+}u}} & \text{if } u>0 \label{eq:l1171}
\end{cases}
 \end{align}
 where $C_{+}>0$, $C_{-}>0$, $\nu>0$ and $\lambda_{-}<0<\lambda_{+}$\\
  As a subfamily of the KoBoL family, the VG process belongs to the process class of order  $\nu=0$, intensity $C_{+}=C_{-}=\alpha$ and steepness parameters $\lambda_{-}=-x_{2}=-\frac{\delta}{\sigma^2} + \sqrt{\frac{\delta^2}{\sigma^4}+\frac{2}{\theta \sigma^2}}$ and $\lambda_{+}=x_{1}=\frac{\delta}{\sigma^2}+ \sqrt{\frac{\delta^2}{\sigma^4}+\frac{2}{\theta \sigma^2}}$. For $0<\nu<1$, see \cite{nzokemmay} for a general case of tempered stable distribution.
  
  \subsection{Variance - Gamma Process: Asymptotic distribution} 
 \begin{theorem}\label{lem6} (Variance-Gamma process probability density) \ \\
Let $Y=\{Y_{t}\}_{t\geq0}$  be a L\'evy process on $\mathbb{R}$ generated by the VG model  with parameter $(\mu, \delta, \sigma, \alpha, \theta)$. The probability density function can be written as follows
\begin{align}
f(y, t) &=\frac {1} {\sigma\Gamma(t\alpha) \theta^{t\alpha}}\int_{0}^{+\infty} \frac{1}{\sqrt{2\pi v}}e^{-\frac{(y-t\mu-\delta v)^2}{2v\sigma^2}}v^{t\alpha -1}e^{-\frac{v}{\theta}} \,dv  \quad \quad  t\geq 0  \quad  y \in \mathbb{R} \label{eq:l201}
 \end{align}
\end{theorem}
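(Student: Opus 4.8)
The plan is to identify $f(\cdot,t)$ as a Gamma mixture of Gaussian densities, mirroring at the level of densities the computation of the characteristic function carried out in \twoeqref{eq:l9399}{eq:l93998}. First I would invoke the decomposition \eqref{eq:l939}, namely $Y_{t}=\mu t+\delta\sigma^{2*}(t)+\sigma\int_{0}^{t}\sigma(s)\,dW(s)$, together with the standing hypothesis that the Brownian motion $W$ is independent of the volatility process $\sigma^{2}(\cdot)$. Conditioning on the $\sigma$-algebra $\scrG=\sigma\!\left(\{\sigma^{2}(s):0\le s\le t\}\right)$ renders the integrand $\sigma(s)=\sqrt{\sigma^{2}(s)}$ deterministic, so that $\int_{0}^{t}\sigma(s)\,dW(s)$ is a Wiener integral of a deterministic function and is therefore a centered Gaussian variable with variance $\int_{0}^{t}\sigma^{2}(s)\,ds=\sigma^{2*}(t)$, exactly as recorded in \eqref{eq:l93992}. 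Hence, conditionally on $\{\sigma^{2*}(t)=v\}$, the law of $Y_{t}$ is $N\!\left(t\mu+\delta v,\ \sigma^{2}v\right)$, with conditional density
\begin{align*}
p(y\mid v)=\frac{1}{\sigma\sqrt{2\pi v}}\,e^{-\frac{(y-t\mu-\delta v)^{2}}{2v\sigma^{2}}},\qquad v>0 .
\end{align*}

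Next I would use that $\sigma^{2*}(t)$ is the Gamma L\'evy process generated by $\Gamma(\alpha,\theta)$, so that $\sigma^{2*}(t)\overset{\text{d}}{=}\Gamma(t\alpha,\theta)$, as already used in the derivation of \eqref{eq:l93998}; consequently $\sigma^{2*}(t)$ has density $g(v)=v^{t\alpha-1}e^{-v/\theta}\big/\big(\Gamma(t\alpha)\theta^{t\alpha}\big)$ on $(0,\infty)$. Since $\sigma^{2*}(t)>0$ almost surely and the conditional law of $Y_{t}$ is absolutely continuous for every $v>0$, the unconditional law of $Y_{t}$ is absolutely continuous with density equal to the mixture
\begin{align*}
f(y,t)=\int_{0}^{\infty}p(y\mid v)\,g(v)\,dv .
\end{align*}
Substituting the two displays and pulling the constants $\sigma$, $\Gamma(t\alpha)$ and $\theta^{t\alpha}$ outside the integral produces precisely \eqref{eq:l201}.

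The steps that genuinely require justification are: (i) that conditioning on $\scrG$ turns the It\^o integral into a Gaussian Wiener integral, which rests on the independence of $W$ and $\sigma^{2}(\cdot)$ together with an approximation of $\sigma$ by simple (hence, on $\scrG$, deterministic) integrands and the stability of Gaussianity under $L^{2}$-limits; and (ii) that the integral defining \eqref{eq:l201} converges for almost every $y$. For (ii) the key observation is that for $y\ne t\mu$ the Gaussian factor decays like $e^{-c(y)/v}$ as $v\to0^{+}$ for some $c(y)>0$, which dominates the algebraic singularity $v^{t\alpha-3/2}$ of the remaining integrand, while the factor $e^{-v/\theta}$ secures integrability as $v\to+\infty$; on the Lebesgue-null set $\{y=t\mu\}$ the integral may diverge when $t\alpha\le\tfrac12$, which is immaterial for a probability density. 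I expect no serious obstacle here. As a consistency check one can apply Fubini's theorem to $\int_{\mathbb{R}}e^{i\xi y}f(y,t)\,dy$: integrating out $y$ first produces the Gaussian characteristic function $e^{it\mu\xi-\frac12\sigma^{2}\xi^{2}v}$, and integrating the result against $g(v)$ reproduces the characteristic function \eqref{eq:l93981}, so that \eqref{eq:l201} follows also via the Fourier inversion theorem.
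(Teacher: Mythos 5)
Your proposal is correct, but it proves the theorem by a different route than the paper does. You derive \eqref{eq:l201} from scratch as a normal mean--variance Gamma mixture: conditioning on the volatility path (using the independence of $W$ and $\sigma^{2}(\cdot)$) makes the It\^o integral a centered Gaussian with variance $\sigma^{2*}(t)$, so $Y_{t}\mid\{\sigma^{2*}(t)=v\}\sim N(t\mu+\delta v,\sigma^{2}v)$, and integrating against the $\Gamma(t\alpha,\theta)$ density of $\sigma^{2*}(t)$ gives the formula directly; this is the density-level analogue of the characteristic-function computation in \eqref{eq:l9399}--\eqref{eq:l93998}. The paper instead argues at the level of the characteristic exponent \eqref{eq:l1134}: it observes that the time-$t$ law is again a five-parameter VG law with the substitutions $\mu_{t}=t\mu$, $\alpha_{t}=t\alpha$ of \eqref{eq:l202}, and then quotes the known $t=1$ density formula from the earlier reference \cite{nzokem2021fitting}, so its proof is essentially a parameter substitution. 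Both arguments rest on the same identification $\sigma^{2*}(t)\overset{\text{d}}{=}\Gamma(t\alpha,\theta)$ already used for \eqref{eq:l93981}. Your version buys self-containedness (no appeal to the external density formula) and adds useful care the paper omits, namely the conditional-Gaussianity justification and the integrability discussion, including the observation that the integral may diverge only on the null set $\{y=t\mu\}$ when $t\alpha\le\tfrac12$; the paper's version buys brevity and makes explicit that the VG family is closed under the time scaling of $(\mu,\alpha)$. Your closing Fourier-inversion consistency check against \eqref{eq:l93981} is in effect a second, independent proof and is also sound.
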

\begin{proof}[Proof:]  \ \\
$\varphi(\xi,t)$ in (\ref{eq:l1134}) provides the relation between  the characteristic exponent  and the L\'evy density. the expression is used as follows:
 \begin{align*}
\varphi(\xi,t)= - Log\left(Ee^{i Y_{t}\xi}\right)&= -it\mu\xi + \int_{-\infty}^{+\infty}{(1-e^{-i x u})t\Pi(u)du}\\
t \Pi(u)&=t\alpha\left( \frac{1_{\{ u>0 \} }}{u}e^{-x_{1}u} + \frac{1_{\{ u<0 \} }}{\lvert u\rvert}e^{{\lvert x_{2}\rvert}u}\right) 
   \end{align*}
 \begin{align}
\mu_{t}=t\mu \quad \quad  \alpha_{t}= t \alpha \label{eq:l202}
   \end{align}

\noindent
It was shown in \cite{nzokem2021fitting}  that the probability density of a VG model with parameter $(\mu, \delta, \sigma, \alpha, \theta)$ can be written
\begin{align*}
f(y) &=\frac {1} {\sigma\Gamma(\alpha) \theta^{\alpha}}\int_{0}^{+\infty} \frac{1}{\sqrt{2\pi v}}e^{-\frac{(y-\mu-\delta v)^2}{2v\sigma^2}}v^{\alpha -1}e^{-\frac{v}{\theta}} \,dv 
 \end{align*}
\noindent
By replacing the parameters in (\ref{eq:l202}), we have the result in Theorem \ref{lem6}.
\end{proof} 

 \begin{theorem}\label{lem7} (Asymptotic distribution of Variance-Gamma process) \ \\
Let $Y=\{Y_{t}\}_{t\geq0}$ be a L\'evy process on $\mathbb{R}$ generated by the VG model  with parameter $(\mu, \delta, \sigma, \alpha, \theta)$.\\
Then  $Y_{t}$  converges in distribution to a L\'evy process driving by a Normal distribution with mean $a=\mu + \alpha \theta \delta$ and variance $\sigma^2=\alpha (\theta^2\delta^2 + \sigma^2\theta)$.
\begin{align}
Y_{t}\overset{\text{d}}{\sim}N(ta,t\sigma^2) \quad &\text{as}\label{eq:l203} \quad t \to +\infty
 \end{align}
\end{theorem}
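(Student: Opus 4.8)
The plan is to work directly with the characteristic function recorded in (\ref{eq:l93981})--(\ref{eq:l93983}), namely $E[e^{i\xi Y_t}]=\phi(\xi)^t$ with $\phi(\xi)=e^{i\mu\xi}\bigl(1-i\delta\theta\xi+\tfrac12\sigma^2\theta\xi^2\bigr)^{-\alpha}$, and to invoke the L\'evy continuity theorem after the usual centering and scaling of the central limit theorem. Since $Y$ has stationary independent increments and $Y_1$ has law determined by $\phi$, the loosely phrased assertion $Y_t\overset{\text{d}}{\sim}N(ta,t\sigma^2)$ is to be read in the CLT sense, and I would state the theorem in the rigorous form: $\bigl(Y_t-ta\bigr)/\sqrt t$ converges in distribution to $N(0,\sigma^2)$ as $t\to+\infty$, with $a=\mu+\alpha\theta\delta$ and $\sigma^2=\alpha(\theta^2\delta^2+\sigma^2\theta)$. (An alternative route is the classical CLT applied to the i.i.d.\ increments $Y_k-Y_{k-1}$, with the fractional part $Y_t-Y_{\lfloor t\rfloor}$, bounded in $L^2$, absorbed into the error after dividing by $\sqrt t$; I would keep the characteristic-function argument as the main line since it is self-contained given the paper's formula.)

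First I would observe that $g(\xi):=\log\phi(\xi)$ is analytic near $\xi=0$: the quadratic $1-i\delta\theta\xi+\tfrac12\sigma^2\theta\xi^2$ equals $1$ at the origin and its roots $ix_1,ix_2$ from (\ref{eq:l112}) are purely imaginary and nonzero, so the principal branch of the logarithm is well defined and smooth on a disc $|\xi|\le\rho$. Expanding $\log\bigl(1-i\delta\theta\xi+\tfrac12\sigma^2\theta\xi^2\bigr)=\bigl(-i\delta\theta\xi+\tfrac12\sigma^2\theta\xi^2\bigr)-\tfrac12\bigl(-i\delta\theta\xi\bigr)^2+O(|\xi|^3)$ yields
\begin{align*}
g(\xi)=i(\mu+\alpha\theta\delta)\xi-\frac{\alpha}{2}\bigl(\sigma^2\theta+\delta^2\theta^2\bigr)\xi^2+R(\xi)=ia\xi-\frac12\sigma^2\xi^2+R(\xi),
\end{align*}
with $|R(\xi)|\le C|\xi|^3$ on $|\xi|\le\rho$. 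This simultaneously identifies $a$ and $\sigma^2$ as the first two cumulants of $Y_1$, which also matches the conditional mean and variance of (\ref{eq:l939}) given $\sigma^{2*}(1)\overset{\text{d}}{=}\Gamma(\alpha,\theta)$, using $E[\sigma^{2*}(1)]=\alpha\theta$ and $\mathrm{Var}(\sigma^{2*}(1))=\alpha\theta^2$.

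Next, using (\ref{eq:l93983}), the characteristic function of the normalized variable is
\begin{align*}
E\!\left[e^{i\xi(Y_t-ta)/\sqrt t}\right]=e^{-i\xi a\sqrt t}\,\phi(\xi/\sqrt t)^{t}=\exp\!\left(-i\xi a\sqrt t+t\,g(\xi/\sqrt t)\right).
\end{align*}
Substituting the expansion of $g$, the exponent equals $-i\xi a\sqrt t+t\bigl(ia\,\xi/\sqrt t-\tfrac12\sigma^2\xi^2/t+R(\xi/\sqrt t)\bigr)=-\tfrac12\sigma^2\xi^2+t\,R(\xi/\sqrt t)$, and for each fixed $\xi$ the remainder satisfies $|t\,R(\xi/\sqrt t)|\le C|\xi|^3 t^{-1/2}\to0$ (once $t$ is large enough that $|\xi|/\sqrt t\le\rho$). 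Hence $E[e^{i\xi(Y_t-ta)/\sqrt t}]\to e^{-\sigma^2\xi^2/2}$ pointwise in $\xi$, and the L\'evy continuity theorem gives $(Y_t-ta)/\sqrt t\Rightarrow N(0,\sigma^2)$, which is the assertion (\ref{eq:l203}).

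The only genuinely delicate point will be the bookkeeping of the complex logarithm: one must use the principal branch consistently with the definition of $\varphi(\xi,t)$ in (\ref{eq:l93982}) and verify that $1-i\delta\theta\xi+\tfrac12\sigma^2\theta\xi^2$ stays in the slit plane (away from $(-\infty,0]$) throughout the shrinking range $|\xi|\le\rho$ where the Taylor estimate is applied — which is immediate since this expression tends to $1$ as $\xi\to0$. Everything else is the routine quadratic expansion above; if one prefers to sidestep the logarithm branch, one can instead write $e^{-i\xi a\sqrt t}\phi(\xi/\sqrt t)^{t}=\bigl(e^{-i\xi a/\sqrt t}\phi(\xi/\sqrt t)\bigr)^{t}=\bigl(1-\tfrac12\sigma^2\xi^2/t+O(t^{-3/2})\bigr)^{t}\to e^{-\sigma^2\xi^2/2}$, using $(1+c/t+o(1/t))^t\to e^{c}$ together with the second-order expansion $e^{-i\xi a/\sqrt t}\phi(\xi/\sqrt t)=1-\tfrac12\sigma^2\xi^2/t+O(t^{-3/2})$.
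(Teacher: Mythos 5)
Your proposal is correct and follows essentially the same route as the paper's own proof: both center and scale $Y_t$, expand the logarithm of the characteristic function $\phi(\xi)^t$ from (\ref{eq:l93981}) to second order in a neighborhood of the origin, and conclude via pointwise convergence of characteristic functions to the Gaussian limit. Your version differs only cosmetically (normalizing by $\sqrt{t}$ rather than $\sqrt{t}\,b$, and making the remainder bound and the branch of the complex logarithm explicit), which tightens the paper's argument without changing its substance.
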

\begin{proof}[Proof:]  \ \\
Let us have
\begin{align*}
b_{t}& =\sqrt{t}b  \ & \ a_{t}&=ta\\
b&=\sqrt{\alpha (\theta^2\delta^2 + \sigma^2\theta)} \  & \ a&=\mu + \alpha \theta \delta 
 \end{align*}
\noindent
$\phi(\xi,t)$ is the characteristic function of the process $Y=\{Y_{t}\}_{t\geq0}$ , we use the expression (\ref{eq:l93981}).
\begin{align*}
\phi(\xi,t)=E\left[e^{i Y_{t}\xi}\right]&= \frac{e^{i t\mu \xi}}{\left( 1 - i\delta\theta\xi + \frac{1}{2}\sigma^2\theta\xi^2\right)^{t\alpha}} 
\end{align*}
\noindent
$\phi^{T}(\xi,t)$ is the characteristic function of the stochastic  process $\{\frac{{Y_{t}}-a_{t}}{b_{t}}\}_{t\geq0}$ and we have 
 \begin{align*}
\phi^{T}(\xi,t)= E\{e^{i \frac{{Y_{t}}-a_{t}}{b_{t}}\xi}\} =e^{-i \frac{a_{t}}{b_{t}}\xi}E\{e^{i \frac{\xi}{b_{t}}Y_{t}}\}&=e^{-i \frac{a_{t}}{b_{t}}\xi}\phi(\frac{\xi}{b_{t}},t)=\frac{e^{i t\alpha \theta \delta\frac{\xi}{b_{t}}}}{\left(1+\frac{1}{2}\theta \sigma^{2}\frac{\xi^2}{b_{t}^2} - i\delta\theta \frac{\xi}{b_{t}}\right)^{t\alpha}}\\
&=e^{i t\alpha \theta \delta\frac{\xi}{b_{t}}}{\left(1+\frac{1}{2}\theta \sigma^{2}\frac{\xi^2}{tb^2} - i\delta\theta \frac{\xi}{\sqrt{t}b}\right)^{-t\alpha}}
   \end{align*}
Let us have
\begin{align*}
u(t) =\frac{1}{2}\theta \sigma^{2}\frac{\xi^2}{tb^2} - i\delta\theta \frac{\xi}{\sqrt{t}b}  \quad \quad \quad  \lim_{t \to +\infty}u(t)=0 
 \end{align*}
We can use the Taylor expansions of $\ln(1+u)$
 \begin{align*}
 \ln(1+\frac{1}{2}\theta \sigma^{2}\frac{\xi^2}{tb^2} - i\delta\theta \frac{\xi}{\sqrt{t}b})&=\frac{1}{2}(\theta \sigma^{2} + \delta^2\theta^2)\frac{\xi^2}{tb^2} - i\delta\theta \frac{\xi}{\sqrt{t}b} + o\left(\frac{1}{t\sqrt{t}}\right)\\
\lim_{t \to +\infty} o\left(\frac{1}{t\sqrt{t}}\right)&=0   
\end{align*}
The characteristic function, $\phi^{T}(\xi,t)$, developed previously  becomes:
\begin{align*}
\phi^{T}(\xi,t)= e^{i t\alpha \theta \delta\frac{\xi}{b_{t}}}{\left(1+\frac{1}{2}\theta \sigma^{2}\frac{\xi^2}{tb^2} - i\delta\theta \frac{\xi}{\sqrt{t}b}\right)^{-t\alpha}}&=e^{i t\alpha \theta \delta\frac{\xi}{b_{t}}}e^{-t\alpha\ln(1+\frac{1}{2}\theta \sigma^{2}\frac{\xi^2}{tb^2} - i\delta\theta \frac{\xi}{\sqrt{t}b})}\\
&=e^{-\frac{1}{2}\alpha(\theta \sigma^{2} + \delta^2\theta^2)\frac{\xi^2}{b^2} + o\left(\frac{1}{\sqrt{t}}\right)}\\
&=e^{-\frac{1}{2}\xi^2 + o\left(\frac{1}{\sqrt{t}}\right)}
   \end{align*}
We have 
 \begin{align}
\lim_{t \to +\infty}\phi^{T}(\xi,t)=\lim_{t \to +\infty}E\{e^{i \frac{{Y_{t}}-a_{t}}{b_{t}}\xi}\}=e^{-\frac{1}{2}\xi^2} \label{eq:l555a}
\end{align}
\noindent
By applying the limit in (\ref{eq:l555a}),  we produce the cumulant-generating function \cite{kendall1946advanced} of the Normal distribution. We  have the following convergence in distribution
\begin{align*}
\frac{{Y_{t}}-a_{t}}{b_{t}}\overset{\text{d}}{\sim}N(0,1) \quad &\text{as}\quad t \to +\infty
 \end{align*}
\end{proof} 
\noindent
As shown in (\ref {eq:l202}), the dynamic of the probability density $f(y, t)$ is carried by two parameters: $t\mu$ and $t\alpha$. $f(y, t)$ can be compared to the histogram of the daily SPY ETF return data, as shown in Fig \ref{fig221}. Fig \ref{fig222} shows the shape of the probability densities (\ref{eq:l201})  adjusted at different timeframes:  Quarterly ($\tau=0.25$), Semi-Annual ($\tau=0.5$), Third-Quarterly ($\tau=0.75$), and Annual ($\tau=1$).
\begin{figure}[ht]
 \vspace{-0.2cm}
  \centering
  \begin{subfigure}[b]{0.45\linewidth}
    \includegraphics[width=\linewidth]{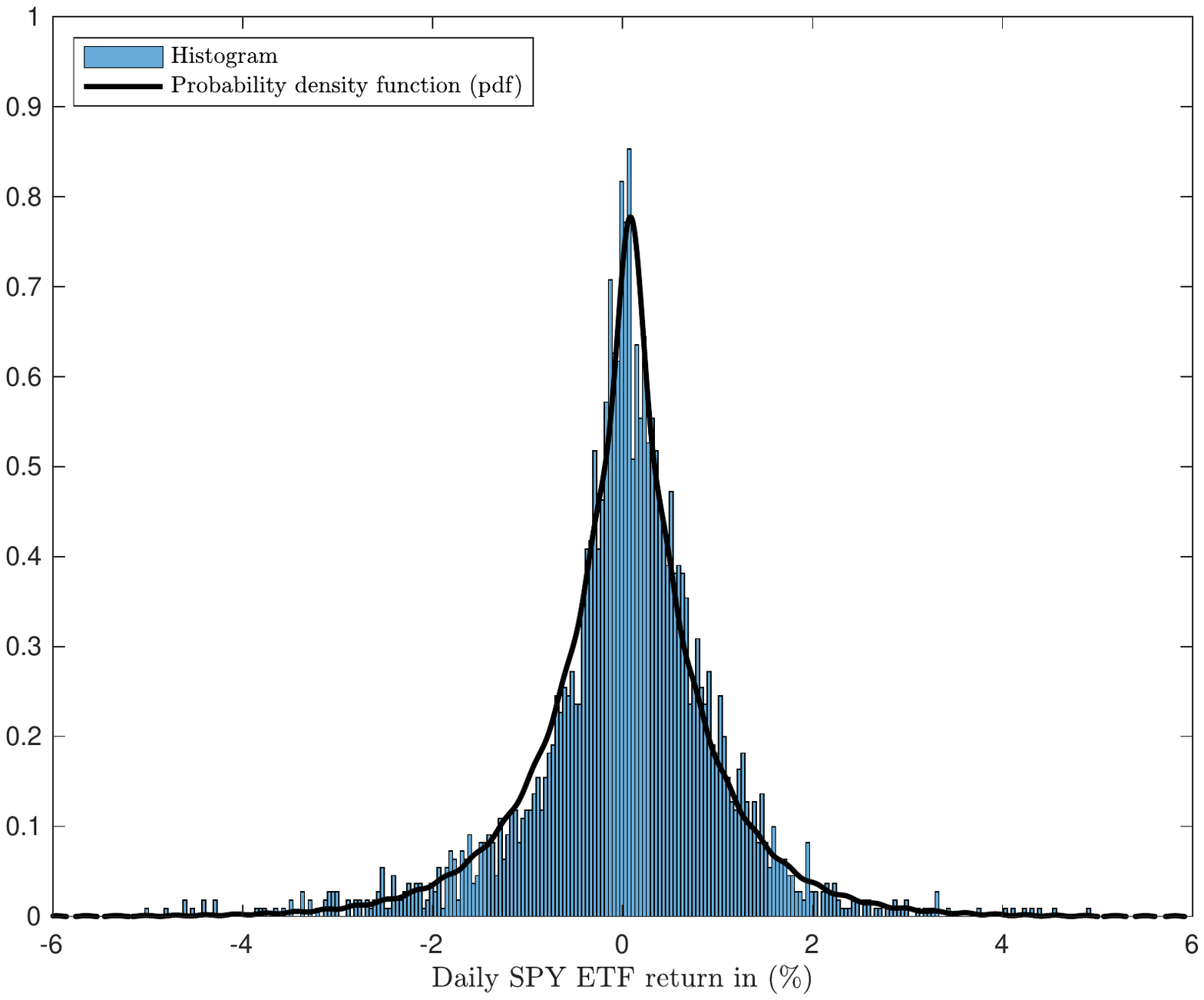}
\vspace{-0.5cm}
     \caption{Daily SPY ETF Return in (\%)}
         \label{fig221}
  \end{subfigure}
  \begin{subfigure}[b]{0.45\linewidth}
    \includegraphics[width=\linewidth]{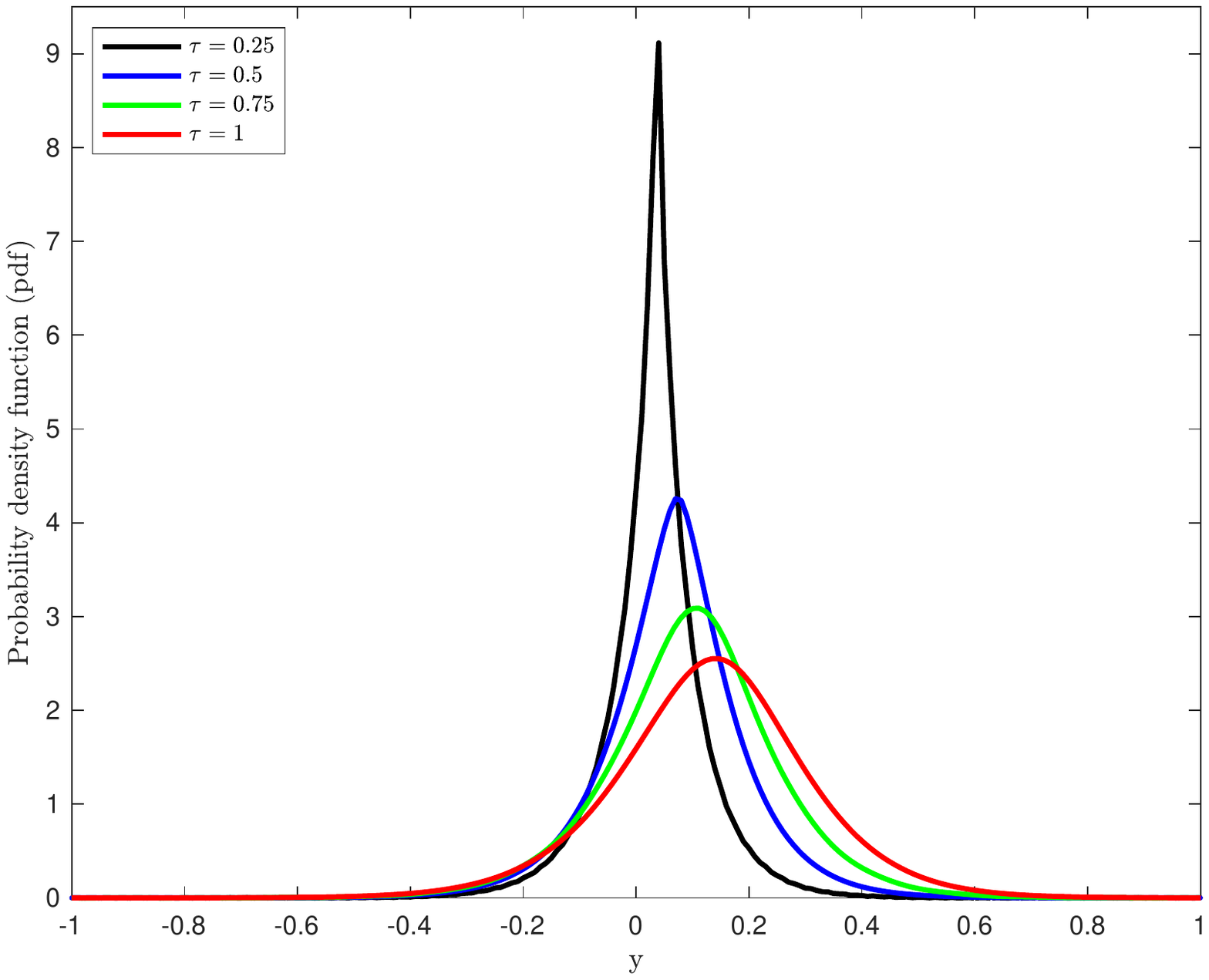}
\vspace{-0.5cm}
     \caption{$f(y,\tau)$: $\tau$ in years}
         \label{fig222}
          \end{subfigure} 
\vspace{-0.5cm}
  \caption{$f(y, t)$: $\hat{\mu}=0.0848$, $\hat{\delta}=-0.0577$, $\hat{\sigma}=1.0295$, $\hat{\alpha}=0.8845$, $\hat{\theta}= 0.9378$}
  \label{fig220}
\vspace{-0.5cm}
\end{figure}
\noindent
The discrepancy between the shape of the probability densities (\ref{eq:l201}) can be explained by the Asymptotic distribution in Theorem \ref{lem7}. When the time frame becomes large, The VG probability density generated by the L\'evy Process changes its nature and becomes a Normal distribution Process. Empirically, the convergence is illustrated in Fig \ref{fig222}.
\section{Variance - Gamma Process: Pricing European Options}
\subsection{Variance - Gamma Process:  Risk Neutral Esscher Measure}
\noindent
The method of Esscher transforms introduced by \cite{gerber1993option}  as an efficient technique for pricing derivative securities if a L\'evy process models the logarithms of the underlying asset prices. An Esscher transform of a stock-price process provides an equivalent martingale measure; under such measure, the price of any derivative security is calculated as the expectation of the discounted payoffs. In some cases, the Esscher transform of a distribution \cite{gerber1993option} remains in the family of the original distributions. In particular, Gamma, Exponential, Normal, Inverse Gaussian, Negative Binomial, Geometric, Poisson, and Compound Poisson distributions are examples of such conservative distributions. The existence of the equivalent Esscher transform measure is not always guaranteed, and the issue of the unicity of the equivalent martingale measure remains recurrent when pricing an option with a L\'evy process.\\
From the characteristic function $\phi(\xi)$ in (\ref{eq:l93982}),  We have the Moment generating function of the VG model.
\begin{equation}
\begin{aligned}
M(h,t)&= \phi(-ih)^{t} =\frac{e^{t \mu h}}{\left(1 - \frac{1}{2}\theta \sigma^{2}h^{2} - \delta\theta h\right)^{t \alpha}}={M(h,1)}^{t} \label {eq:l31} \hspace{5mm}  \hbox{with $h_{1} < h < h_{2}$ } \\
M(h,1)&=\frac{e^{\mu h}}{\left(1 - \frac{1}{2}\theta \sigma^{2}h^{2} - \delta\theta h\right)^{\alpha}} \hspace{5mm}  \hbox{with $h_{1} < h < h_{2}$ } \\
 h_{1}&= -\frac{\delta}{\sigma^2} - \sqrt{\frac{\delta^2}{\sigma^4}+\frac{2}{\theta \sigma^2}}  \quad h_{2}=-\frac{\delta}{\sigma^2}+ \sqrt{\frac{\delta^2}{\sigma^4}+\frac{2}{\theta \sigma^2}} 
\end{aligned}
\end{equation}
\noindent
Under the Esscher transform with parameter h, the probability density of $Y={Y_{t}}$  becomes:
\begin{align}
\hat{f}(x,t,h)= \frac{e^{hx}{f(x, t)}}{M(h,t)}   \quad  \quad \hbox{with $h_{1} < h < h_{2}$ } \label {eq:l3021}
\end{align} 
\noindent
The Moment generating function  of the Esscher transform VG model with $h_{1} < h < h_{2}-z$
\begin{equation}
\begin{aligned}
M(z,t,h)= E^{h}\left[e^{zX_{t}}\right]=\int_{0}^{+\infty}e^{zx}\hat{f}(x,t,h)dx&= \int_{0}^{+\infty}\frac{e^{(h+z)x}{f(x, t)}}{M(x,t)} dx  \label {eq:l321}\\
&=\frac{M(h+z,t)}{M(h,t)}  \quad \hbox{with $h_{1}  < h < h_{2} - z$ }\\
&=\left(\frac{M(h+z,1)}{M(h,1)}\right)^{t} =M(z,1,h)^{t} 
\end{aligned}
\end{equation}
with 
\begin{equation}
\begin{aligned}
M(z,1,h)=\frac{M(h+z,1)}{M(h,1)} = e^{\mu z} {\left(M_{**}(z,1,h)\right)}^{\alpha} \\ M_{**}(z,1,h)=\frac{1 - \frac{1}{2}\theta \sigma^{2}h^{2} - \delta\theta h}{1 - \frac{1}{2}\theta \sigma^{2}(h+z)^{2} - \delta\theta (h+z)} \label {eq:l332}
\end{aligned}
\end{equation}

\noindent
$\hat{f}(x,t,h)$ is the modified probability density of $f(x,t)$ defined in (\ref{eq:l201}). The function $exp(x)$ is a strictly increasing function, and the probability measure generated by $\hat{f}(x,t,h)$ is equivalent to the original probability measure generated by $f(x,t)$. Both probability measures have the same null sets \cite{gerber1993option}(sets of probability measure zero).\\
Given the process $\left\{e^{-r\tau}S(\tau)\right\}_{\tau \geq 0}$ with $r$ the constant risk-free rate of interest. We look into the conditions to have   $h=h^*$ such that
\begin{align}
E^{h^*}\left[e^{-r\tau}S(\tau)\right]=S(0) \label {eq:l33}
\end{align}
\noindent
We have $S(\tau)=S(0)e^{Y_{\tau}}$,  with $Y_{\tau}$ is the Variance - Gamma process. The equation (\ref {eq:l33}) becomes
\begin{align}
e^{r\tau}=E^{h^*}\left[e^{Y_{\tau}}\right]=M(1,1, h^*)^{\tau}=\left(\frac{M(h^*+1,1)}{M(h^*,1)}\right)^{\tau}  \quad \hbox{with $h_{1} < h^{*} < h_{2}-1$ } \label {eq:l34}
\end{align}
\noindent
The first condition is that 
\begin{align*}
h_{2}  - h_{1} >1  \quad   \quad   \frac{\delta^2}{\sigma^4}+\frac{2}{\theta \sigma^2} >\frac{1}{4}
\end{align*}
\noindent
The equation (\ref {eq:l34}) is equivalent to (\ref {eq:l35}).
\begin{align}
e^{\frac{r-\mu}{\alpha}}&=M_{**}(1,1,h^*)=\frac{M(h^*+1,1)}{M(h^*,1)}=\frac{1 - \frac{1}{2}\theta \sigma^{2}h^{*2} - \delta\theta h^*}{1 - \frac{1}{2}\theta \sigma^{2}(h^*+1)^{2} - \delta\theta (h^*+1)} \label {eq:l35}
\end{align}
We consider the function  $g(h)$ defined as follows
$$g(h)=\frac{1 - \frac{1}{2}\theta \sigma^{2}h^{2} - \delta\theta h}{1 - \frac{1}{2}\theta \sigma^{2}(h+1)^{2} - \delta\theta (h+1)} $$
\begin{align*}
\frac{dg}{dh}(h)=\frac{\frac{1}{2}\theta^{2} \sigma^{4}h^{2} + \delta\theta^{2}(\frac{1}{2}\sigma^{2}+\delta)h + \delta\theta^{2}(\frac{1}{2}\sigma^{2}+\delta)+\theta \sigma^{2} }{(1 - \frac{1}{2}\theta \sigma^{2}(h+1)^{2} - \delta\theta (h+1))^{2}} 
\end{align*}
we have 
\begin{align}
\frac{dg}{dh}(h)>0 \quad \hbox{$h_{1} < h < h_{2}-1$}\quad  \quad \quad \lim_{h \to h_{1}} g(h)&= 0  \quad  \quad \quad \lim_{h \to {h_{2}-1}^{-}} g(h)=+\infty \label {eq:l362}
\end{align}
(\ref {eq:l362}) shows the existence and the unicity of $h^*$ in $[h_{1}, h_{2}-1[$ such that 
\begin{align*}
e^{\frac{r-\mu}{\alpha}}=g(h^*)
\end{align*}
\noindent
For VG Model in Table \ref {tab1} \cite{nzokem2021fitting,nzokem_2021b}, the existence and unicity of $h^*$ can be studied empirically as shown in Fig \ref {fig11}. Over the interval $[h_{1}; h_{2}-1]$, g(h) is an increasing function, as shown in Fig \ref{fig11a}. Fig \ref{fig11b} provides the solution $h^*$ of equation (\ref{eq:l35}) for free interest rate less than 10\%. The solution $h^*$ increases with the free interest rate $r$.
\begin{figure}[ht]
    \centering
  \begin{subfigure}[b]{0.4\linewidth}
    \includegraphics[width=\linewidth]{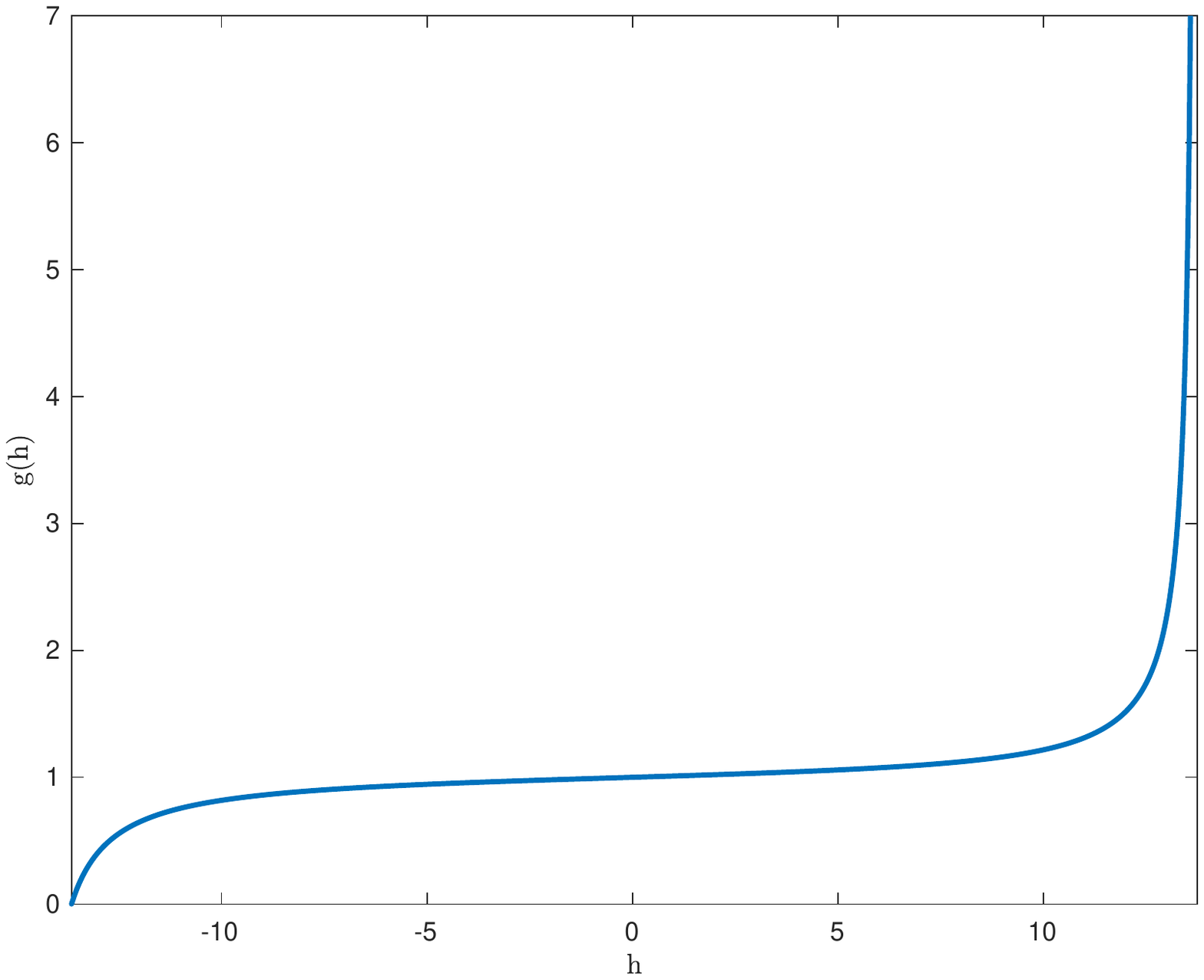}
\vspace{-0.5cm}
     \caption{$g(h)$}
         \label{fig11a}
  \end{subfigure}
  \begin{subfigure}[b]{0.4\linewidth}
    \includegraphics[width=\linewidth]{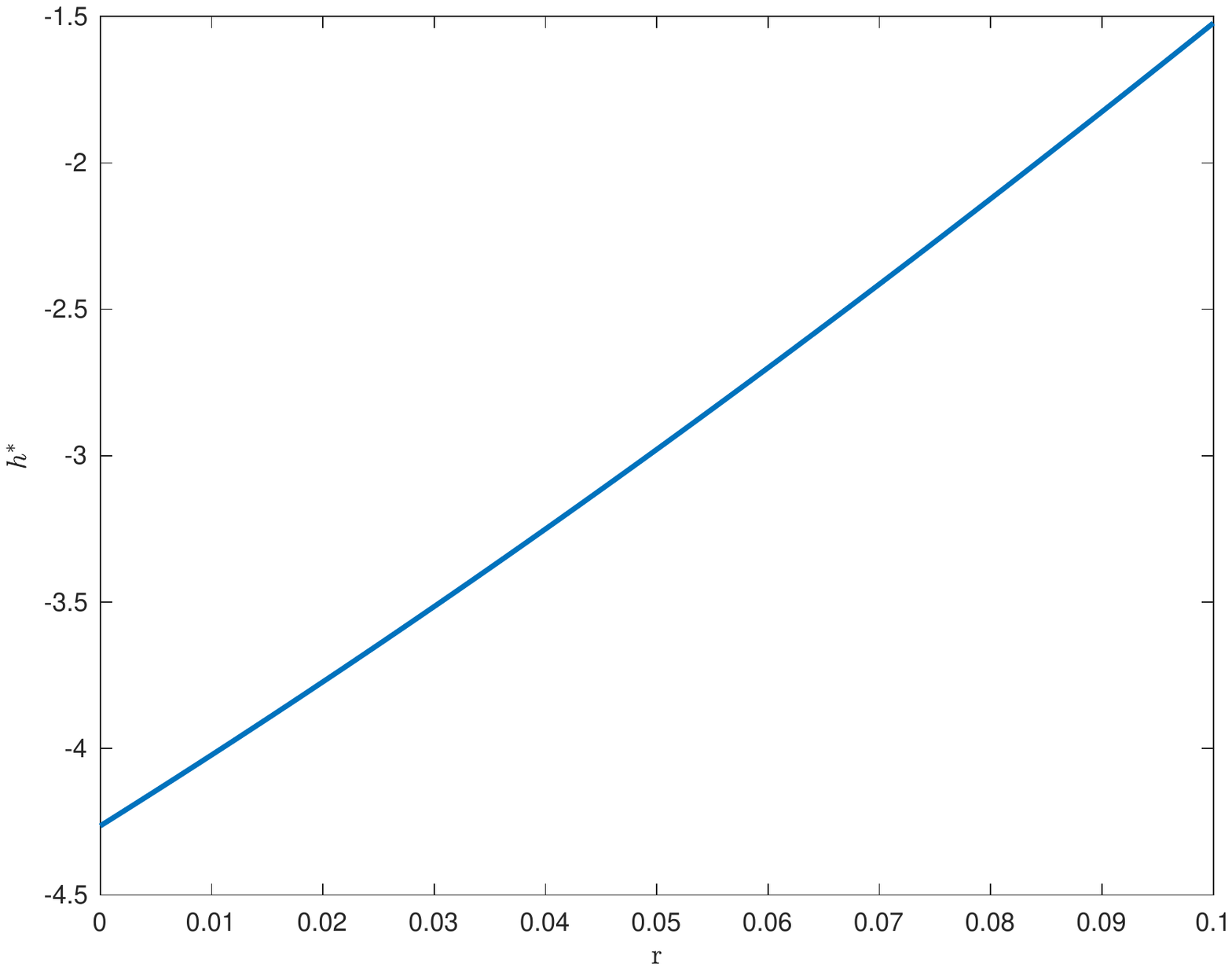}
\vspace{-0.5cm}
     \caption{$e^{\frac{r-\mu}{\alpha}}=g(h)$ }
         \label{fig11b}
          \end{subfigure}
\vspace{-0.5cm}
  \caption{VG Model:  $\hat{\mu}=0.0848$, $\hat{\delta}=-0.0577$, $\hat{\sigma}=1.0295$, $\hat{\alpha}=0.8845$, $\hat{\theta}= 0.9378$, $h_{1}=-13.6511$ and  $h_{2}=14.7399$}
  \label{fig11}
\vspace{-0.5cm}
\end{figure}

\noindent
From the Esscher transform, we have the Equivalent Martingale Measure (EMM) $\mathbf{Q}$, which can be written as the Radon-Nikodym
derivative:
\begin{align}
\frac{d\mathbf{Q}}{d\mathbf{P}}=\frac{e^{h^* Y_{\tau}}}{M(h^*,\tau)}=e^{h^* Y_{\tau} - log(M(h^*,\tau))} \label {eq:l37}
\end{align}
and $E^{\mathbf{Q}}$ for expectation with respect to $\mathbf{Q}$
\begin{align*}
 E^{\mathbf{Q}}\left[e^{-r\tau}S(\tau)\right] &= E^{\mathbf{P}}\left[e^{-r\tau}S(\tau)\frac{d\mathbf{Q}}{d\mathbf{P}}\right] = S(0)E^{P}\left[e^{(1+h^*)Y_{\tau} - log(M(h^*,\tau)) - r \tau}\right]\\
  &= S(0)E^{\mathbf{P}}\left[e^{(1+h^*)Y_{\tau}}\right] e^{- log(M(h^*,\tau)) - r \tau}\\
   &= S(0)\frac{M(1+h^*,\tau)}{M(h^*,\tau)}e^{- r \tau} \quad \quad (\ref {eq:l34}) \quad e^{r\tau}=\left(\frac{M(h^*+1,\tau )}{M(h^*,\tau)}\right) \\
    &= S(0)
   \end{align*}
we have 
\begin{align}
 E^{\mathbf{Q}}\left[e^{-r\tau}S(\tau)\right] = S(0) \label {eq:l377}
    \end{align}
\smallskip
\begin{theorem}\label{lem71} (Variance - Gamma Esscher transform distribution) \ \\
\noindent
Esscher transform of  Variance - Gamma process $Y=\{Y_{t}\}_{t \geq0} $  with parameter $( t \mu, \delta, \sigma, t \alpha, \theta)$ is also a  Variance - Gamma process  with parameter $( t \mu, \tilde{\delta}, \sigma, t \alpha, \tilde{\theta})$ 
 \begin{align}
\tilde{\delta}=\delta + h\sigma^{2}\quad \quad
\tilde{\theta}=\frac{\theta}{1 - \frac{1}{2}\theta \sigma^{2}{h}^{2} - \delta\theta {h}} \label {eq:l377a}
\end{align}
\end{theorem}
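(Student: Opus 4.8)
The plan is to carry out the entire argument at the level of moment generating functions, exploiting that a L\'evy process is completely determined by the law of $Y_1$, hence by $M(\cdot,1)$ on any interval where it is finite. From the Esscher density $\hat f(x,t,h)=e^{hx}f(x,t)/M(h,t)$ in (\ref{eq:l3021}) one reads off that the transformed process has time-$t$ moment generating function $M(z,t,h)=M(h+z,t)/M(h,t)$, which by (\ref{eq:l321}) equals $M(z,1,h)^t$ with $M(z,1,h)=e^{\mu z}\bigl(M_{**}(z,1,h)\bigr)^{\alpha}$ and $M_{**}$ the rational function displayed in (\ref{eq:l332}). So the whole theorem reduces to simplifying the single quotient $M_{**}(z,1,h)$.

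First I would expand the denominator of $M_{**}(z,1,h)$, namely $1-\tfrac12\theta\sigma^2(h+z)^2-\delta\theta(h+z)$, in powers of $z$. Setting $C:=1-\tfrac12\theta\sigma^2 h^2-\delta\theta h$ (which is exactly the numerator of $M_{**}$), the constant term is $C$, the coefficient of $z$ is $-\theta(\sigma^2 h+\delta)$, and the coefficient of $z^2$ is $-\tfrac12\theta\sigma^2$; factoring $C$ out yields $1-\tfrac12\theta\sigma^2(h+z)^2-\delta\theta(h+z)=C\bigl(1-\tfrac12\tilde\theta\sigma^2 z^2-\tilde\theta\tilde\delta z\bigr)$ precisely for $\tilde\delta=\delta+h\sigma^2$ and $\tilde\theta=\theta/C$. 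Hence $M_{**}(z,1,h)=\bigl(1-\tfrac12\tilde\theta\sigma^2 z^2-\tilde\delta\tilde\theta z\bigr)^{-1}$, so $M(z,1,h)=e^{\mu z}\bigl(1-\tfrac12\tilde\theta\sigma^2 z^2-\tilde\delta\tilde\theta z\bigr)^{-\alpha}$, which is literally the VG moment generating function $M(\cdot,1)$ of (\ref{eq:l31}) with parameters $(\mu,\tilde\delta,\sigma,\alpha,\tilde\theta)$. Raising to the power $t$ then identifies $M(z,t,h)$ with the time-$t$ MGF of the VG process with parameters $(t\mu,\tilde\delta,\sigma,t\alpha,\tilde\theta)$, which is the assertion.

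The part needing genuine (if modest) care is checking that this identification is legitimate, i.e. that $\tilde\theta$ is an admissible positive scale parameter and that the MGF identity holds on a nonempty interval. For positivity, observe that $h_1$ and $h_2$ of (\ref{eq:l31}) are exactly the two roots of the downward-opening quadratic $h\mapsto 1-\tfrac12\theta\sigma^2 h^2-\delta\theta h$; therefore $C>0$ for every admissible Esscher parameter $h\in(h_1,h_2)$, so $\tilde\theta=\theta/C>0$, while $\tilde\delta=\delta+h\sigma^2\in\mathbb{R}$ and $\sigma,\alpha$ are unchanged. For the domain, the ratio $M(h+z,1)/M(h,1)$ is finite exactly when $h_1<h$ and $h+z<h_2$, an open $z$-interval around $0$; on it the computation above is an identity of analytic functions, and since the MGF of an infinitely divisible distribution determines it, the Esscher-transformed process is indeed the stated VG process. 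As a byproduct one may also record how the steepness parameters $x_1,x_2$ of (\ref{eq:l112}) transform under $\delta\mapsto\tilde\delta$, $\theta\mapsto\tilde\theta$, but this is not required for the theorem.
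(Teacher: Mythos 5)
Your proof is correct and takes essentially the same route as the paper: both reduce the theorem to simplifying the quotient $M_{**}(z,1,h)$ in (\ref{eq:l332}) into the form $\left(1-\tfrac{1}{2}\tilde{\theta}\sigma^{2}z^{2}-\tilde{\delta}\tilde{\theta}z\right)^{-1}$, recognize $M(z,1,h)=e^{\mu z}\left(1-\tfrac{1}{2}\tilde{\theta}\sigma^{2}z^{2}-\tilde{\delta}\tilde{\theta}z\right)^{-\alpha}$ as the VG moment generating function with parameters $(\mu,\tilde{\delta},\sigma,\alpha,\tilde{\theta})$, and raise to the power $t$. Your additional checks --- that $\tilde{\theta}>0$ because $h_{1},h_{2}$ are the roots of the downward-opening quadratic so the numerator is positive on $(h_{1},h_{2})$, and that the identity holds on a nonempty open $z$-interval where the MGF determines the law --- merely make explicit details the paper leaves implicit.
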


\begin{proof}[Proof:] \ \\
From (\ref{eq:l332}), we have
\noindent
\begin{align}
M_{**}(z,1,h) =\frac{1 - \frac{1}{2}\theta \sigma^{2}h^{2} - \delta\theta h}{1 - \frac{1}{2}\theta \sigma^{2}(h+z)^{2} - \delta\theta (h+z)} \label {eq:l377b}\end{align}
\noindent
 We can divide the denominator by the numerator of the function $M_{**}(z,1,{h})$ in (\ref{eq:l377b}) and rearrange the resulting expression.
\begin{align}
M_{**}(z,1,h) = \frac{1}{1 - \frac{1}{2}\tilde{\theta} \sigma^{2}z^{2} - \tilde{\delta}\tilde{\theta} z} \quad \tilde{\theta} = \frac{\theta}{1 - \frac{1}{2}\theta \sigma^{2}h^{2} - \delta\theta h} \quad  \tilde{\delta}=\delta + h\sigma^{2} \label {eq:l3321}
\end{align}
$M(z,1, h)$ in (\ref{eq:l332}) becomes
\begin{align}
M(z,1,{h})=  \frac{e^{\mu z}}{\left(1 - \frac{1}{2}\tilde{\theta} \sigma^{2}z^{2} - \tilde{\delta}\tilde{\theta} z\right)^{\alpha}} \label {eq:l3311}
\end{align}
\noindent
Using the Esscher transform  method, the moment generating function for Variance - Gamma process $Y=\{Y_{t}\}_{t \geq0} $ becomes:
\begin{align}
M(z,t,{h})= E^{{h}}\left[e^{zY_{t}}\right]=M(z,1,{h})^{t} = \frac{e^{ t \mu z}}{\left(1 - \frac{1}{2}\tilde{\theta} \sigma^{2}z^{2} - \tilde{\delta}\tilde{\theta} z\right)^{t \alpha}}\quad \hbox{with $\tilde{h}_{1} < z < \tilde{h}_{2}$ } \label {eq:l3312}
\end{align}
We have a new Variance - Gamma process  with parameter $( t \mu, \tilde{\delta}, \sigma, t \alpha, \tilde{\theta})$ 
  \end{proof} 
\noindent
The Esscher transform method preserves the structure of the five-parameter VG process; it introduces an addition symmetric parameter ($h\sigma^{2}$) and inflates the Gamma scale parameter by $\frac{1}{1 - \frac{1}{2}\theta \sigma^{2}h^{2} - \delta\theta h}$ factor.\\
\subsection{Variance - Gamma Model: Extended Black-Scholes Formula}
\begin{corollary}\label{lem8} (Extended Black-Scholes) \ \\
 Let $r$ a continuously compounded risk-free rate of interest; $Y=\{Y_{t}\}_{t \geq0} $, a VG Process with parameter $(\mu t, \delta, \sigma, \alpha t, \theta)$; and $(S(0)e^{X_{T}} - K)^{+}$ the terminal payoff for a contingent claim with the expiry date $T$.\\
Then at time $t <T$, the arbitrage price of the European call option with the strike price $K$ can be written as follows.
 \begin{align}
F^{GV}_{call}(S_{t},t)&= S(t)\left[ 1-\hat{F}(log(\frac{K}{S(t)}),\tau, h^{*}+1) \right] - Ke^{-r\tau}\left[ 1-\hat{F}(log(\frac{K}{S(t)}),\tau, h^{*})\right] \label {eq:l5}\\
&\hat{F}(log(\frac{K}{S(t)}),\tau, h^{*})=\int_{-\infty }^{log(\frac{K}{S(t)})}\hat{f}(\xi, \tau, h^{*})d\xi \quad \hbox{$\hat{f}(\xi, \tau, h^{*})$ in (\ref{eq:l3021})}
\end{align}
where $\tau=T-t$, and $\hat{F}(k,\tau, h^{*})$ and $\hat{F}(k,\tau, h^{*}+1)$ are the cumulative distribution of VG Model with parameter $(\tau \mu, \tilde{\delta}, \sigma, \tau \alpha, \tilde{\theta})$  and parameter $(\tau \mu, \tilde{\delta} + \sigma^{2}, \sigma, \tau \alpha, \tilde{\theta}e^{\frac{r-\mu}{\alpha}})$ respectively 
\end{corollary}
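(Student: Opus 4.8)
The plan is to evaluate the arbitrage price as the $\mathbf{Q}$-discounted expected payoff, $F^{GV}_{call}(S_t,t)=e^{-r\tau}E^{\mathbf{Q}}\bigl[(S(T)-K)^{+}\mid\scrF_t\bigr]$ with $\tau=T-t$, and to reduce it to an integral against the Esscher density. Writing $S(T)=S(t)e^{Y_T-Y_t}$ and using~(\ref{eq:l37}), the conditional Radon--Nikodym density ratio $(d\mathbf{Q}/d\mathbf{P})|_{\scrF_T}/(d\mathbf{Q}/d\mathbf{P})|_{\scrF_t}$ equals $e^{h^{*}(Y_T-Y_t)}/M(h^{*},\tau)$, since $M(h^{*},T)/M(h^{*},t)=M(h^{*},1)^{\tau}=M(h^{*},\tau)$. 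Because $Y_T-Y_t$ is $\mathbf{P}$-independent of $\scrF_t$ with the law of $Y_\tau$ and $S(t)$ is $\scrF_t$-measurable, the conditional expectation collapses to a deterministic function of $S(t)$, and by~(\ref{eq:l3021}),
\begin{align*}
F^{GV}_{call}(S_t,t)=e^{-r\tau}\int_{\log(K/S(t))}^{+\infty}\left(S(t)e^{\xi}-K\right)\hat f(\xi,\tau,h^{*})\,d\xi .
\end{align*}

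Next I would split this integral at the exercise boundary $\xi_0:=\log(K/S(t))$ into $S(t)e^{-r\tau}\int_{\xi_0}^{+\infty}e^{\xi}\hat f(\xi,\tau,h^{*})\,d\xi$ and $-Ke^{-r\tau}\int_{\xi_0}^{+\infty}\hat f(\xi,\tau,h^{*})\,d\xi$. The second term is immediately $-Ke^{-r\tau}\bigl[1-\hat F(\xi_0,\tau,h^{*})\bigr]$. For the first term, the key observation is that multiplying the Esscher density by $e^{\xi}$ merely advances the Esscher parameter by one: from~(\ref{eq:l3021}),
\begin{align*}
e^{\xi}\hat f(\xi,\tau,h^{*})=\frac{e^{(1+h^{*})\xi}f(\xi,\tau)}{M(h^{*},\tau)}=\frac{M(1+h^{*},\tau)}{M(h^{*},\tau)}\,\hat f(\xi,\tau,1+h^{*})=e^{r\tau}\hat f(\xi,\tau,1+h^{*}),
\end{align*}
where the last equality is the martingale condition~(\ref{eq:l34}); this is legitimate because $h_1<h^{*}<h_2-1$ forces $h^{*}+1\in(h_1,h_2)$, so $M(1+h^{*},\tau)<\infty$. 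Hence the first term equals $S(t)\bigl[1-\hat F(\xi_0,\tau,1+h^{*})\bigr]$, and adding the two pieces gives~(\ref{eq:l5}).

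It remains to name the two cumulative distribution functions. Applying Theorem~\ref{lem71} to the VG process with parameters $(\tau\mu,\delta,\sigma,\tau\alpha,\theta)$ and Esscher parameter $h^{*}$ shows that $\hat f(\cdot,\tau,h^{*})$ is the density of a VG law with parameters $(\tau\mu,\tilde\delta,\sigma,\tau\alpha,\tilde\theta)$, where $\tilde\delta=\delta+h^{*}\sigma^2$ and $\tilde\theta=\theta/(1-\tfrac12\theta\sigma^2 h^{*2}-\delta\theta h^{*})$. Applying the same theorem with Esscher parameter $1+h^{*}$ gives symmetry parameter $\delta+(1+h^{*})\sigma^2=\tilde\delta+\sigma^2$ and scale parameter $\theta/(1-\tfrac12\theta\sigma^2(h^{*}+1)^2-\delta\theta(h^{*}+1))$; feeding in~(\ref{eq:l35}), which identifies $e^{(r-\mu)/\alpha}$ with the ratio $(1-\tfrac12\theta\sigma^2 h^{*2}-\delta\theta h^{*})/(1-\tfrac12\theta\sigma^2(h^{*}+1)^2-\delta\theta(h^{*}+1))$, this scale parameter simplifies to $\tilde\theta\,e^{(r-\mu)/\alpha}$, matching the claimed parameters.

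I expect the only genuine obstacle to be the bookkeeping around the change of measure: verifying that conditioning on $\scrF_t$ really does reduce to an unconditional VG expectation over the horizon $\tau$ (which needs the stationary independent increments of $Y$, the $\scrF_t$-measurability of $S(t)$, and the multiplicativity $M(h,t)=M(h,1)^{t}$), and tracking that every Esscher shift used, in particular $h^{*}\mapsto h^{*}+1$, stays inside the admissibility strip $(h_1,h_2)$ so that all moment generating functions and the martingale identities~(\ref{eq:l34})--(\ref{eq:l377}) remain valid. Once those points are in place, the remainder is the standard split-and-recombine of the Black--Scholes argument together with direct invocations of Theorem~\ref{lem71}.
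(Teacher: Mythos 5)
Your proposal is correct and follows essentially the same route as the paper: split the discounted risk-neutral payoff at the exercise boundary $\log(K/S(t))$, absorb the factor $e^{\xi}$ by shifting the Esscher parameter from $h^{*}$ to $h^{*}+1$ via the martingale condition (\ref{eq:l34}), and identify the two laws through Theorem~\ref{lem71} together with (\ref{eq:l35}). Your treatment of the conditioning on $\scrF_{t}$ and the admissibility strip $h_{1}<h^{*}<h_{2}-1$ is somewhat more explicit than the paper's, but the decomposition and key identities are the same.
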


\begin{proof}[Proof:] 
 \begin{align*}
f(Y_{T},K)= (S(0)e^{Y_{T}} - K)^{+}&= S(t)(e^{Y_{\tau}} - k)^{+} \quad  \quad  \hbox{$Y_{T}=Y_{\tau} + Y_{t}$}\\
&=S(t)g(Y_{\tau})  \quad  \quad  \hbox{$S(t)=S(0)e^{Y_{t}}$ and $k=\frac{K}{S(t)}$}
 \end{align*}
\noindent
Under the Equivalent Martingale Measure (EMM), $\hat{f}(\xi, \tau, h^{*})$ is the probability density of VG model with parameter $( \tau \mu, \tilde{\delta}, \sigma, \tau \alpha, \tilde{\theta})$. We note $k=\frac{K}{S(t) }$
 \begin{align*}
S(t)e^{-r\tau} \int_{log(k)}^{+\infty} e^{\xi} \hat{f}(\xi, \tau, h^{*}) d\xi &= S(t)e^{-r\tau} \int_{log(k)}^{+\infty} e^{\xi} \frac{e^{h^{*}\xi}{f(\xi, \tau)}}{M(h,t)} d\xi \quad \hbox{$\hat{f}(\xi, \tau, h^{*})$ in (\ref{eq:l3021})  }\\
&= S(t)e^{-r\tau} \int_{log(k)}^{+\infty}  \frac{e^{(1+h^{*})\xi}{f(\xi, \tau)}}{M(h^*,t)} d\xi  \quad \hbox{$e^{r\tau}=\frac{M(h^*+1,\tau)}{M(h^*,\tau)}$ in (\ref{eq:l34})}\\
&= S(t) \int_{log(k)}^{+\infty}  \frac{e^{(1+h^{*})\xi}{f(\xi, \tau)}}{M(1+h^*,t)} d\xi =S(t) \int_{log(k)}^{+\infty} \hat{f}(\xi, \tau, h^{*}+1) d\xi
  \end{align*}
We can now show the relation in (\ref{eq:l5})  
 \begin{align*}
F^{GV}_{call}(S_{t},t)&=S(t)e^{-r\tau}E^{Q}\left[g(X_{\tau}) \right]=S(t)e^{-r\tau}\int_{-\infty}^{+\infty}\hat{f}(\xi, \tau, h^{*})(e^{\xi} - k)^{+}dy\\
&=S(t)e^{-r\tau} \int_{log(k)}^{+\infty} e^{\xi} \hat{f}(\xi, \tau, h^{*}) d\xi - Ke^{-r\tau}\int_{log(k)}^{+\infty}\hat{f}(\xi, \tau, h^{*})d\xi \\
&=S(t) \int_{log(k)}^{+\infty} \hat{f}(\xi, \tau, h^{*}+1) d\xi - Ke^{-r\tau}\int_{log(k)}^{+\infty}\hat{f}(\xi, \tau, h^{*})d\xi  \\
&=S(t)\left[ 1-\hat{F}(log(k),\tau, h^{*}+1) \right] - Ke^{-r\tau}\left[ 1-\hat{F}(log(k),\tau, h^{*})\right]   \end{align*}
with 
 \begin{align*}
\hat{F}(log(k),\tau, h^{*})=\int_{-\infty}^{log(k)}\hat{f}(\xi, \tau, h^{*})d\xi \quad \quad \hat{F}(log(k),\tau, h^{*} +1)=\int_{-\infty}^{log(k)}\hat{f}(\xi, \tau, h^{*}+1)d\xi  \end{align*} \end{proof} 
\noindent
From Theorem \ref{lem6} and Theorem \ref{lem71}, $\hat{f}(\xi, \tau, h^{*})$ is the probability density of the VG model with parameter $( \tau\mu, \tilde{\delta}, \sigma, \tau \alpha, \tilde{\theta})$.\\
 \begin{align}
\hat{f}(\xi, \tau, h^{*})&=\frac {1} {\sigma\Gamma(\tau\alpha) \tilde{\theta}^{\tau\alpha}}\int_{0}^{+\infty} \frac{1}{\sqrt{2\pi v}}e^{-\frac{(y-\mu-\tilde{\delta} v)^2}{2v\sigma^2}}v^{\tau\alpha -1}e^{-\frac{v}{\tilde{\theta}}} \,dv    \quad \quad \hbox{($\tilde{\delta}$, $\tilde{\theta}$)  in ($\ref{eq:l377a}$) } \label {eq:l554}
\end{align}
\noindent
Following the same methodology, $\hat{f}(\xi, \tau, h^{*} + 1)$ is the probability density of the VG model with parameter $( \tau\mu, \tilde{\delta}', \sigma, \tau \alpha, \tilde{\theta}')$. we have  
\begin{align}
\tilde{\delta}'&=\tilde{\delta} + \sigma^{2}  \quad \quad \tilde{\theta}'=\tilde{\theta}e^{\frac{r-\mu}{\alpha}}\label {eq:l556}\end{align}
In fact, as in (\ref{eq:l554}), we have :
\begin{align*}
\tilde{\delta}'&=\delta + (h^* +1)\sigma^{2} =\tilde{\delta} + \sigma^{2}  
\end{align*}
And  
\begin{align*}
\tilde{\theta}'&=\frac{\theta}{1 - \frac{1}{2}\theta \sigma^{2}{(h^{*}+1)}^{2} - \delta\theta {(h^{*}+1)}}= \frac{\theta}{1 - \frac{1}{2}\theta \sigma^{2}{h^{*}}^{2} - \delta\theta {h^{*}}}e^{\frac{r-\mu}{\alpha}}=\tilde{\theta}e^{\frac{r-\mu}{\alpha}} 
\end{align*}
We have the expression of  the probability density, 
\begin{align}
\hat{f}(\xi, \tau, h^{*}+1)&=\frac {1} {\sigma\Gamma(\tau\alpha) \tilde{\theta}'^{\tau\alpha}}\int_{0}^{+\infty} \frac{1}{\sqrt{2\pi v}}e^{-\frac{(y-\mu-\tilde{\delta}' v)^2}{2v\sigma^2}}v^{\tau\alpha -1}e^{-\frac{v}{\tilde{\theta}'}} \,dv  \label {eq:l556}
\end{align}

\smallskip

\subsubsection{Equivalent Martingale Measure (EMM) Computation}

\noindent
Under the Equivalent Martingale Measure (EMM), $\hat{f}(\xi, \tau, h^{*})$ is the probability density of VG model with parameter $( \tau \mu, \tilde{\delta}, \sigma, \tau \alpha, \tilde{\theta})$. The Fourier transform is:
  \begin{align*}
\scrF[\hat{f}] (y,\tau,h^{*})&=E\left[e^{-i y X_{\tau}}\right]=\left(\frac{e^{ -i\mu y}}{\left(1 + \frac{1}{2}\tilde{\theta} \sigma^{2}y^{2} + i\tilde{\delta}\tilde{\theta} y\right)^{\alpha}}\right)^{\tau}  \quad  \hbox{see (\ref{eq:l3312})} \\
& =\phi(y)^{\tau}=e^{\tau log(\phi(y))}=e^{-\tau \varphi(-y)}
\end{align*}
 \begin{align}
\scrF[\hat{f}] (y,\tau,h^{*})=e^{-\tau \varphi(-y)} \quad  \quad 
\varphi(y)=-i\mu y + \alpha log(1 + \frac{1}{2}\tilde{\theta} \sigma^{2}y^{2} -i\tilde{\delta}\tilde{\theta} y) \label{eq:l04222}
\end{align}
\noindent
$\phi(y)$ and $\varphi(y)$ are, respectively, the Fourier Transform of the probability density and the characteristic exponent of the VG model with parameter $( \mu, \tilde{\delta}, \sigma,  \alpha, \tilde{\theta})$\\

\noindent
$\hat{f}(\xi, \tau, h^{*})$ can be written as the inverse Fourier Transform from (\ref{eq:l04222})
 \begin{align*}
\hat{f}(\xi, \tau, h^{*})= \frac{1}{2\pi}\int_{-\infty}^{+\infty}e^{i \xi z}\scrF[\hat{f}](z, \tau, h^{*}) dz &=\frac{1}{2\pi}\int_{-\infty}^{+\infty}e^{i \xi z - \tau \varphi(-z)}dz\\
&=\frac{1}{2\pi}\int_{-\infty}^{+\infty}e^{-i \xi z - \tau \varphi(z)}dz  
\end{align*}
It was shown in \cite{nzokem2021fitting} that we can have 
  \begin{align}
\scrF[\hat{F}](\xi,\tau, h^{*})= \frac{\scrF[\hat{f}](\xi,\tau, h^{*})}{i\xi}+\pi\scrF[\hat{f}](0)\delta(\xi)\label{eq:l43}
\end{align}
\noindent
Based on (\ref{eq:l43}), we deduce
 \begin{align*}
\hat{F}(\xi, \tau, h^{*})= \frac{1}{2\pi}\int_{-\infty}^{+\infty}e^{i \xi z}\scrF[\hat{F}](z,\tau, h^{*}) dz=\frac{-1}{2\pi}\int_{-\infty}^{+\infty}\frac{e^{-i \xi z - \tau \varphi(z)}}{iz}dz +\frac{1}{2}   
\end{align*}
We have the probability density and cumulative functions
 \begin{align}
 \hat{f}(\xi, \tau, h^{*})=\frac{1}{2\pi}\int_{-\infty}^{+\infty}e^{-i \xi z - \tau \varphi(z)}dz \quad  \quad \hat{F}(\xi, \tau, h^{*})=\frac{-1}{2\pi}\int_{-\infty}^{+\infty}\frac{e^{-i \xi z - \tau \varphi(z)}}{iz}dz +\frac{1}{2} \label {eq:l45} 
\end{align}
\noindent
For VG Model in Table \ref{tab1} \cite{nzokem2021fitting,nzokem_2021b}: : $\hat{\mu}=0.0848$, $\hat{\delta}=-0.0577$, $\hat{\sigma}=1.0295$, $\hat{\alpha}=0.8845$, $\hat{\theta}= 0.9378$; and a risk-free rate of interest  $6\%$, we have a Esscher transform parameter $h^{*}=-2.6997$ in (\ref{eq:l34}). $\hat{f}(\xi,\tau, h^{*})$ and $\hat{f}(\xi, \tau, h^{*} + 1)$ were computed by the Fractional Fast Fourier (FRFT) \cite{nzokem2021fitting} as shown in Fig \ref{fig72} and Fig \ref{fig82}.\\
\noindent
 Using the numerical integration technique, $\hat{f}(\xi,\tau, h^{*})$ and $\hat{f}(\xi, \tau, h^{*} + 1)$ were computed by implementing the following  12-point rule Composite Newton-Cotes Quadrature Formulas \cite{Nzokem_2021, aubain2020}.
 \begin{equation}
  \begin{aligned}
\hat{f}(\xi, \tau, h)& \approx \frac{b}{n}\sum_{p=0}^{\frac{n}{Q} - 1} \sum_{j=0}^{Q} W_{j} g(x_{Qp+j}, \tau, h)\label {eq:l4644} \\
g(x, \tau, h)&=\frac{1}{\sigma \tilde{\theta}^{\tau\alpha}\Gamma(\tau\alpha)\sqrt{2\pi}}e^{-\frac{(x-\tau \mu-\tilde{\delta} v)^2}{2v\sigma^2}}v^{\tau\alpha -\frac{1}{2}}e^{-\frac{v}{\tilde{\theta}}}
\end{aligned}
\end{equation}
In order to compute (\ref{eq:l4644}) for $h=h^*$ and $h=h^* +1$, the following parameter values are used: $a=0$,
$b=20$, $Q=12$, $n=5000Q$, $n_{0}=5000$ and the weights $\{W_j\}_{0\leq j \leq Q}$ values come from Table 1 \cite{Nzokem_2021} and table 4.1 \cite{aubain2020}. The results are shown in Fig \ref{fig71} and Fig \ref{fig81}.
\begin{figure}[ht]
\vspace{-0.2cm}
    \centering
\hspace{-1cm}
  \begin{subfigure}[b]{0.32\linewidth}
    \includegraphics[width=\linewidth]{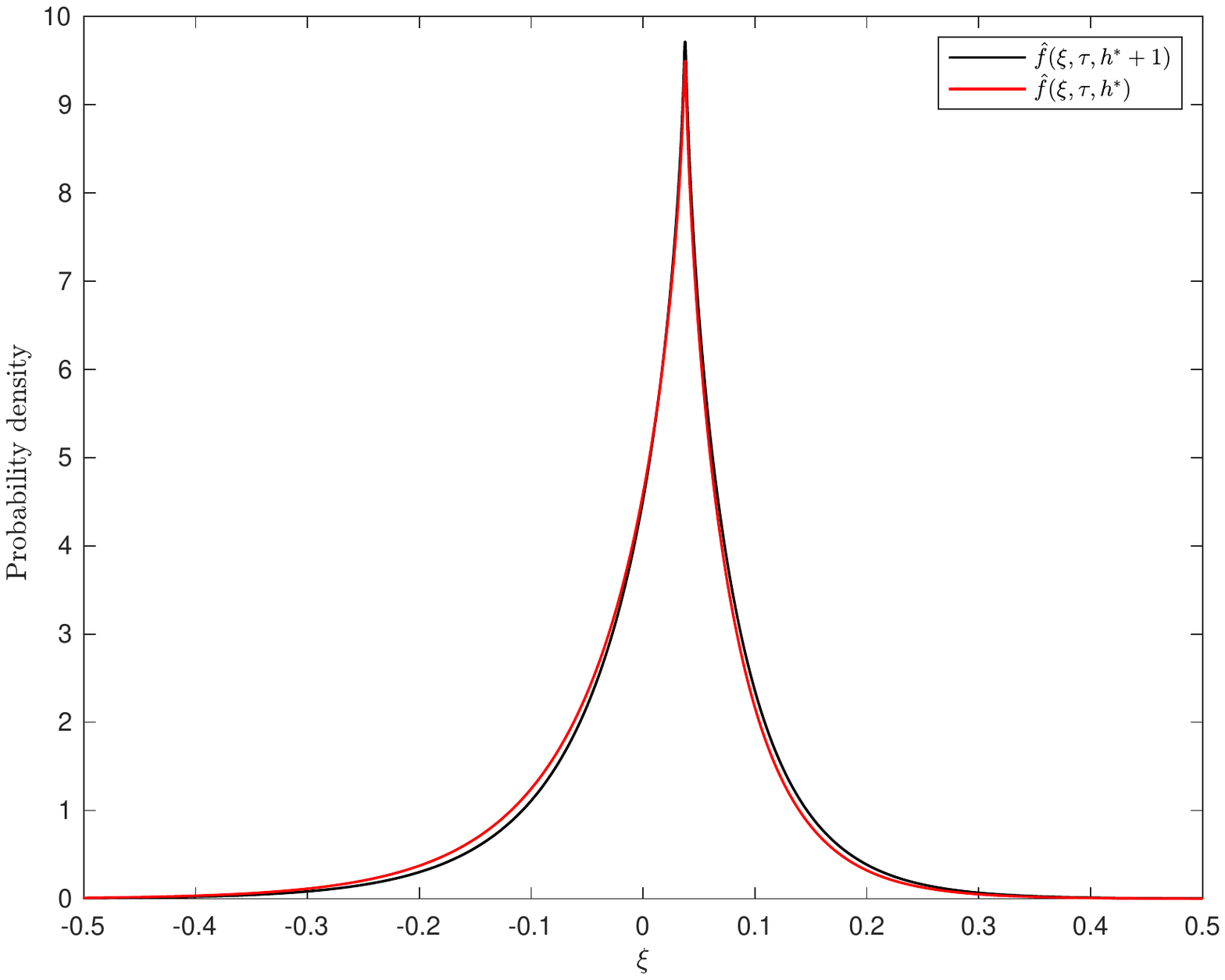}
\vspace{-0.5cm}
     \caption{Newton-Cote Martingale}
         \label{fig71}
  \end{subfigure}
\hspace{-0.2cm}
  \begin{subfigure}[b]{0.32\linewidth}
    \includegraphics[width=\linewidth]{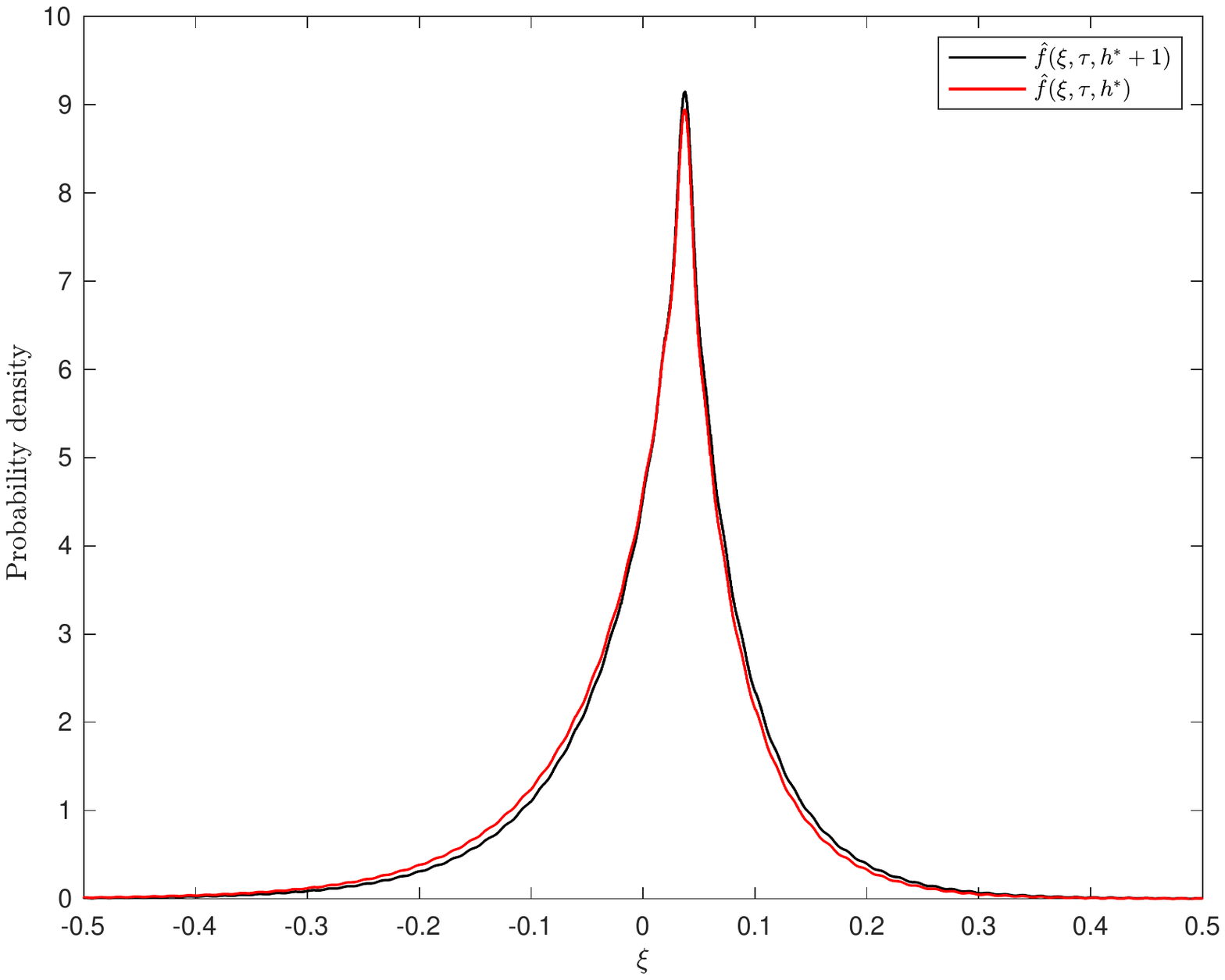}
\vspace{-0.5cm}
     \caption{FRFT Martingale Measure}
         \label{fig72}
          \end{subfigure}
\hspace{-0.2cm}
  \begin{subfigure}[b]{0.33\linewidth}
    \includegraphics[width=\linewidth]{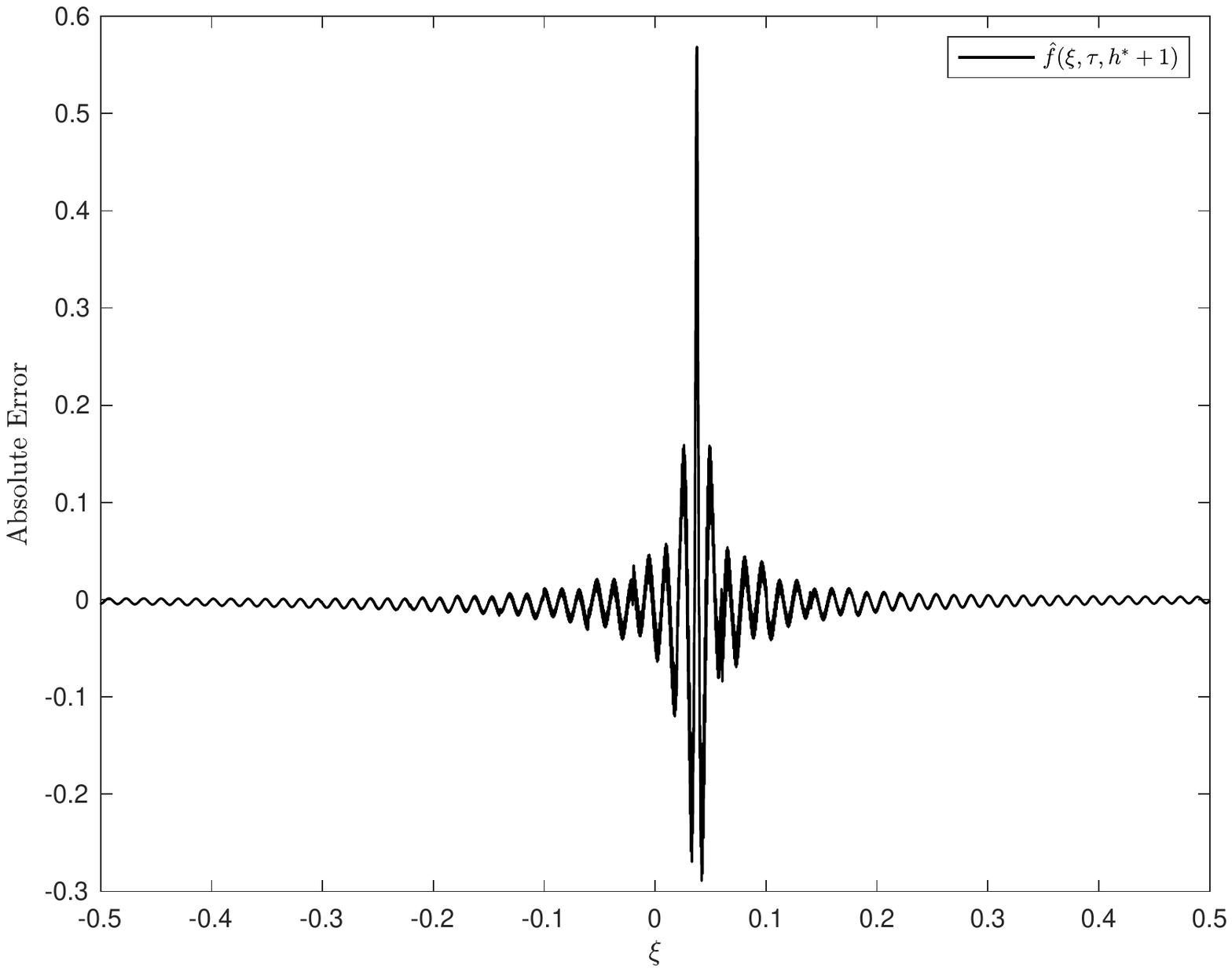}
\vspace{-0.5cm}
     \caption{Error: FRFT versus Newton}
         \label{fig73}
          \end{subfigure}
\vspace{-0.5cm}
  \caption{Estimations: $\hat{f}(\xi,\tau, h^{*})$ versus $\hat{f}(\xi, \tau, h^{*} + 1)$, $\tau=0.25$}
  \label{fig12}
\vspace{-0.3cm}
\end{figure}

\noindent
Both methods produce smooth density functions as shown in Fig \ref{fig12} and Fig \ref{fig13}.  Fig \ref{fig73} and Fig \ref{fig83} provide the estimation error of $\hat{f}(\xi, \tau, h^{*} + 1)$. The Fractional Fast Fourier (FRFT) underestimates the peakedness of the density function when the timeframe is small ($\tau=0.25$) in Fig \ref{fig73}. The estimation error decreases significantly when the timeframe increases. see Fig \ref{fig83} when $\tau=0.5$.
\begin{figure}[ht]
    \centering
\hspace{-1cm}
  \begin{subfigure}[b]{0.32\linewidth}
    \includegraphics[width=\linewidth]{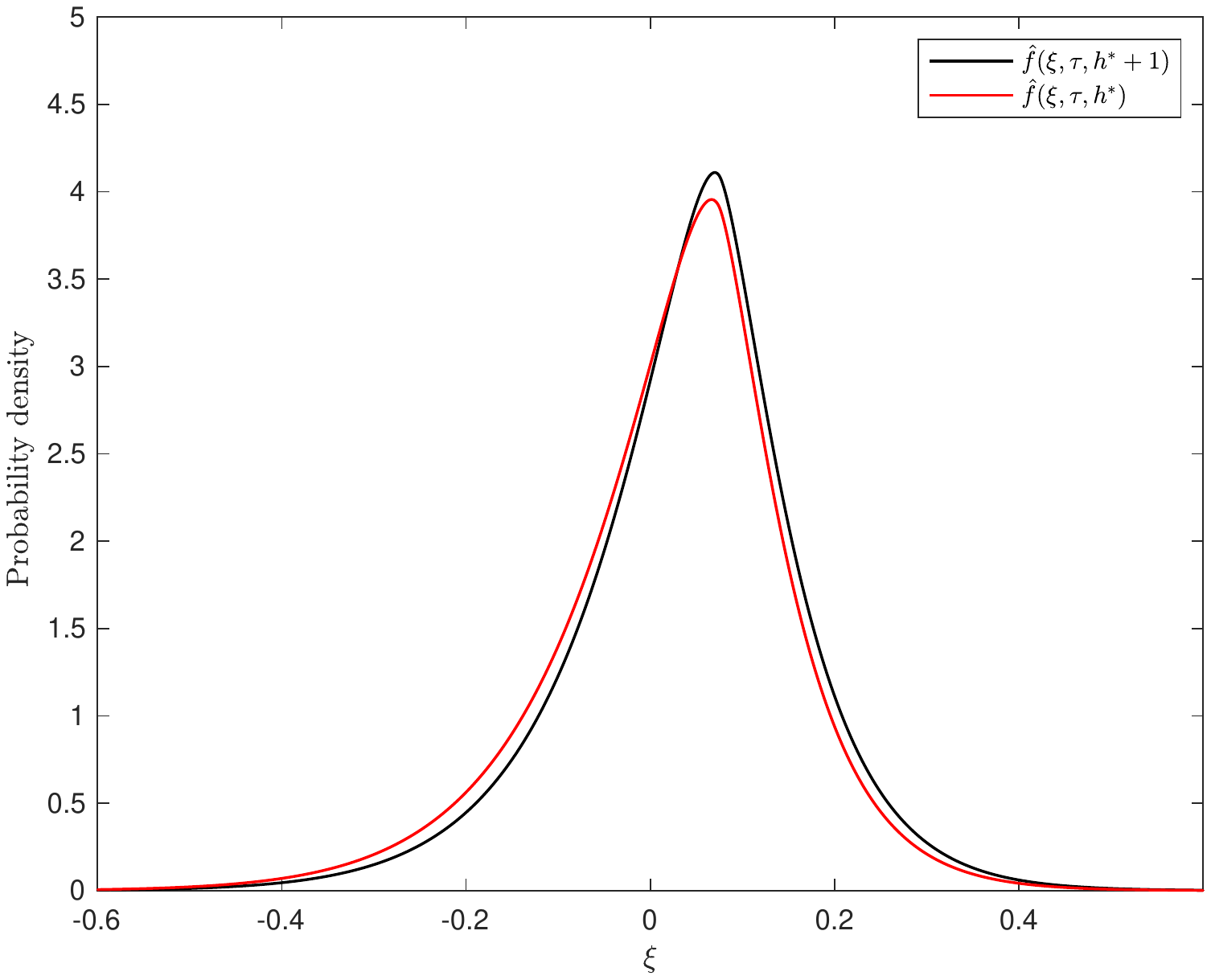}
\vspace{-0.5cm}
     \caption{Newton-Cote Martingale}
         \label{fig81}
  \end{subfigure}
\hspace{-0.2cm}
  \begin{subfigure}[b]{0.32\linewidth}
    \includegraphics[width=\linewidth]{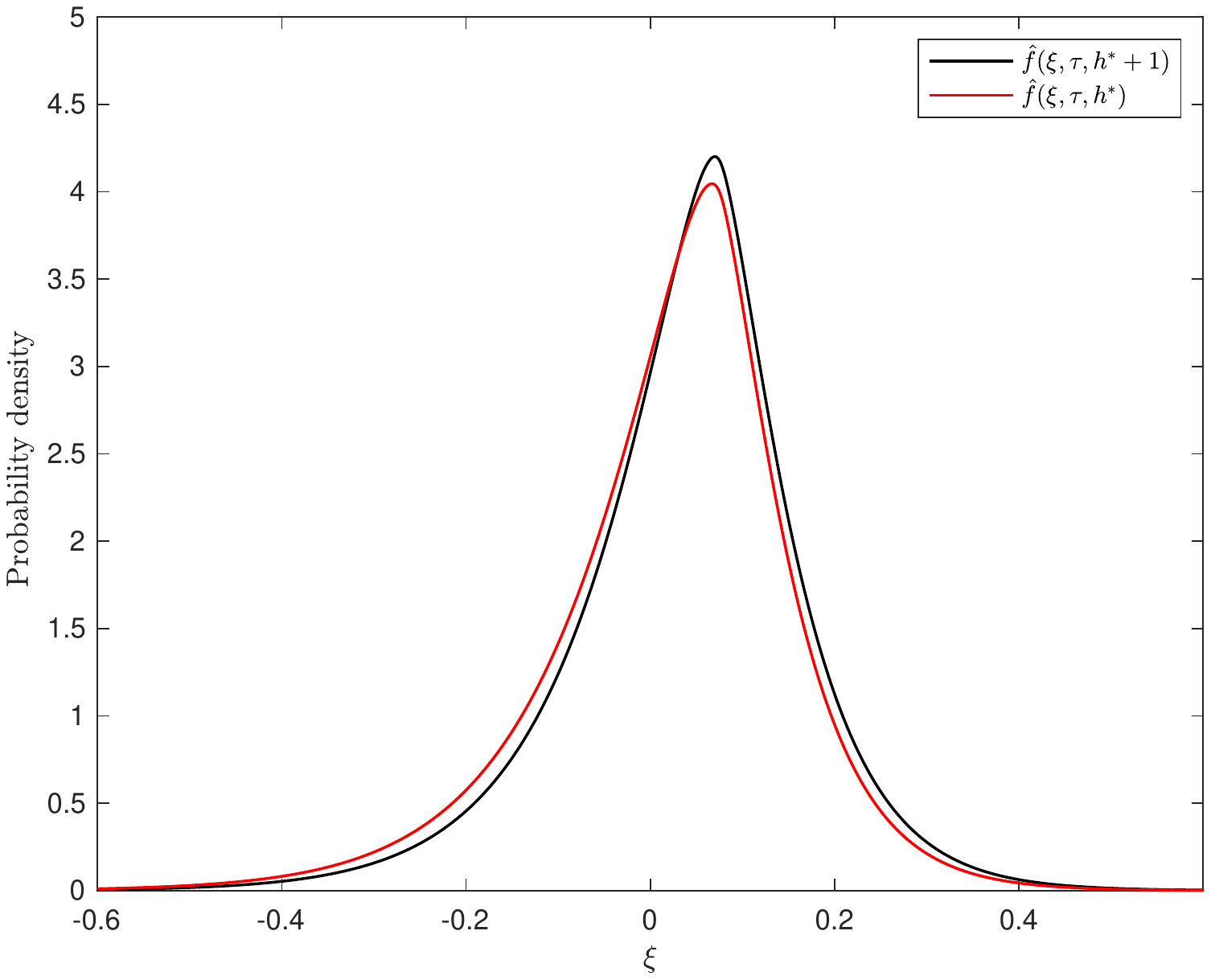}
\vspace{-0.5cm}
     \caption{FRFT Martingale Measure}
         \label{fig82}
          \end{subfigure}
\hspace{-0.2cm}
  \begin{subfigure}[b]{0.33\linewidth}
    \includegraphics[width=\linewidth]{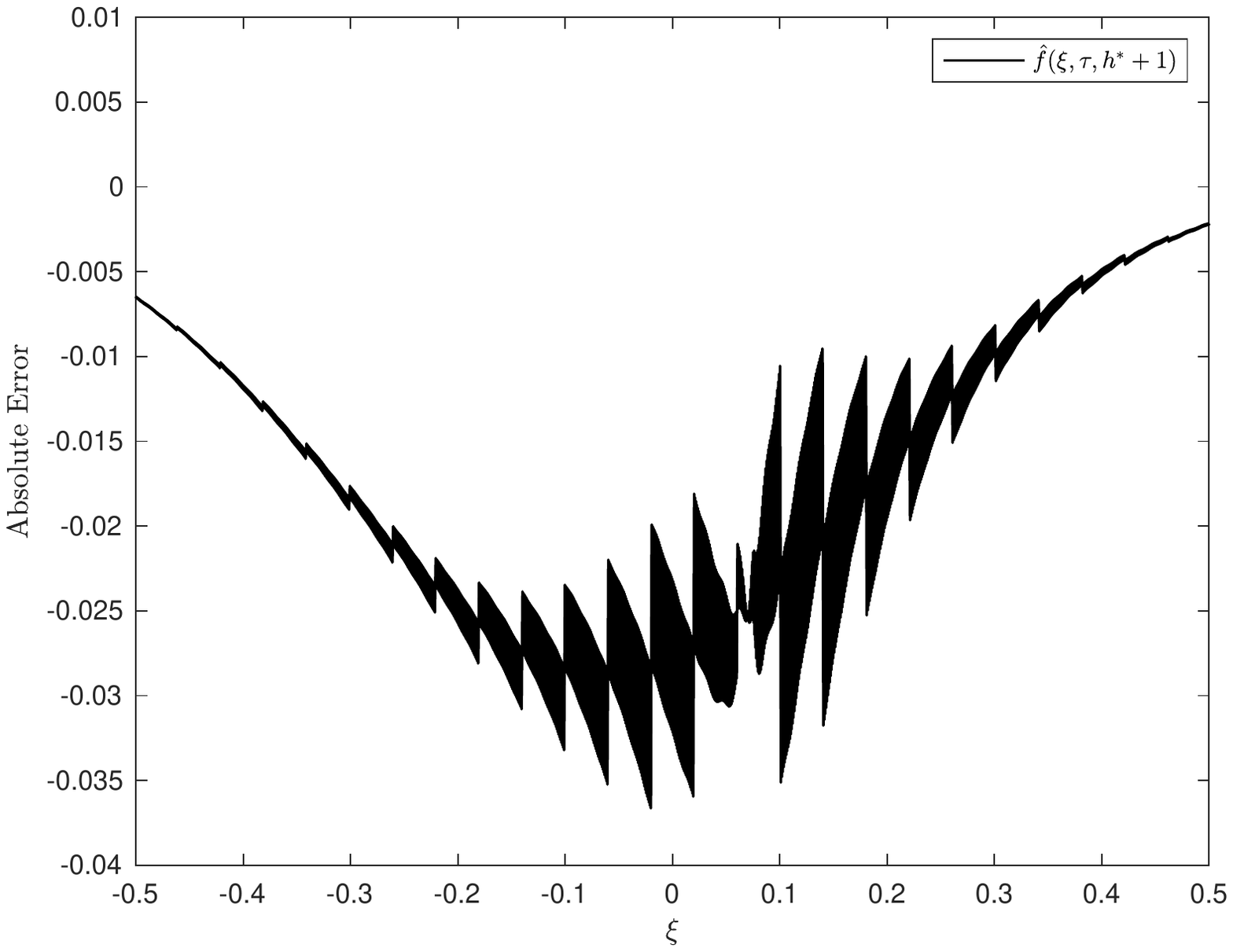}
\vspace{-0.5cm}
     \caption{Error: FRFT versus Newton}
         \label{fig83}
          \end{subfigure}
\vspace{-0.5cm}
  \caption{Estimations: $\hat{f}(\xi,\tau, h^{*})$ versus $\hat{f}(\xi, \tau, h^{*} + 1)$, $\tau=0.5$}
  \label{fig13}
\vspace{-0.3cm}
\end{figure}

\noindent
Both methods will be used to compute the arbitrage price of the European call option.

\subsection{ Variance - Gamma Model: Generalized Black-Scholes Formula}
 \begin{theorem} \label{lem9} \ \\
 \noindent
 Let $r$ a continuously compounded risk-free rate of interest, $Y=\{Y_{t}\}_{t \geq0} $, a VG Process with parameter $(\mu t, \delta, \sigma, \alpha t, \theta)$, and $(S(0)e^{X_{T}} - K)^{+}$, the terminal payoff for a contingent claim with the expiry date $T$. Then at time $t <T$, the arbitrage price of the European call option with the strike price $K$ can be written as follows.
\begin{align}
F^{GV}_{call}(S_{t},t)=\frac{K}{2\pi} \int_{-\infty + i q}^{+\infty + i q}{\frac{e^{\left(i \xi log(\frac{S(t)}{K}) - \tau (r + \varphi(\xi))\right)}}{i \xi(i\xi -1)} d\xi} \label{eq:l8}
\end{align}
where $\varphi(z)$ is the characteristic exponent of VG model with parameter $( \mu, \tilde{\delta}, \sigma,  \alpha, \tilde{\theta})$ in (\ref{eq:l377a}), $\tau=T-t$ and $q < -1$.
\end{theorem}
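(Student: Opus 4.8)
\noindent
\emph{Sketch of the argument.} The plan is to collapse the pricing problem to a risk‑neutral expectation over a single VG increment and then apply the Fourier (Carr--Madan / Lewis) method. Starting from the Esscher equivalent martingale measure $\mathbf{Q}$ of \eqref{eq:l37}--\eqref{eq:l377} and using the independence and stationarity of the increments of $Y$, I would first write, exactly as in the proof of Corollary~\ref{lem8},
\[
F^{GV}_{call}(S_t,t)=e^{-r\tau}E^{\mathbf{Q}}\!\left[(S(T)-K)^{+}\mid\scrF_{t}\right]=S(t)e^{-r\tau}E^{\mathbf{Q}}\!\left[(e^{Y_\tau}-k)^{+}\right],\qquad k=\frac{K}{S(t)} .
\]
By Theorem~\ref{lem71}, under $\mathbf{Q}$ the increment $Y_\tau$ is a VG variable with parameters $(\tau\mu,\tilde\delta,\sigma,\tau\alpha,\tilde\theta)$, hence its characteristic function is $E^{\mathbf{Q}}[e^{i\xi Y_\tau}]=\scrF[\hat f](-\xi,\tau,h^{*})=e^{-\tau\varphi(\xi)}$ with $\varphi$ as in \eqref{eq:l04222}. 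Writing $(e^{Y_\tau}-k)^{+}=k\,(e^{\,Y_\tau+m}-1)^{+}$ with $m=\log(S(t)/K)$ gives
\[
F^{GV}_{call}(S_t,t)=Ke^{-r\tau}\,E^{\mathbf{Q}}\!\left[(e^{\,Y_\tau+m}-1)^{+}\right].
\]

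The second ingredient is the elementary contour identity: for every real $q<-1$,
\[
(e^{z}-1)^{+}=\frac{1}{2\pi}\int_{-\infty+iq}^{+\infty+iq}\frac{e^{i\xi z}}{i\xi(i\xi-1)}\,d\xi .
\]
This is obtained by computing the one‑sided integral $\int_{\mathbb{R}}e^{-i\xi z}(e^{z}-1)^{+}dz=\frac{1}{i\xi-1}-\frac{1}{i\xi}=\frac{1}{i\xi(i\xi-1)}$, which converges precisely on the half‑plane $\IM\xi<-1$, and then inverting the Fourier transform. Substituting this representation into the expectation above and interchanging the $z$‑ and $\xi$‑integrations yields
\[
F^{GV}_{call}(S_t,t)=\frac{Ke^{-r\tau}}{2\pi}\int_{-\infty+iq}^{+\infty+iq}\frac{e^{i\xi m}}{i\xi(i\xi-1)}\,E^{\mathbf{Q}}\!\left[e^{i\xi Y_\tau}\right]d\xi=\frac{K}{2\pi}\int_{-\infty+iq}^{+\infty+iq}\frac{e^{\,i\xi\log(S(t)/K)-\tau(r+\varphi(\xi))}}{i\xi(i\xi-1)}\,d\xi,
\]
which is \eqref{eq:l8}; the last equality just folds $e^{-r\tau}$ into the exponent and uses $E^{\mathbf{Q}}[e^{i\xi Y_\tau}]=e^{-\tau\varphi(\xi)}$.

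The delicate point --- and the step I expect to be the main obstacle --- is choosing the horizontal contour $\IM\xi=q$ so that it serves both purposes at once, together with the accompanying Fubini interchange. On one hand $q<-1$ is forced by the domain of the payoff transform; on the other hand $E^{\mathbf{Q}}[e^{i\xi Y_\tau}]$ is finite on the line $\IM\xi=q$ only if $-q$ lies in the open strip where the $\mathbf{Q}$‑moment generating function $M(\cdot,\tau,h^{*})$ converges, i.e. $q>-\tilde h_{2}$. That $(-\tilde h_{2},-1)$ is nonempty is exactly the content of the bound $h^{*}+1<h_{2}$ used to build $h^{*}$ (equivalently, the martingale identity $E^{\mathbf{Q}}[e^{Y_\tau}]=e^{r\tau}<\infty$ forces $\tilde h_{2}>1$); the statement records only the cruder condition $q<-1$. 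Absolute convergence of the final integral, and with it the Fubini step, then follows from $|i\xi(i\xi-1)|^{-1}=O(|\xi|^{-2})$ together with the logarithmic growth $\RE\varphi(\xi)=2\alpha\log|\xi|+O(1)$ along the line, so that $|e^{-\tau\varphi(\xi)}|=O(|\xi|^{-2\tau\alpha})$ (in any case bounded); one applies Tonelli to the product of $|\hat f(\cdot,\tau,h^{*})|$ and the modulus of the $\xi$‑integrand to legitimize the exchange. Everything else --- the sign bookkeeping against the convention $\scrF[\hat f](y,\tau,h^{*})=e^{-\tau\varphi(-y)}$ of \eqref{eq:l04222} and the substitution $m=\log(S(t)/K)$ --- is routine. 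An alternative that avoids the Fubini argument is to insert the Fourier representations \eqref{eq:l45} of $\hat F(\log k,\tau,h^{*})$ and $\hat F(\log k,\tau,h^{*}+1)$ directly into Corollary~\ref{lem8} and recombine the two resulting integrals, but the direct route above is cleaner.
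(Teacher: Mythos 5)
Your proposal is correct and follows essentially the same route as the paper: both arguments rest on the payoff transform $\hat{g}(\xi,k)=\frac{k\,e^{-i\xi\log k}}{i\xi(i\xi-1)}$ valid for $\Im(\xi)<-1$, the characteristic exponent $\varphi$ of the Esscher-transformed VG law of $Y_{\tau}$, and a Fubini interchange; the only difference is that the paper expands the density $\hat{f}(\cdot,\tau,h^{*})$ as an inverse Fourier integral and then recognizes $\hat{g}(z,k)$ after swapping, whereas you expand the payoff as the contour integral and integrate the characteristic function, which is the same Parseval step in mirror image. Your observations on the admissible strip $-\tilde{h}_{2}<q<-1$ and on the $O(|\xi|^{-2})$ decay of the integrand are a welcome sharpening, since they justify the contour placement and the interchange that the paper's proof leaves implicit.
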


\begin{proof}[Proof:] 
 \begin{align}
 (S(0)e^{Y_{T}} - K)^{+}&= S(t)(e^{Y_{\tau}} - k)^{+} \quad  \quad  \hbox{$Y_{T}=Y_{\tau} + Y_{t}$ }  \label{eq:l91}\\
&=S(t)g(Y_{\tau},k)  \quad  \quad  \hbox{$S(t)=S(0)e^{Y_{t}}$ and $k=\frac{K}{S(t)}$}\label{eq:l92}  
 \end{align}
  \noindent
$(S(0)e^{Y_{T}} - K)^{+}$ is the payoff  of the call option. The Fourier transform can be written
\begin{align*}
\scrF[g] (y,k)&=\int_{0}^{+\infty}e^{-iyx}g(x,k)dx =\int_{0}^{+\infty}e^{-iyx}{(e^{x} - k)^{+}}dx=\int_{log(k)}^{+\infty}e^{-iyx}{(e^{x} - k)}dx\\
&=\int_{log(k)}^{+\infty}e^{(1-iy)x}dx  - k\int_{log(k)}^{+\infty}e^{-iyx}dx=\frac{1}{1-iy}\left[e^{(1 - iy)x}\right]_{ln(k)}^{+\infty} + \frac{k}{iy}\left[e^{-iyx}\right]_{ln(k)}^{+\infty}\\
&=\frac{ke^{-iy log(k)}}{iy(iy -1)}  \hspace{5mm}  \hbox{for $\Im(y) < -1$} 
\end{align*}
 \noindent
 We have the Fourier  transform of call payoff 
  \begin{align}
\hat{g}(y,k)=\scrF[g] (y,k)=\frac{ke^{-iylog(k)}}{iy(iy -1)}  \hspace{5mm}  \hbox{for $\Im(y) < -1$} 
 \label {eq:l101}
\end{align}
It is shown in (\ref{eq:l45}) and in (\ref{eq:l04222}) that $\hat{f}(\xi, \tau, h^{*})$  and $\varphi(y)$ can be written as follows
 \begin{align}
\hat{f}(\xi, \tau, h^{*})=\frac{1}{2\pi}\int_{-\infty}^{+\infty}e^{-i \xi z - \tau \varphi(z)}dz \quad \quad   
\varphi(y)=-i\mu y + \alpha log(1 + \frac{1}{2}\tilde{\theta} \sigma^{2}y^{2} -i\tilde{\delta}\tilde{\theta} y) \label {eq:102} 
\end{align}
with ($\tilde{\delta}$, $\tilde{\theta}$) defines in (\ref{eq:l377a})\\

\noindent
$F^{GV}_{call}(S_{t},t)$  is the call function under the Equivalent Martingale Measure(EMM), and we can now find a good expression of the function
\begin{align*}
F^{GV}_{call}(S_{t},t)&=e^{-r\tau}E^{h^*}\left[(S(0)e^{X_{T}} - K)^{+} \right]=S(t)e^{-r\tau}E^{h^*}\left[g(X_{\tau}) \right] \quad \quad   \quad  \hbox{recall (\ref{eq:l92})}\\
&=S(t)e^{-r\tau}\int_{-\infty}^{+\infty}\hat{f}(y, \tau, h^{*})g(y,k)dy=\frac{S(t)}{2\pi}\int_{-\infty}^{+\infty}\int_{-\infty}^{+\infty}e^{-i y z - \tau (r + \varphi(z))}g(y,k)dydz  \\ 
&=\frac{S(t)}{2\pi}\int_{-\infty}^{+\infty}e^{-\tau (r + \varphi(z))}\hat{g}(z,k) dz   \quad \quad   \quad  \hbox{recall (\ref{eq:l101})} \\ 
&=\frac{K}{2\pi}\int_{-\infty + iq}^{+\infty + iq}\frac{exp\left[{-iz log(k) - \tau (r + \varphi(z)))}\right]}{iz(iz - 1)}dz  \quad \quad \hbox{$\Im(z)\leq q < -1$ and $k=\frac{K}{S(t)}$} \\ 
&=\frac{K}{2\pi}\int_{-\infty + iq}^{+\infty + iq}\frac{exp\left[{i z log(\frac{S(t)}{K}) - \tau (r + \varphi(z))}\right]}{iz(iz - 1)}dz 
\end{align*}
\noindent
We have the formula (\ref{eq:l8})
  \begin{align}
F^{GV}_{call}(S_{t},t)=\frac{K}{2\pi}\int_{-\infty + iq}^{+\infty + iq}\frac{exp\left[{i z log(\frac{S(t)}{K}) - \tau (r + \varphi(z))}\right]}{iz(iz - 1)}dz \label {eq:l243}\end{align}
  \end{proof} 
 \subsection{European Option Pricing by Fractional Fast Fourier (FRFT)}
 
 \subsubsection{Parameter $q$ Evaluation}
\noindent
Let us considerate the stock or index price $S=S_{0} e^{Y}$ and the strike price $K$; it was shown in (\ref{eq:l101}) that the Fourier transform of the call payoff can be written as follows.
  \begin{align*}
\hat{g}(y,k)=\scrF[g] (y,k)=\frac{ke^{-iylog(k)}}{iy(iy -1)}  \hspace{5mm}  \hbox{for $\Im(y) < -1$} 
\end{align*}

\noindent
We can recover the call payoff from the inverse of Fourier in (\ref{eq:l101})
  \begin{align}
  \check{g}(x,k)=\frac{1}{2\pi}\int_{-\infty + iq}^{+\infty + iq}e^{iyx}\scrF[g] (y,k)dy \hspace{5mm}  \hbox{for $q < -1$} 
 \label {eq:l23}
\end{align}

\noindent
The payoff in (\ref {eq:l23}) depends on the parameter $q$. As shown in Fig \ref {fig711},  for $q=-2$,  the inverse of Fourier in (\ref {eq:l23}) produces poor results; in fact, the curve in red fluctuates around real call payoff $(e^{Y} - k)^{+}$. For $q=-1.002$,  the inverse of Fourier over-estimates the call payoff. 
\begin{figure}[ht]
    \centering
  \begin{subfigure}[b]{0.4\linewidth}
    \includegraphics[width=\linewidth]{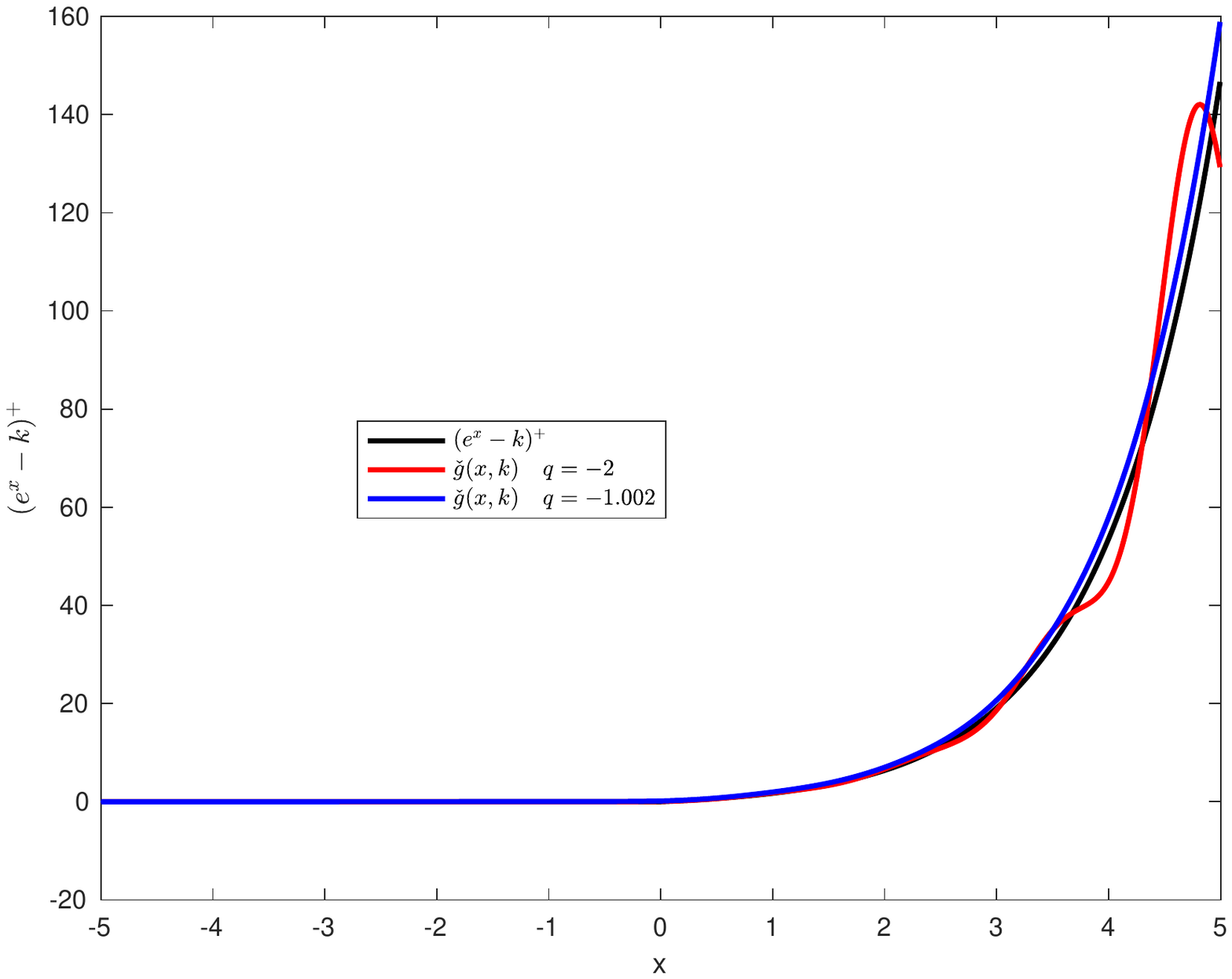}
\vspace{-0.5cm}
   \caption{ $(e^{x} - k)^{+}$ versus $\check{g}(x,k)$ (ATM)}
         \label{fig711}
  \end{subfigure}
\hspace{0.02cm}
  \begin{subfigure}[b]{0.4\linewidth}
    \includegraphics[width=\linewidth]{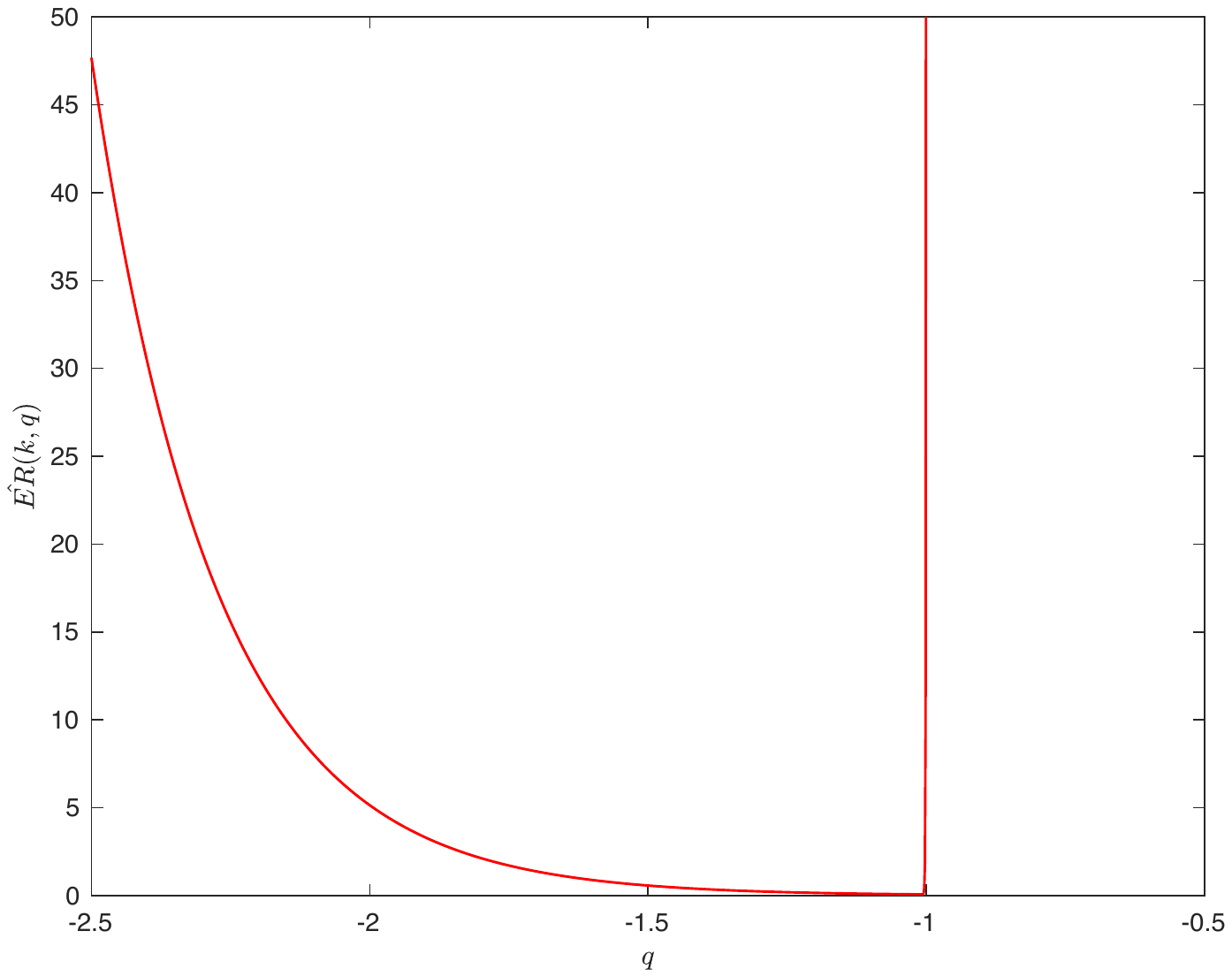}
\vspace{-0.5cm}
     \caption{$ER(k,q)$ and $q$ (ATM)}
         \label{fig712}
          \end{subfigure}
\vspace{-0.5cm}
  \caption{Optimal $q$ parameter}
\end{figure}

\noindent
To find $q$ value with a high level of accuracy, we define the error function ($ER(k,q)$) between the real call payoff and the inverse Fourier payoff, with $k$ (strike price) and the parameter $q$ as inputs.
  \begin{align}
ER(k,q)=\sqrt{ \frac{1}{m}\sum_{j=1}^{m} \left[ ( e^{x_{j}} - k)^{+} -  \check{g}(x_{j},k) \right]^2}  \hspace{5mm}  \hbox{with $-M \leq x_{j}\leq M$} 
 \label {eq:l24}
\end{align}
\noindent
At the money (ATM) option, the strike price $k=1$ and the $ER(k,q)$ can be analyzed as a function of one variable $q$.  Fig \ref {fig712} displays the error function ($ER$) graph as a function of $q$.  $ER$ is a convex function, which decreases and increases over the interval $]-\infty, -1[$. The section method was applied to determine $q=-1.0086$, which minimizes $ER(1,q)$.\\
\noindent
Fig \ref{fig713} displays the $ER(k,q)$ minimum value as a function of the strike price $k$; and the correspondent optimal parameter $q$ as a function of the strike price $k$ (Fig \ref{fig714}). Both graphs display almost a constant function with respect to the strike price.
\begin{figure}[ht]
    \centering
  \begin{subfigure}[b]{0.45\linewidth}
    \includegraphics[width=\linewidth]{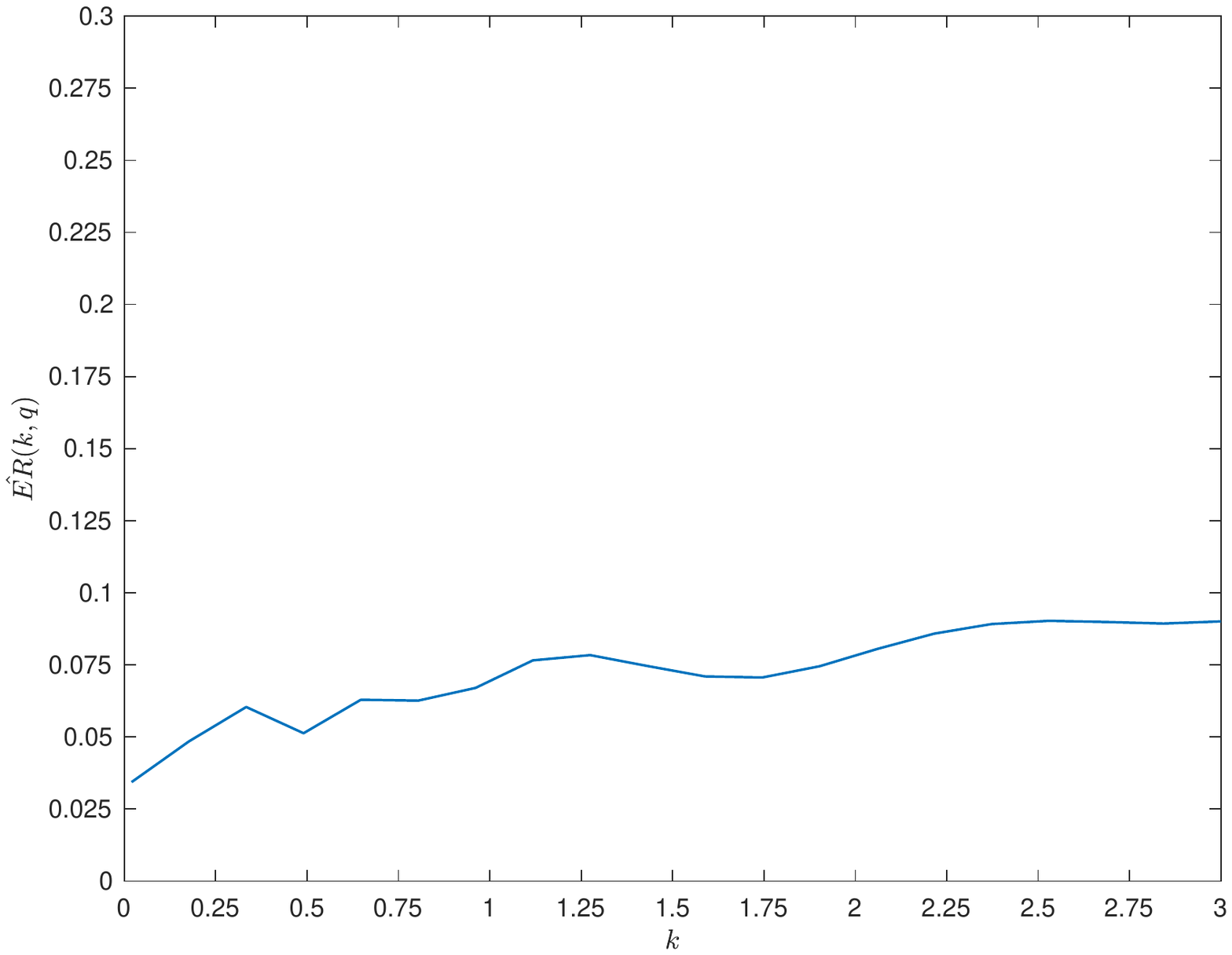}
\vspace{-0.5cm}
     \caption{$ER(k,q)$ minimum value}
         \label{fig713}
          \end{subfigure}
\hspace{0.02cm}
  \begin{subfigure}[b]{0.45\linewidth}
    \includegraphics[width=\linewidth]{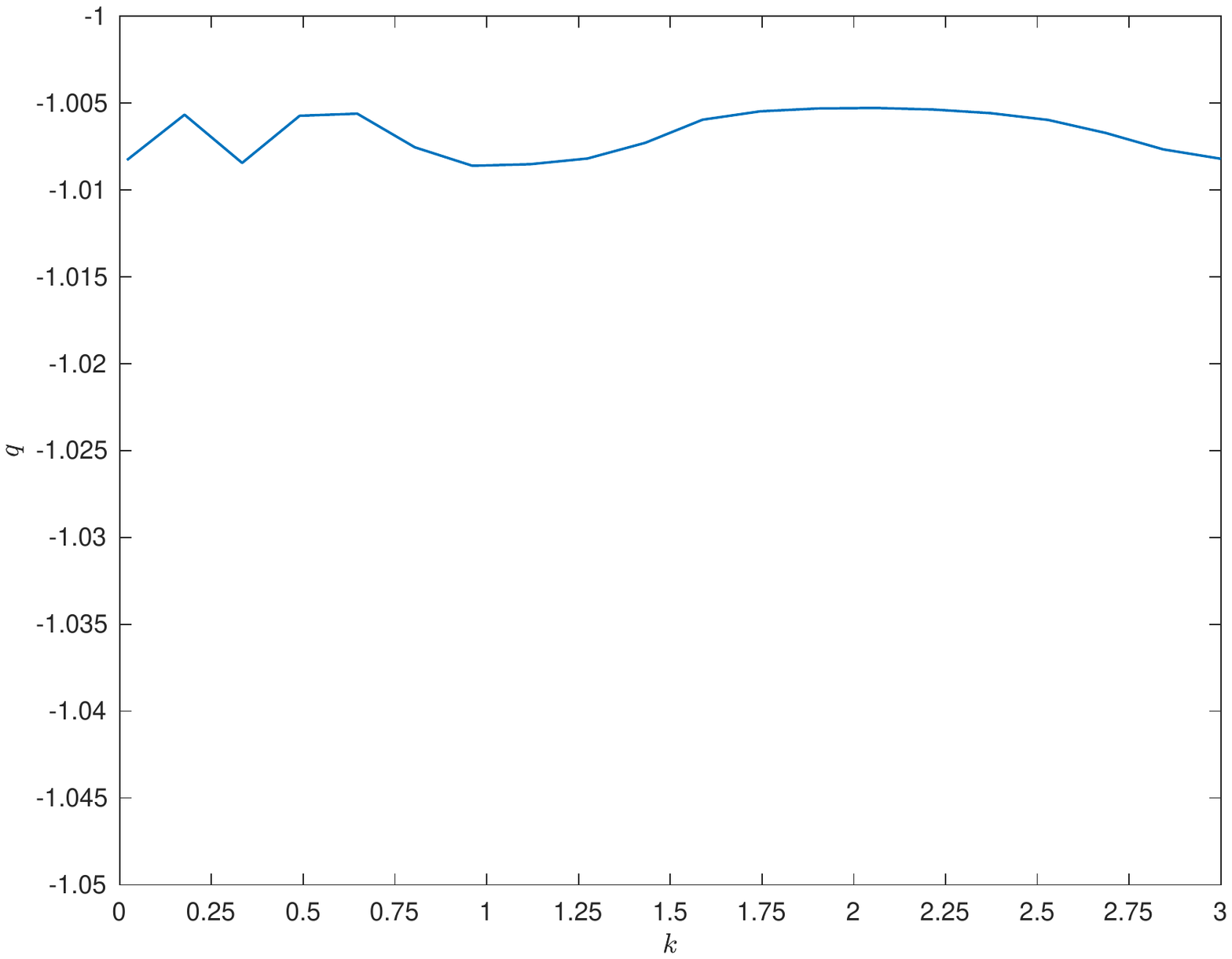}
\vspace{-0.5cm}
     \caption{optimal parameter $q$}
         \label{fig714}
          \end{subfigure}
\vspace{-0.5cm}
  \caption{Optimal $q$ parameter and Error $ER(k,q)$ value}
\vspace{-0.3cm}
\end{figure}

 \subsubsection{Calculating the Fourier Integral by FRFT}
\noindent
The method of Fourier transform \cite{li2020pricing} provides valuable and powerful tools for option pricing under a class of L\'evy processes when the characteristic function is much simpler than their density function. We compute the call option's value on the SPY ETF with the Fractional Fast Fourier (FRFT).  \\
  For $x=log(\frac{S(t)}{K})$, $\frac{F^{GV}_{call}(S(t), t)}{K}$ is the price per one dollar of the strike price. We have 
\begin{align*}
\frac{F^{GV}_{call}(S_{t},\tau)}{K} &= \frac{1}{2\pi}\int_{-\infty + iq}^{+\infty + iq}\frac{exp\left[{i z log(\frac{S(t)}{K}) - \tau (r + \varphi(z))}\right]}{iz(iz - 1)}dz\\
&= \frac{1}{2\pi}\int_{-\infty + iq}^{+\infty + iq}e^{i z x}
\frac{exp\left[{-\tau (r + \varphi(z))}\right]}{iz(iz -1)}dz 
\end{align*}
\noindent
we assume
\begin{align}
f(\xi) = \frac{exp\left[{-\tau (r + \varphi(z))}\right]}{i\xi(i\xi -1)} \quad \quad
F(x)= \frac{1}{2\pi}\int_{-\infty + iq}^{+\infty + iq}e^{i \xi x}
f(\xi)d\xi
\end{align}
\noindent
Following the notation of the Fractional Fast Fourier (FRFT) developed  in \cite{nzokem2021fitting} (Section 2 and Appendix A.1.)
\begin{align*}
F(x_{k})&= \frac{1}{2\pi}\int_{-\infty + iq}^{+\infty + iq}e^{i \xi x_{k}}f(\xi)d\xi =\frac{e^{-q x_{k}}}{2\pi}\int_{-\infty }^{+\infty}e^{i \xi x_{k}}f(\xi+iq)d\xi \\
&\approx \frac{e^{-q x_{k}}}{2\pi}\int_{-\frac{a}{2}}^{\frac{a}{2}} e^{i\xi x_{k}}f(\xi+iq)d\xi \approx \frac{\gamma}{2\pi}e^{-(q +\frac{m}{2}\beta)x_{k}} G_{k}(f(\xi_{j}+iq)e^{\pi i jn\delta},-\delta)
\end{align*}
\begin{align}
F(x_{k})\approx \frac{\gamma}{2\pi}e^{-(q +\frac{m}{2}\beta)x_{k}} G_{k}(f(\xi_{j}+iq)e^{\pi i jn\delta},-\delta)
\end{align}

\subsection{Empirical Analysis}
\noindent
 Based on parameter data from VG Model \cite{nzokem2021fitting,nzokem_2021b}:  $\hat{\mu}=0.0848$, $\hat{\delta}=-0.0577$, $\hat{\sigma}=1.0295$, $\hat{\alpha}=0.8845$, $\hat{\theta}= 0.9378$; $6\%$ risk-free interest rate is added and the Esscher transform parameter ($h^{*}=-2.6997$) was computed. The VG  option pricing will be calculated across maturity and option moneyness using Extended and Generalized Black-Scholes formulas. The closed-form Black-Scholes model \cite{hull2003options} was added to the analysis as a benchmark.
\begin{equation}
\begin{aligned}
F^{BS}_{call}(S_{t},\tau)&=S_{t}N(d_{1}) - K e^{-r\tau}N(d_{2}) \label {eq:l244}\\
d_{1}=\frac{Ln(\frac{S_{t}}{K}) + (r+\frac{1}{2}\sigma^{*2})\tau}{\sigma^{*} \sqrt{\tau}}
 \quad d_{2}&=d_{1} -\sigma^{*}\sqrt{\tau} \quad  N(x)=\frac{1}{\sqrt{2\pi}}\int_{-\infty}^x\exp\left( -\frac{t^2}{2}\ \right)\,dt 
\end{aligned}
\end{equation}
\noindent
The variance $\sigma^{*2}=0.1848$  is the annualized variance computed from the daily SPY ETF return variance in \cite{nzokem2021fitting}.\\
 Option Moneyness describes the intrinsic value of an option in its current state. It indicates whether the option would make money if exercised immediately. Option moneyness can be classified into three categories: At-The-Money (ATM) option ($k=\frac{S_{t}}{K}=1$), Out-of-The Money (OTM) option ($k=\frac{S_{t}}{K}<1$), and In-The-Money (ITM) option ($k=\frac{S_{t}}{K}>1$). On August 04, 2021, the SPY ETF market price closed at $438.98$. We compute the VG call option price on SPY ETF with the spot price ($S_{0}$) $438.98$.  The results are summarised in Table \ref{tab:1}. 
 
\begin{table}[ht]
\centering
\caption{The price of a European call on SPY ETF}
\vspace{-0.2cm}
\label{tab:1}
\resizebox{\columnwidth}{!}{%
\begin{tabular}{@{}lc|ccc|ccc|ccc|ccc|ccc|ccc@{}}
\toprule
\multirow{1}{*}{\textbf{Strike   Price}} & \multirow{1}{*}{\textbf{Moneyness}} & \multirow{1}{*}{\textbf{BSM}}& \multirow{1}{*}{\textbf{VGM(\ref{eq:l5})}} & \multirow{1}{*}{\textbf{VGM(\ref{eq:l8})}} & \multirow{1}{*}{\textbf{BSM}} & \multirow{1}{*}{\textbf{VGM(\ref{eq:l5})}} & \multirow{1}{*}{\textbf{VGM(\ref{eq:l8})}} & \multirow{1}{*}{\textbf{BSM}} & \multirow{1}{*}{\textbf{VGM(\ref{eq:l5})}} &\multirow{1}{*}{\textbf{VGM(\ref{eq:l8})}} & \multirow{1}{*}{\textbf{BSM}} & \multirow{1}{*}{\textbf{VGM(\ref{eq:l5})}} & \multirow{1}{*}{\textbf{VGM(\ref{eq:l8})}} & \multirow{1}{*}{\textbf{BSM}} & \multirow{1}{*}{\textbf{VGM(\ref{eq:l5})}} & \multirow{1}{*}{\textbf{VGM(\ref{eq:l8})}} & \multirow{1}{*}{\textbf{BSM}} & \multirow{1}{*}{\textbf{VGM(\ref{eq:l5})}} &\multirow{1}{*}{\textbf{VGM(\ref{eq:l8})}} \\ \midrule

\multicolumn{2}{c|}{\textbf{Period ( in year)}} & \multicolumn{3}{c|}{\textbf{0.0625}} & \multicolumn{3}{c|}{\textbf{0.125}} & \multicolumn{3}{c|}{\textbf{0.25}} & \multicolumn{3}{c|}{\textbf{0.5}} & \multicolumn{3}{c|}{\textbf{0.75}} & \multicolumn{3}{c}{\textbf{1}} \\ \midrule

\multirow{1}{*}{\textbf{219.49}} & \multirow{1}{*}{\textbf{2.00}} & \multirow{1}{*}{220.31} & \multirow{1}{*}{220.28} & \multirow{1}{*}{219.86} & \multirow{1}{*}{221.13} & \multirow{1}{*}{221.10} & \multirow{1}{*}{220.48} & \multirow{1}{*}{222.76} & \multirow{1}{*}{222.72} & \multirow{1}{*}{221.71} & \multirow{1}{*}{225.98} & \multirow{1}{*}{225.93} & \multirow{1}{*}{224.14} & \multirow{1}{*}{229.15} & \multirow{1}{*}{229.13} & \multirow{1}{*}{226.53} & \multirow{1}{*}{232.27} & \multirow{1}{*}{232.26} & \multirow{1}{*}{228.88} \\
\multirow{1}{*}{\textbf{225.12}} & \multirow{1}{*}{\textbf{1.95}} & \multirow{1}{*}{214.70} & \multirow{1}{*}{214.74} & \multirow{1}{*}{214.10} & \multirow{1}{*}{215.54} & \multirow{1}{*}{215.58} & \multirow{1}{*}{214.74} & \multirow{1}{*}{217.21} & \multirow{1}{*}{217.25} & \multirow{1}{*}{216.01} & \multirow{1}{*}{220.52} & \multirow{1}{*}{220.54} & \multirow{1}{*}{218.53} & \multirow{1}{*}{223.77} & \multirow{1}{*}{223.82} & \multirow{1}{*}{221.01} & \multirow{1}{*}{226.97} & \multirow{1}{*}{227.04} & \multirow{1}{*}{223.45} \\
\multirow{1}{*}{\textbf{231.04}} & \multirow{1}{*}{\textbf{1.90}} & \multirow{1}{*}{208.80} & \multirow{1}{*}{208.83} & \multirow{1}{*}{208.18} & \multirow{1}{*}{209.66} & \multirow{1}{*}{209.69} & \multirow{1}{*}{208.84} & \multirow{1}{*}{211.38} & \multirow{1}{*}{211.41} & \multirow{1}{*}{210.16} & \multirow{1}{*}{214.77} & \multirow{1}{*}{214.80} & \multirow{1}{*}{212.77} & \multirow{1}{*}{218.10} & \multirow{1}{*}{218.16} & \multirow{1}{*}{215.34} & \multirow{1}{*}{221.39} & \multirow{1}{*}{221.47} & \multirow{1}{*}{217.88} \\
\multirow{1}{*}{\textbf{237.29}} & \multirow{1}{*}{\textbf{1.85}} & \multirow{1}{*}{202.58} & \multirow{1}{*}{202.53} & \multirow{1}{*}{202.10} & \multirow{1}{*}{203.47} & \multirow{1}{*}{203.42} & \multirow{1}{*}{202.79} & \multirow{1}{*}{205.23} & \multirow{1}{*}{205.18} & \multirow{1}{*}{204.16} & \multirow{1}{*}{208.71} & \multirow{1}{*}{208.67} & \multirow{1}{*}{206.86} & \multirow{1}{*}{212.13} & \multirow{1}{*}{212.13} & \multirow{1}{*}{209.53} & \multirow{1}{*}{215.51} & \multirow{1}{*}{215.53} & \multirow{1}{*}{212.16} \\
\multirow{1}{*}{\textbf{243.88}} & \multirow{1}{*}{\textbf{1.80}} & \multirow{1}{*}{196.02} & \multirow{1}{*}{196.06} & \multirow{1}{*}{195.38} & \multirow{1}{*}{196.92} & \multirow{1}{*}{196.97} & \multirow{1}{*}{196.10} & \multirow{1}{*}{198.73} & \multirow{1}{*}{198.78} & \multirow{1}{*}{197.51} & \multirow{1}{*}{202.31} & \multirow{1}{*}{202.37} & \multirow{1}{*}{200.32} & \multirow{1}{*}{205.83} & \multirow{1}{*}{205.93} & \multirow{1}{*}{203.10} & \multirow{1}{*}{209.31} & \multirow{1}{*}{209.43} & \multirow{1}{*}{205.84} \\
\multirow{1}{*}{\textbf{250.85}} & \multirow{1}{*}{\textbf{1.75}} & \multirow{1}{*}{189.07} & \multirow{1}{*}{189.16} & \multirow{1}{*}{188.47} & \multirow{1}{*}{190.01} & \multirow{1}{*}{190.10} & \multirow{1}{*}{189.21} & \multirow{1}{*}{191.87} & \multirow{1}{*}{191.96} & \multirow{1}{*}{190.68} & \multirow{1}{*}{195.55} & \multirow{1}{*}{195.66} & \multirow{1}{*}{193.61} & \multirow{1}{*}{199.17} & \multirow{1}{*}{199.33} & \multirow{1}{*}{196.50} & \multirow{1}{*}{202.75} & \multirow{1}{*}{202.94} & \multirow{1}{*}{199.36} \\
\multirow{1}{*}{\textbf{258.22}} & \multirow{1}{*}{\textbf{1.70}} & \multirow{1}{*}{181.72} & \multirow{1}{*}{181.81} & \multirow{1}{*}{181.36} & \multirow{1}{*}{182.69} & \multirow{1}{*}{182.78} & \multirow{1}{*}{182.13} & \multirow{1}{*}{184.60} & \multirow{1}{*}{184.69} & \multirow{1}{*}{183.66} & \multirow{1}{*}{188.39} & \multirow{1}{*}{188.52} & \multirow{1}{*}{186.70} & \multirow{1}{*}{192.12} & \multirow{1}{*}{192.30} & \multirow{1}{*}{189.71} & \multirow{1}{*}{195.81} & \multirow{1}{*}{196.03} & \multirow{1}{*}{192.70} \\
\multirow{1}{*}{\textbf{266.05}} & \multirow{1}{*}{\textbf{1.65}} & \multirow{1}{*}{173.93} & \multirow{1}{*}{173.98} & \multirow{1}{*}{173.53} & \multirow{1}{*}{174.92} & \multirow{1}{*}{174.97} & \multirow{1}{*}{174.33} & \multirow{1}{*}{176.89} & \multirow{1}{*}{176.95} & \multirow{1}{*}{175.92} & \multirow{1}{*}{180.79} & \multirow{1}{*}{180.91} & \multirow{1}{*}{179.09} & \multirow{1}{*}{184.64} & \multirow{1}{*}{184.83} & \multirow{1}{*}{182.24} & \multirow{1}{*}{188.45} & \multirow{1}{*}{188.69} & \multirow{1}{*}{185.37} \\
\multirow{1}{*}{\textbf{274.36}} & \multirow{1}{*}{\textbf{1.60}} & \multirow{1}{*}{165.64} & \multirow{1}{*}{165.64} & \multirow{1}{*}{165.45} & \multirow{1}{*}{166.67} & \multirow{1}{*}{166.66} & \multirow{1}{*}{166.28} & \multirow{1}{*}{168.70} & \multirow{1}{*}{168.70} & \multirow{1}{*}{167.94} & \multirow{1}{*}{172.73} & \multirow{1}{*}{172.82} & \multirow{1}{*}{171.26} & \multirow{1}{*}{176.70} & \multirow{1}{*}{176.87} & \multirow{1}{*}{174.56} & \multirow{1}{*}{180.63} & \multirow{1}{*}{180.88} & \multirow{1}{*}{177.84} \\
\multirow{1}{*}{\textbf{283.21}} & \multirow{1}{*}{\textbf{1.55}} & \multirow{1}{*}{156.83} & \multirow{1}{*}{156.75} & \multirow{1}{*}{156.56} & \multirow{1}{*}{157.88} & \multirow{1}{*}{157.81} & \multirow{1}{*}{157.43} & \multirow{1}{*}{159.98} & \multirow{1}{*}{159.91} & \multirow{1}{*}{159.18} & \multirow{1}{*}{164.14} & \multirow{1}{*}{164.21} & \multirow{1}{*}{162.66} & \multirow{1}{*}{168.25} & \multirow{1}{*}{168.42} & \multirow{1}{*}{166.14} & \multirow{1}{*}{172.33} & \multirow{1}{*}{172.59} & \multirow{1}{*}{169.60} \\
\multirow{1}{*}{\textbf{292.65}} & \multirow{1}{*}{\textbf{1.50}} & \multirow{1}{*}{147.42} & \multirow{1}{*}{147.28} & \multirow{1}{*}{146.81} & \multirow{1}{*}{148.51} & \multirow{1}{*}{148.38} & \multirow{1}{*}{147.73} & \multirow{1}{*}{150.68} & \multirow{1}{*}{150.56} & \multirow{1}{*}{149.56} & \multirow{1}{*}{154.98} & \multirow{1}{*}{155.05} & \multirow{1}{*}{153.24} & \multirow{1}{*}{159.24} & \multirow{1}{*}{159.45} & \multirow{1}{*}{156.93} & \multirow{1}{*}{163.49} & \multirow{1}{*}{163.79} & \multirow{1}{*}{160.59} \\
\multirow{1}{*}{\textbf{302.75}} & \multirow{1}{*}{\textbf{1.45}} & \multirow{1}{*}{137.37} & \multirow{1}{*}{137.50} & \multirow{1}{*}{136.72} & \multirow{1}{*}{138.50} & \multirow{1}{*}{138.63} & \multirow{1}{*}{137.69} & \multirow{1}{*}{140.74} & \multirow{1}{*}{140.89} & \multirow{1}{*}{139.62} & \multirow{1}{*}{145.20} & \multirow{1}{*}{145.61} & \multirow{1}{*}{143.53} & \multirow{1}{*}{149.64} & \multirow{1}{*}{150.21} & \multirow{1}{*}{147.45} & \multirow{1}{*}{154.08} & \multirow{1}{*}{154.76} & \multirow{1}{*}{151.34} \\
\multirow{1}{*}{\textbf{313.56}} & \multirow{1}{*}{\textbf{1.40}} & \multirow{1}{*}{126.60} & \multirow{1}{*}{126.45} & \multirow{1}{*}{126.29} & \multirow{1}{*}{127.77} & \multirow{1}{*}{127.64} & \multirow{1}{*}{127.31} & \multirow{1}{*}{130.09} & \multirow{1}{*}{129.99} & \multirow{1}{*}{129.36} & \multirow{1}{*}{134.73} & \multirow{1}{*}{135.00} & \multirow{1}{*}{133.53} & \multirow{1}{*}{139.39} & \multirow{1}{*}{139.85} & \multirow{1}{*}{137.71} & \multirow{1}{*}{144.06} & \multirow{1}{*}{144.63} & \multirow{1}{*}{141.86} \\
\multirow{1}{*}{\textbf{325.17}} & \multirow{1}{*}{\textbf{1.35}} & \multirow{1}{*}{115.03} & \multirow{1}{*}{115.01} & \multirow{1}{*}{114.85} & \multirow{1}{*}{116.24} & \multirow{1}{*}{116.25} & \multirow{1}{*}{115.94} & \multirow{1}{*}{118.65} & \multirow{1}{*}{118.71} & \multirow{1}{*}{118.15} & \multirow{1}{*}{123.51} & \multirow{1}{*}{124.07} & \multirow{1}{*}{122.64} & \multirow{1}{*}{128.44} & \multirow{1}{*}{129.20} & \multirow{1}{*}{127.14} & \multirow{1}{*}{133.40} & \multirow{1}{*}{134.25} & \multirow{1}{*}{131.59} \\
\multirow{1}{*}{\textbf{337.68}} & \multirow{1}{*}{\textbf{1.30}} & \multirow{1}{*}{102.57} & \multirow{1}{*}{102.48} & \multirow{1}{*}{102.35} & \multirow{1}{*}{103.83} & \multirow{1}{*}{103.80} & \multirow{1}{*}{103.53} & \multirow{1}{*}{106.34} & \multirow{1}{*}{106.39} & \multirow{1}{*}{105.94} & \multirow{1}{*}{111.50} & \multirow{1}{*}{112.20} & \multirow{1}{*}{110.84} & \multirow{1}{*}{116.79} & \multirow{1}{*}{117.68} & \multirow{1}{*}{115.72} & \multirow{1}{*}{122.10} & \multirow{1}{*}{123.05} & \multirow{1}{*}{120.53} \\
\multirow{1}{*}{\textbf{351.18}} & \multirow{1}{*}{\textbf{1.25}} & \multirow{1}{*}{89.11} & \multirow{1}{*}{89.15} & \multirow{1}{*}{88.69} & \multirow{1}{*}{90.42} & \multirow{1}{*}{90.55} & \multirow{1}{*}{90.00} & \multirow{1}{*}{93.08} & \multirow{1}{*}{93.33} & \multirow{1}{*}{92.69} & \multirow{1}{*}{98.68} & \multirow{1}{*}{99.71} & \multirow{1}{*}{98.12} & \multirow{1}{*}{104.44} & \multirow{1}{*}{105.61} & \multirow{1}{*}{103.48} & \multirow{1}{*}{110.16} & \multirow{1}{*}{111.34} & \multirow{1}{*}{108.71} \\
\multirow{1}{*}{\textbf{365.82}} & \multirow{1}{*}{\textbf{1.20}} & \multirow{1}{*}{74.53} & \multirow{1}{*}{74.60} & \multirow{1}{*}{74.52} & \multirow{1}{*}{75.91} & \multirow{1}{*}{76.15} & \multirow{1}{*}{76.02} & \multirow{1}{*}{78.82} & \multirow{1}{*}{79.18} & \multirow{1}{*}{79.08} & \multirow{1}{*}{85.10} & \multirow{1}{*}{86.32} & \multirow{1}{*}{85.17} & \multirow{1}{*}{91.45} & \multirow{1}{*}{92.73} & \multirow{1}{*}{91.09} & \multirow{1}{*}{97.65} & \multirow{1}{*}{98.88} & \multirow{1}{*}{96.78} \\
\multirow{1}{*}{\textbf{381.72}} & \multirow{1}{*}{\textbf{1.15}} & \multirow{1}{*}{58.69} & \multirow{1}{*}{59.15} & \multirow{1}{*}{58.38} & \multirow{1}{*}{60.22} & \multirow{1}{*}{60.94} & \multirow{1}{*}{60.20} & \multirow{1}{*}{63.67} & \multirow{1}{*}{64.34} & \multirow{1}{*}{63.82} & \multirow{1}{*}{70.94} & \multirow{1}{*}{72.47} & \multirow{1}{*}{70.85} & \multirow{1}{*}{77.99} & \multirow{1}{*}{79.48} & \multirow{1}{*}{77.46} & \multirow{1}{*}{84.70} & \multirow{1}{*}{86.09} & \multirow{1}{*}{83.68} \\
\multirow{1}{*}{\textbf{399.07}} & \multirow{1}{*}{\textbf{1.10}} & \multirow{1}{*}{41.51} & \multirow{1}{*}{42.09} & \multirow{1}{*}{41.76} & \multirow{1}{*}{43.56} & \multirow{1}{*}{44.29} & \multirow{1}{*}{44.08} & \multirow{1}{*}{48.03} & \multirow{1}{*}{48.30} & \multirow{1}{*}{48.53} & \multirow{1}{*}{56.55} & \multirow{1}{*}{57.74} & \multirow{1}{*}{56.68} & \multirow{1}{*}{64.32} & \multirow{1}{*}{65.45} & \multirow{1}{*}{64.03} & \multirow{1}{*}{71.52} & \multirow{1}{*}{72.55} & \multirow{1}{*}{70.80} \\
\multirow{1}{*}{\textbf{418.08}} & \multirow{1}{*}{\textbf{1.05}} & \multirow{1}{*}{23.73} & \multirow{1}{*}{24.37} & \multirow{1}{*}{24.15} & \multirow{1}{*}{27.03} & \multirow{1}{*}{27.33} & \multirow{1}{*}{27.36} & \multirow{1}{*}{32.83} & \multirow{1}{*}{32.25} & \multirow{1}{*}{33.00} & \multirow{1}{*}{42.51} & \multirow{1}{*}{43.25} & \multirow{1}{*}{42.47} & \multirow{1}{*}{50.85} & \multirow{1}{*}{51.63} & \multirow{1}{*}{50.57} & \multirow{1}{*}{58.42} & \multirow{1}{*}{59.16} & \multirow{1}{*}{57.83} \\
\multirow{1}{*}{\textbf{438.98}} & \multirow{1}{*}{\textbf{1.00}} & \multirow{1}{*}{8.92} & \multirow{1}{*}{6.45} & \multirow{1}{*}{6.76} & \multirow{1}{*}{13.11} & \multirow{1}{*}{11.13} & \multirow{1}{*}{11.40} & \multirow{1}{*}{19.53} & \multirow{1}{*}{18.35} & \multirow{1}{*}{18.43} & \multirow{1}{*}{29.61} & \multirow{1}{*}{29.17} & \multirow{1}{*}{29.01} & \multirow{1}{*}{38.11} & \multirow{1}{*}{38.01} & \multirow{1}{*}{37.61} & \multirow{1}{*}{45.79} & \multirow{1}{*}{45.79} & \multirow{1}{*}{45.20} \\
\multirow{1}{*}{\textbf{462.08}} & \multirow{1}{*}{\textbf{0.95}} & \multirow{1}{*}{1.64} & \multirow{1}{*}{1.19} & \multirow{1}{*}{1.27} & \multirow{1}{*}{4.40} & \multirow{1}{*}{2.94} & \multirow{1}{*}{3.02} & \multirow{1}{*}{9.62} & \multirow{1}{*}{7.41} & \multirow{1}{*}{7.50} & \multirow{1}{*}{18.70} & \multirow{1}{*}{17.17} & \multirow{1}{*}{17.09} & \multirow{1}{*}{26.72} & \multirow{1}{*}{25.85} & \multirow{1}{*}{25.50} & \multirow{1}{*}{34.12} & \multirow{1}{*}{33.67} & \multirow{1}{*}{33.01} \\
\multirow{1}{*}{\textbf{487.76}} & \multirow{1}{*}{\textbf{0.90}} & \multirow{1}{*}{0.10} & \multirow{1}{*}{0.35} & \multirow{1}{*}{0.40} & \multirow{1}{*}{0.89} & \multirow{1}{*}{0.96} & \multirow{1}{*}{1.02} & \multirow{1}{*}{3.68} & \multirow{1}{*}{2.82} & \multirow{1}{*}{2.94} & \multirow{1}{*}{10.42} & \multirow{1}{*}{8.69} & \multirow{1}{*}{8.83} & \multirow{1}{*}{17.24} & \multirow{1}{*}{15.73} & \multirow{1}{*}{15.71} & \multirow{1}{*}{23.87} & \multirow{1}{*}{22.75} & \multirow{1}{*}{22.48} \\
\multirow{1}{*}{\textbf{516.45}} & \multirow{1}{*}{\textbf{0.85}} & \multirow{1}{*}{0.00} & \multirow{1}{*}{0.10} & \multirow{1}{*}{0.13} & \multirow{1}{*}{0.09} & \multirow{1}{*}{0.31} & \multirow{1}{*}{0.34} & \multirow{1}{*}{1.02} & \multirow{1}{*}{1.03} & \multirow{1}{*}{1.11} & \multirow{1}{*}{4.96} & \multirow{1}{*}{3.92} & \multirow{1}{*}{4.09} & \multirow{1}{*}{10.02} & \multirow{1}{*}{8.48} & \multirow{1}{*}{8.63} & \multirow{1}{*}{15.46} & \multirow{1}{*}{13.90} & \multirow{1}{*}{13.93} \\
\multirow{1}{*}{\textbf{548.73}} & \multirow{1}{*}{\textbf{0.80}} & \multirow{1}{*}{0.00} & \multirow{1}{*}{0.03} & \multirow{1}{*}{0.04} & \multirow{1}{*}{0.00} & \multirow{1}{*}{0.10} & \multirow{1}{*}{0.12} & \multirow{1}{*}{0.19} & \multirow{1}{*}{0.37} & \multirow{1}{*}{0.41} & \multirow{1}{*}{1.94} & \multirow{1}{*}{1.66} & \multirow{1}{*}{1.78} & \multirow{1}{*}{5.12} & \multirow{1}{*}{4.19} & \multirow{1}{*}{4.37} & \multirow{1}{*}{9.10} & \multirow{1}{*}{7.80} & \multirow{1}{*}{7.96} \\
\multirow{1}{*}{\textbf{585.31}} & \multirow{1}{*}{\textbf{0.75}} & \multirow{1}{*}{0.00} & \multirow{1}{*}{0.01} & \multirow{1}{*}{0.01} & \multirow{1}{*}{0.00} & \multirow{1}{*}{0.03} & \multirow{1}{*}{0.04} & \multirow{1}{*}{0.02} & \multirow{1}{*}{0.12} & \multirow{1}{*}{0.14} & \multirow{1}{*}{0.60} & \multirow{1}{*}{0.64} & \multirow{1}{*}{0.72} & \multirow{1}{*}{2.23} & \multirow{1}{*}{1.85} & \multirow{1}{*}{2.02} & \multirow{1}{*}{4.76} & \multirow{1}{*}{3.91} & \multirow{1}{*}{4.14} \\
\multirow{1}{*}{\textbf{627.11}} & \multirow{1}{*}{\textbf{0.70}} & \multirow{1}{*}{0.00} & \multirow{1}{*}{0.00} & \multirow{1}{*}{0.00} & \multirow{1}{*}{0.00} & \multirow{1}{*}{0.01} & \multirow{1}{*}{0.01} & \multirow{1}{*}{0.00} & \multirow{1}{*}{0.04} & \multirow{1}{*}{0.04} & \multirow{1}{*}{0.14} & \multirow{1}{*}{0.23} & \multirow{1}{*}{0.26} & \multirow{1}{*}{0.80} & \multirow{1}{*}{0.75} & \multirow{1}{*}{0.83} & \multirow{1}{*}{2.15} & \multirow{1}{*}{1.77} & \multirow{1}{*}{1.91} \\
\multirow{1}{*}{\textbf{675.35}} & \multirow{1}{*}{\textbf{0.65}} & \multirow{1}{*}{0.00} & \multirow{1}{*}{0.00} & \multirow{1}{*}{0.00} & \multirow{1}{*}{0.00} & \multirow{1}{*}{0.00} & \multirow{1}{*}{0.00} & \multirow{1}{*}{0.00} & \multirow{1}{*}{0.01} & \multirow{1}{*}{0.01} & \multirow{1}{*}{0.02} & \multirow{1}{*}{0.07} & \multirow{1}{*}{0.09} & \multirow{1}{*}{0.22} & \multirow{1}{*}{0.27} & \multirow{1}{*}{0.31} & \multirow{1}{*}{0.81} & \multirow{1}{*}{0.72} & \multirow{1}{*}{0.81} \\
\multirow{1}{*}{\textbf{731.63}} & \multirow{1}{*}{\textbf{0.60}} & \multirow{1}{*}{0.00} & \multirow{1}{*}{0.00} & \multirow{1}{*}{0.00} & \multirow{1}{*}{0.00} & \multirow{1}{*}{0.00} & \multirow{1}{*}{0.00} & \multirow{1}{*}{0.00} & \multirow{1}{*}{0.00} & \multirow{1}{*}{0.00} & \multirow{1}{*}{0.00} & \multirow{1}{*}{0.02} & \multirow{1}{*}{0.03} & \multirow{1}{*}{0.05} & \multirow{1}{*}{0.09} & \multirow{1}{*}{0.11} & \multirow{1}{*}{0.24} & \multirow{1}{*}{0.26} & \multirow{1}{*}{0.31} \\
\multirow{1}{*}{\textbf{798.15}} & \multirow{1}{*}{\textbf{0.55}} & \multirow{1}{*}{0.00} & \multirow{1}{*}{0.00} & \multirow{1}{*}{0.00} & \multirow{1}{*}{0.00} & \multirow{1}{*}{0.00} & \multirow{1}{*}{0.00} & \multirow{1}{*}{0.00} & \multirow{1}{*}{0.00} & \multirow{1}{*}{0.00} & \multirow{1}{*}{0.00} & \multirow{1}{*}{0.01} & \multirow{1}{*}{0.01} & \multirow{1}{*}{0.01} & \multirow{1}{*}{0.02} & \multirow{1}{*}{0.03} & \multirow{1}{*}{0.06} & \multirow{1}{*}{0.08} & \multirow{1}{*}{0.10} \\
\multirow{1}{*}{\textbf{877.96}} & \multirow{1}{*}{\textbf{0.50}} & \multirow{1}{*}{0.00} & \multirow{1}{*}{0.00} & \multirow{1}{*}{0.00} & \multirow{1}{*}{0.00} & \multirow{1}{*}{0.00} & \multirow{1}{*}{0.00} & \multirow{1}{*}{0.00} & \multirow{1}{*}{0.00} & \multirow{1}{*}{0.00} & \multirow{1}{*}{0.00} & \multirow{1}{*}{0.00} & \multirow{1}{*}{0.00} & \multirow{1}{*}{0.00} & \multirow{1}{*}{0.01} & \multirow{1}{*}{0.01} & \multirow{1}{*}{0.01} & \multirow{1}{*}{0.02} & \multirow{1}{*}{0.03} \\ \bottomrule
\end{tabular}%
}
\footnotesize (\ref{eq:l5}): 12-point rule Composite Newton-Cotes Quadrature  \    \ (\ref{eq:l8}): Fractional Fourier Transform (FRFT) 
\end{table}

\noindent
The Fractional Fourier Transform (FRFT)  algorithm performs poorly for the in-the-money (ITM) option. The FRFT underprices the VG option for in-the-money (ITM), whereas the 12-point rule Composite Newton-Cotes Quadrature produces consistent option pricing results with the Black-Scholes model. Both algorithms yield consistent results for at-the-money and out-of-the-money options. \\
\noindent
To generalize the analysis and account for a large range of option moneyness and maturity, The error (\ref{eq:l245})  was computed as the difference between VG option and BS option prices.
\begin{align}
Error(k,\tau)=\frac{F^{GV}_{call}(S_{t},\tau)}{K}  - \frac{F^{BS}_{call}(S_{t},\tau)}{K}  \quad \quad  \hbox{$(k=\frac{S_{t}}{K})$} \label {eq:l245}
\end{align}
\noindent 
Fig \ref{fig1ab} graphs the error as a function of the time to maturity ($\tau$) and the option moneyness ($k$). The spot price ($S_{t}$) is a constant, and the option moneyness depends on the strike price.\\
 \begin{figure}[ht]
\vspace{-0.5cm}
    \centering
  \begin{subfigure}[b]{0.47\linewidth}
    \includegraphics[width=\linewidth]{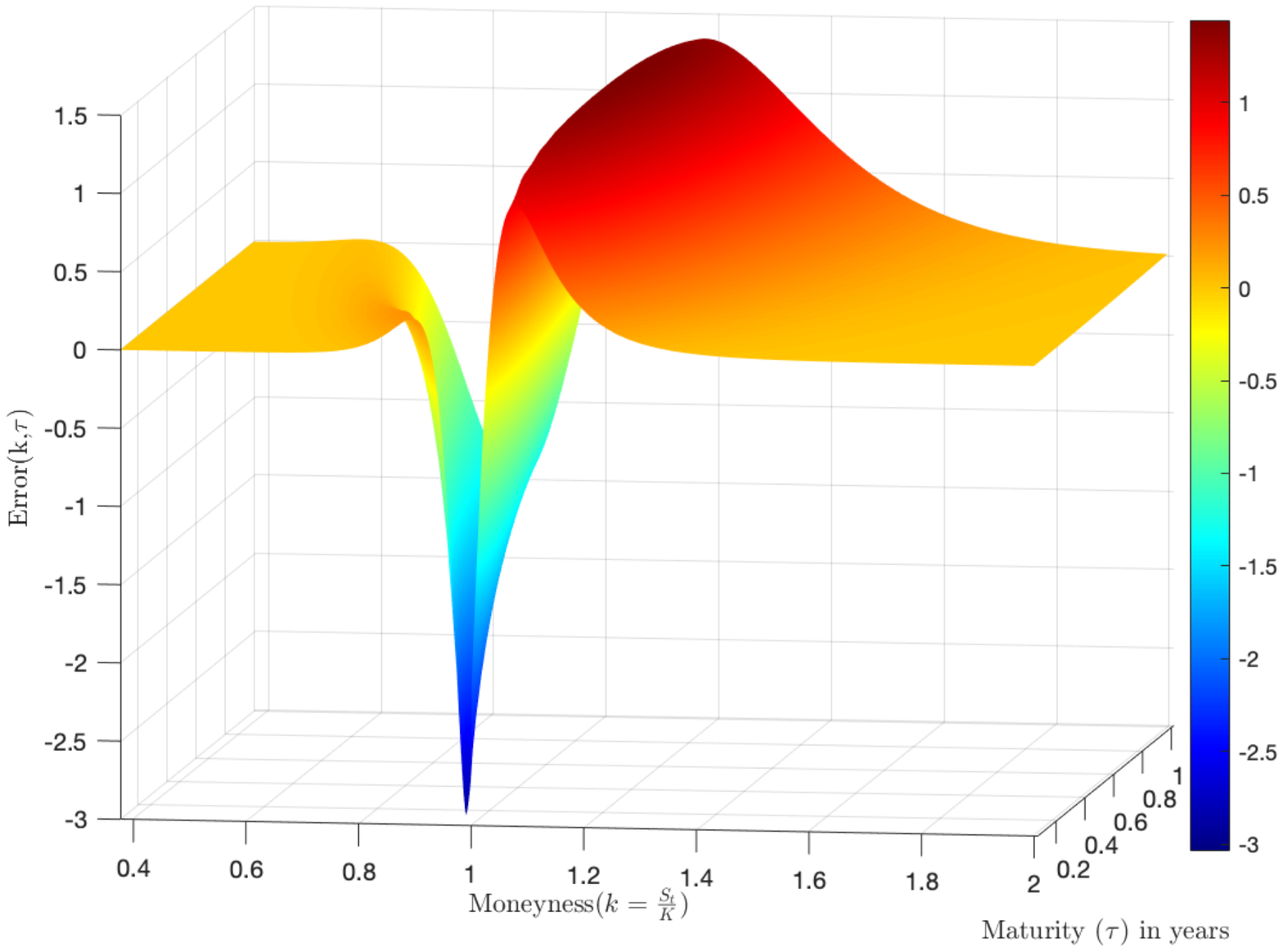}
\vspace{-0.3cm}
     \caption{Error(k,$\tau$)}
         \label{fig101a}
  \end{subfigure}
  \begin{subfigure}[b]{0.43\linewidth}
    \includegraphics[width=\linewidth]{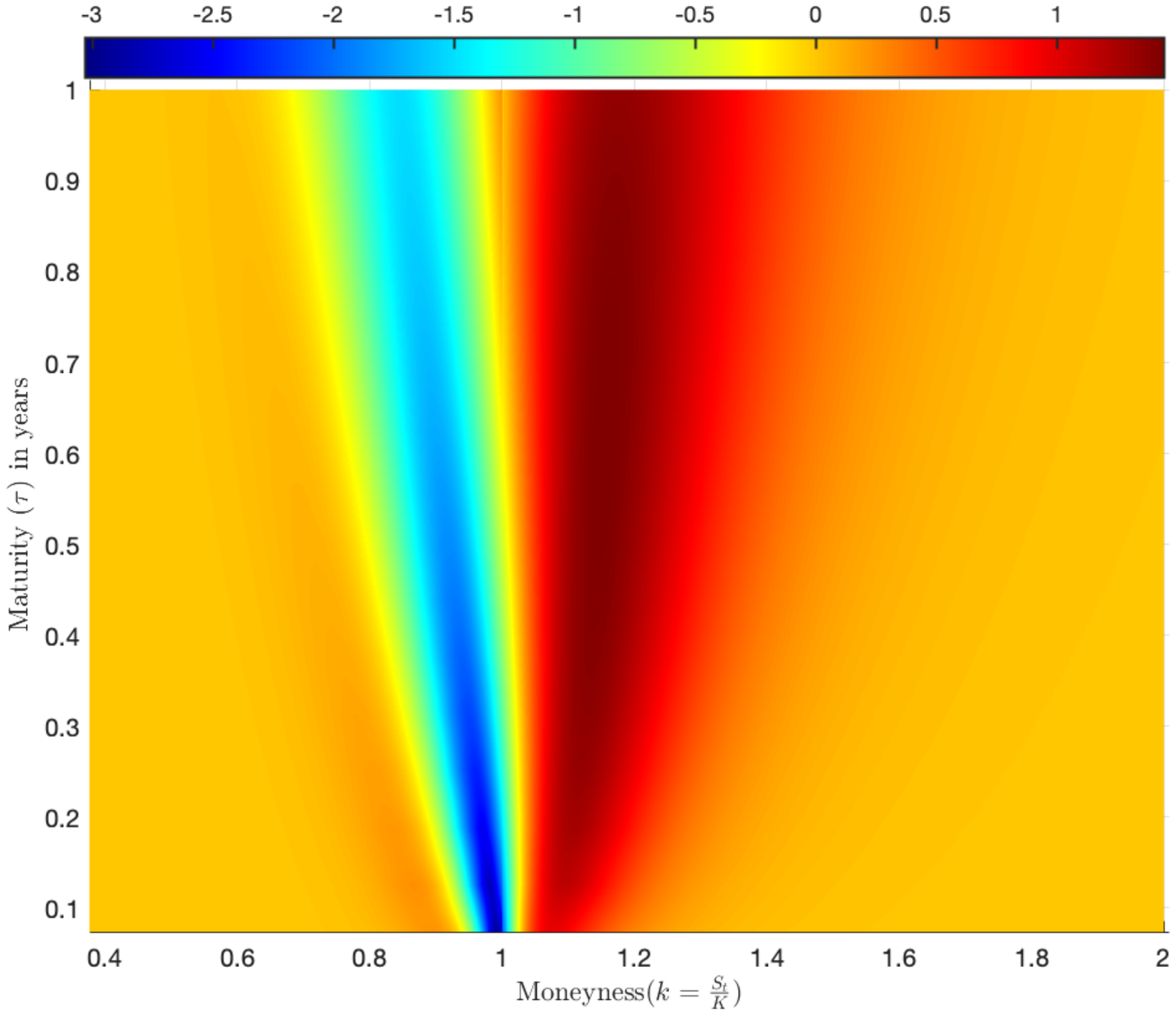}
\vspace{-0.5cm}
     \caption{Error(k,$\tau$) (top view)}
         \label{fig101b}
          \end{subfigure}
\vspace{-0.5cm}
  \caption{Combined Effects of time to Maturity ($\tau$) and Option Moneyness  ($k=\frac{S_{t}}{K}$)}
  \label{fig1ab}
\vspace{-0.6cm}
\end{figure}

\noindent 
The Black-Scholes (BS) and VG models produce different option pricing results. The Black-Scholes model is overpriced for the out-of-the-money (OTM) option (blue color in  Fig \ref{fig1ab}) and underpriced for the in-the-money (ITM) option (red color).\\
\noindent 
The results in Fig \ref{fig1ab} are consistent with \cite{mozumder2015revisiting}, where the VG pricing was performed on S\&P500 index data. The shape in Fig \ref{fig101a} looks similar to Fig {6} in \cite{mozumder2015revisiting} when the Option moneyness variable replaces the strike price.  However, the overpriced Black-Scholes model in bleu color (Fig \ref{fig1ab}) does not support the findings \cite{madan1991option} that the VG option prices are typically higher than the Black-Scholes model prices, with the percentage bias rising when the stock gets out-of-the-money (OTM). One of the limitations of these studies is that the VG model is symmetric and uses three parameters. The five-parameter VG model controls the excess kurtosis and the skewness of the daily SPY ETF return data.
\section {Conclusion} 
\noindent 
In the paper, a $\Gamma(\alpha, \theta)$ Ornstein-Uhlenbeck type process was used to build a continuous sample path of a five-parameter Variance-Gamma (VG) process ($\mu,\delta,\sigma,\alpha,\theta$): location ($\mu$), symmetric ($\delta$), volatility ($\sigma$), and shape ($\alpha$) and scale ($\theta$). The data parameters \cite{nzokem2021fitting,nzokem_2021b} were used to simulate the gamma process ($\sigma^{2}(t)$) and the continuous sample path of the subordinator process ($\sigma^{2*}(t)$). Both simulations were used as inputs to simulate the VG process's continuous sample path. The L\'evy density of the VG process was derived and shown to belong to a KoPoL family of order  $\nu=0$, intensity $\alpha$ and steepness parameters $\frac{\delta}{\sigma^2} - \sqrt{\frac{\delta^2}{\sigma^4}+\frac{2}{\theta \sigma^2}}$ and $\frac{\delta}{\sigma^2}+ \sqrt{\frac{\delta^2}{\sigma^4}+\frac{2}{\theta \sigma^2}}$. It was shown that the VG process converges asymptotically in distribution to a L\'evy process driven by a Normal distribution with mean $(\mu + \alpha \theta \delta)$ and variance $\alpha (\theta^2\delta^2 + \sigma^2\theta)$.
The existence of the Equivalent Martingale Measure (EMM)  of the five-parameter VG process was shown. The EMM preserves the structure of the five-parameter VG process with an inflated Gamma scale parameter and a constant term adjustment symmetric parameter. The extended Black-Scholes formula provides the closed form of the VG option price. The L\'evy process generated by the VG model provides the Generalized Black-Scholes Formula. The daily SPY ETF return data illustrates the computation of the European option price under the five-parameter VG process. The 12-point rule Composite Newton-Cotes Quadrature and the Fractional Fast Fourier (FRFT) algorithms were implemented to compute the European option price. It results that the FRFT yields inconsistent European option prices for in-the-money options. The Black-Scholes (BS) and VG models produce different option pricing results. The Black-Scholes model is overpriced for out-of-the-money (OTM) options and underpriced for in-the-money (ITM) options. However, for deep out-of-the-money (OTM) and deep-in-the-money (ITM) options, Black-Scholes (BS) and VG models yield almost the same option price.

\pagestyle{plain}
\addcontentsline{toc}{chapter}{References}
\bibliographystyle{plain}
\bibliography{fourierbis.bib}
\end{document}